%% file: main.tex
\documentclass[11pt]{article}
\usepackage[utf8]{inputenc}
\usepackage[american]{babel}
\usepackage[textwidth=6.5in,textheight=9in]{geometry}
\usepackage{setspace}
\usepackage{amsmath,amssymb,amsthm}
\usepackage{natbib}
\usepackage{bbm}
\usepackage[colorlinks=true,citecolor=blue]{hyperref}
\usepackage{graphicx}
\usepackage{xcolor}
\usepackage{authblk}
\newtheorem{assumption}{Assumption}
\newtheorem{proposition}{Proposition}
\newtheorem{theorem}{Theorem}

\newtheorem{lemma}{Lemma}

\theoremstyle{remark}
\newtheorem{remark}{Remark}

\theoremstyle{definition}

\begin{document}

\title{\textbf{Efficient estimation of cumulative incidence curves via data fusion with surrogates: application to integrated analysis of vaccine trial and immunobridging data}}


\author[1]{Pan Zhao}
\author[2]{Peter B. Gilbert}
\author[3]{Oliver Dukes}
\author[2]{Bo Zhang\thanks{Assistant Professor of Biostatistics, Vaccine and Infectious Disease Division, Fred Hutchinson Cancer Center. Email: {\tt bzhang3@fredhutch.org}. }}

\affil[1]{Statistical Laboratory, University of Cambridge}
\affil[2]{Vaccine and Infectious Disease Division, Fred Hutchinson Cancer Center}
\affil[3]{Department of Mathematics, Computer Science and Statistics, Ghent University}

\date{}

\maketitle
\begin{abstract}
Refined vaccine regimens containing variant-matched inserts are often authorized based on historical phase 3 efficacy trials together with immunobridging studies. Phase 3 trials are essential for establishing immune biomarkers that reliably predict disease risk or vaccine efficacy against clinical endpoints. Once such immune correlates are identified, updated vaccine regimens can be approved through immunobridging designs that compare the immunogenicity of the updated regimen to that of an already-approved vaccine. We develop methods of inference for the counterfactual cumulative incidence curve using participant-level data from both a historical vaccine efficacy trial and an immunobridging study. We further extend these methods to pathogens with multiple serotypes---such as dengue virus and influenza---by estimating cause-specific cumulative incidence curves. We describe the identification assumptions, propose efficient and multiply robust estimators, and assess their finite-sample performance through simulation studies. We then apply the proposed methods to (1) estimating the hypothetical cumulative incidence curve for a bivalent mRNA booster and (2) testing a key assumption of no controlled direct effects, using data from the COVID-19 Variant Immunologic Landscape (COVAIL) Trial, a multistage randomized clinical study evaluating the safety and immunogenicity of a second COVID-19 booster dose.
\end{abstract}

\noindent
{\bf Keywords:} causal inference, data integration, generalizability and transportability, immunobridging, mediation, semiparametric efficiency

\input{1.Introduction}

\input{2.Notation}
\input{3.Identification}
\input{4.Estimation-Inference}
\input{5.Competing-risk}

\input{6.Simulation}
\input{7.Case-Study}
\input{8.Discussion}

\section*{Acknowledgements}
We thank the participants, site staff, and investigators of the COVAIL trial sponsored by the Division of Microbiology and Infectious Diseases at the National Institute Of Allergy and Infectious Diseases of the National Institutes of Health (NIAID NIH). We thank Dr. Nadine Rouphael and Dr. Angela Branche for helpful suggestions. This work is funded by the NIAID NIH (R01AI192632 to BZ, OD, and PBG and R37AI054165 to PBG). The content is solely the responsibility of the authors and does not necessarily represent the official views of the National Institutes of Health.

{
\singlespacing
\bibliographystyle{plainnat}
\bibliography{Bibliography}
}

\clearpage
\pagenumbering{arabic}
\appendix

\begin{center}
{\large\bf Supplemental Materials to ``Efficient estimation of cumulative incidence curves via data fusion with surrogates: application to integrated analysis of vaccine trial and immunobridging data" by Pan Zhao, Peter B. Gilbert, Oliver Dukes, and Bo Zhang. }
\end{center}

\input{SM.Proofs}

\input{SM.Data}

\end{document}

%% file: 1.Introduction.tex
\section{Introduction}
\label{sec: intro}

\subsection{From immune correlates of risk and protection to immunobridging}
\label{subsec: CoR and CoP; bridging}
An immune correlate of risk (CoR) is an immune biomarker, typically measured after vaccination, that can predict the risk of a clinical outcome, such as time to symptomatic infection, in a defined cohort. In contrast, an immune correlate of protection (CoP) is an immune biomarker that predicts vaccine efficacy (VE) relative to placebo, or the relative vaccine efficacy (relative VE) between two vaccines. During the COVID-19 pandemic, a concerted national and international research effort established neutralizing antibody (nAb) titers as a credible CoP for COVID-19. This conclusion was supported by a series of harmonized phase 3 clinical trials \citep{gilbert2022immune,fong2022immune,fong2023immune,benkeser2023immune}, other vaccine efficacy trials \citep{feng2021correlates}, and meta-analyses of efficacy trials and observational studies \citep{khoury2021neutralizing}. \citet[Figure 1]{gilbert2022covid} illustrates the estimated VE as a function of nAb titers measured at an approximately peak timepoint, derived from five major placebo-controlled phase 3 trials conducted early in the pandemic when the Ancestral Wuhan-Hu-1 strain predominated \citep{gilbert2022immune}. Under causal identification assumptions described in \citet{gilbert2023controlled}, each point along the curve represents a causal contrast in COVID-19 risk between (1) receiving a specific vaccine regimen that elicits a given peak nAb titer level and (2) receiving placebo.

Identifying reliable immune correlates is a central goal in vaccinology \citep{plotkin2012nomenclature}, with myriad applications, including two key ones. First, validated immune correlates help elucidate the mechanisms of vaccine-induced protection and guide the development of new vaccines. Second, although randomized controlled trials remain the gold standard, they are not always feasible or timely. From a regulatory perspective, immune correlates provide a pathway for authorizing or approving (1) existing vaccines in different populations and (2) updated or reformulated vaccines, such as those with variant-matching inserts, based on immunobridging studies rather than full-scale efficacy trials. Once an immune marker, such as nAb titer, is validated as a CoP for a specified context of use, the same vaccine or an updated vaccine may be approved through immunobridging studies. In these studies, post-vaccination immune markers, possibly against a new circulating strain, elicited by the investigational vaccine in the target population are compared to those elicited by a previously approved vaccine. Compared to conducting a new phase 3 efficacy trial, immunobridging offers a faster and more cost-effective regulatory pathway.

In this article, we study how to identify and efficiently estimate the counterfactual cumulative incidence of an updated vaccine regimen, as well as the relative vaccine efficacy of an investigational vaccine compared to an approved one, in a target population. In principle, this question could be addressed---with appropriate uncertainty quantification---by conducting a double-blind, randomized, active-controlled trial in the target population. However, such trials may not be practical. Instead, our goal is to answer this question using two sources of data:
(i) a historical clinical trial dataset ($\mathcal{D}_h$) containing participant-level information on baseline covariates, immune marker levels, and a time-to-event endpoint (e.g., time to virologically confirmed symptomatic infection); and
(ii) an immunobridging study dataset ($\mathcal{D}_b$) containing participant-level data on baseline covariates and immune marker levels, possibly against a new variant, in the target population.

\subsection{Generalizing the risk and efficacy with post-randomization events}
\label{subsec: intro generalize ITT}
The statistical problem of estimating the counterfactual cumulative incidence curve using historical clinical trial data and immunobridging study data falls within the growing literature on the generalizability and transportability of causal effects \citep{cole2010generalizing,dahabreh2022generalizing,colnet2024causal,degtiar2023review,graham2024towards}. Unlike conventional generalizability problems, which primarily address differences in baseline covariate distributions between populations, our setting introduces an additional layer of complexity: we observe data on a post-randomization variable---namely, the immune marker level at a specific study visit post-vaccination---in both the historical trial and the immunobridging datasets.

At least four sources of heterogeneity must be accounted for when estimating cumulative incidence or relative VE based on one or more post-randomization biomarkers. First, vaccine-induced immune responses may vary across individuals. Even among participants with similar baseline characteristics, the same vaccine platform typically elicits heterogeneous immune marker levels due to factors such as age, sex, host genetics, microbiome composition, and prior infections or vaccinations \citep{huang2022baseline}. Second, baseline characteristics may correlate with clinical risk, even when individuals have comparable post-vaccination immune marker levels. For example, antigen-na\"ive versus non-na\"ive individuals may have different risks of symptomatic infection despite similar post-vaccination immune marker levels. Third, the target population, from which the immunobridging dataset ($\mathcal{D}_b$) is drawn, may differ from the historical trial population in the distribution of baseline covariates. Fourth, the distribution of circulating pathogen strains---or, more broadly, the external force of infection---may vary across populations or over time.

A rigorous immunobridging framework must account for all these sources of heterogeneity. Additionally, it must quantify statistical uncertainty arising from sampling variability in both the historical trial dataset ($\mathcal{D}_h$) and the immunobridging dataset ($\mathcal{D}_b$).

\subsection{Related work, our contributions, and three immunobridging tasks}
\label{subsec: intro literature and contribution}
We briefly review prior work on generalizing and transporting causal effects, focusing specifically on methods that address a post-randomization event. \citet{rudolph2016robust} and \citet{he2024generalizing} study generalizability under an instrumental variable framework, considering a binary instrumental variable (treatment assignment) and a binary treatment received in the context of a randomized trial with noncompliance. \citet{gilbert2022novel} propose mediation estimators that evaluate the extent to which differences in effects from various sources can be explained by compositional factors and the mechanisms generating mediating or intermediate variables, using mediator and outcome data obtained from multiple external sources.

Prior work most closely related to this article includes \citet{gilbert2016predicting}, \citet{athey2025surrogate}, and \citet{gilbert2024surrogateendpointbasedprovisional}. \citet{gilbert2016predicting} introduced a transport formula and corresponding plug-in estimator to estimate clinical endpoint risk and vaccine efficacy in a new population, using a combination of historical clinical trial data and immunobridging data. This approach was later applied in \citet{gilbert2019bridging} to predict the vaccine efficacy of a tetravalent dengue vaccine when immunobridging from children and adolescents to adults.

Both \citet{athey2025surrogate} and \citet{gilbert2024surrogateendpointbasedprovisional} study settings in which baseline covariates and post-randomization surrogate endpoints are observed in both an experimental sample and an observational sample, but the long-term clinical outcome is observed only in the observational sample. Their goal is to estimate a causal effect (e.g., the average treatment effect in \citet{athey2025surrogate} and a more general causal contrast in \citet{gilbert2024surrogateendpointbasedprovisional}) for the population underlying the experimental sample—without observing the long-term outcome in that sample. A key distinction between the two is that \citet{athey2025surrogate} assume that treatment assignment is not observed in the observational sample, whereas \citet{gilbert2024surrogateendpointbasedprovisional} allow the treatment indicator to be observed in both datasets. The latter setup matches ours in the current work if the observational sample is taken to be the vaccine arm of the historical phase 3 trial that supported the vaccine’s approval, and the experimental sample is taken to be both the investigational and approved vaccine arms of the immunobridging trial.

This work extends the frameworks of \citet{athey2025surrogate} and \citet{gilbert2024surrogateendpointbasedprovisional} to time-to-event (survival) endpoints subject to informative right-censoring, with treatment indicators always observed. Our primary estimands are the counterfactual cumulative incidence curve and the relative vaccine efficacy curve, each indexed by study time. This extension is practically important, as survival outcomes---such as time to virologically confirmed symptomatic infection---remain the primary endpoints in most vaccine efficacy trials. We further consider cause-specific cumulative incidence curves to address scenarios in which multiple pathogen serotypes co-circulate. 

The methods developed in this article address three distinct immunobridging tasks.
\begin{description}
       \item[Task I:] A vaccine is approved based on a phase 3 clinical trial in one population. During the trial, only one pathogen strain was circulating. Researchers then aim to estimate the effectiveness of an investigational vaccine---built on the same platform but potentially eliciting stronger immune responses---against the same pathogen strain in a target population (which may differ from the historical trial population), using the phase 3 trial data together with an additional immunogenicity study conducted in the target population.
       \item[Task II:] A vaccine is approved following positive phase 3 trial results in a given population against a single viral strain (e.g., the ancestral SARS-CoV-2 strain) that was circulating. Researchers then seek to evaluate the effectiveness of a new variant-matching vaccine in the target population where a new single strain (e.g., Omicron BA.4/5) is circulating, using the historical phase 3 trial data together with an additional immunogenicity trial that measures immune responses to the new variant and is conducted in the target population. 
       
       \item[Task III:] Many pathogens, such as dengue and influenza, circulate with multiple serotypes concurrently. Researchers seek to estimate cause-specific and overall counterfactual cumulative incidence curves for a new investigational vaccine in a target population, using data from a historical VE trial and an immunobridging trial. In this setting, the same set of strains circulates in the historical VE trial and the immunobridging trial. The historical VE trial records serotype-specific time-to-event endpoints and immune responses to the approved vaccine across all co-circulating serotypes, whereas the immunobridging trial measures immune responses to both the approved and investigational vaccines across the same set of co-circulating serotypes.
\end{description}

Figure \ref{fig:scheme} summarizes the study setting and key design features of the three tasks.

\begin{figure}[ht]
    \centering
\includegraphics[width=0.99\linewidth, trim=0cm 4.2cm 0cm 5cm, clip]{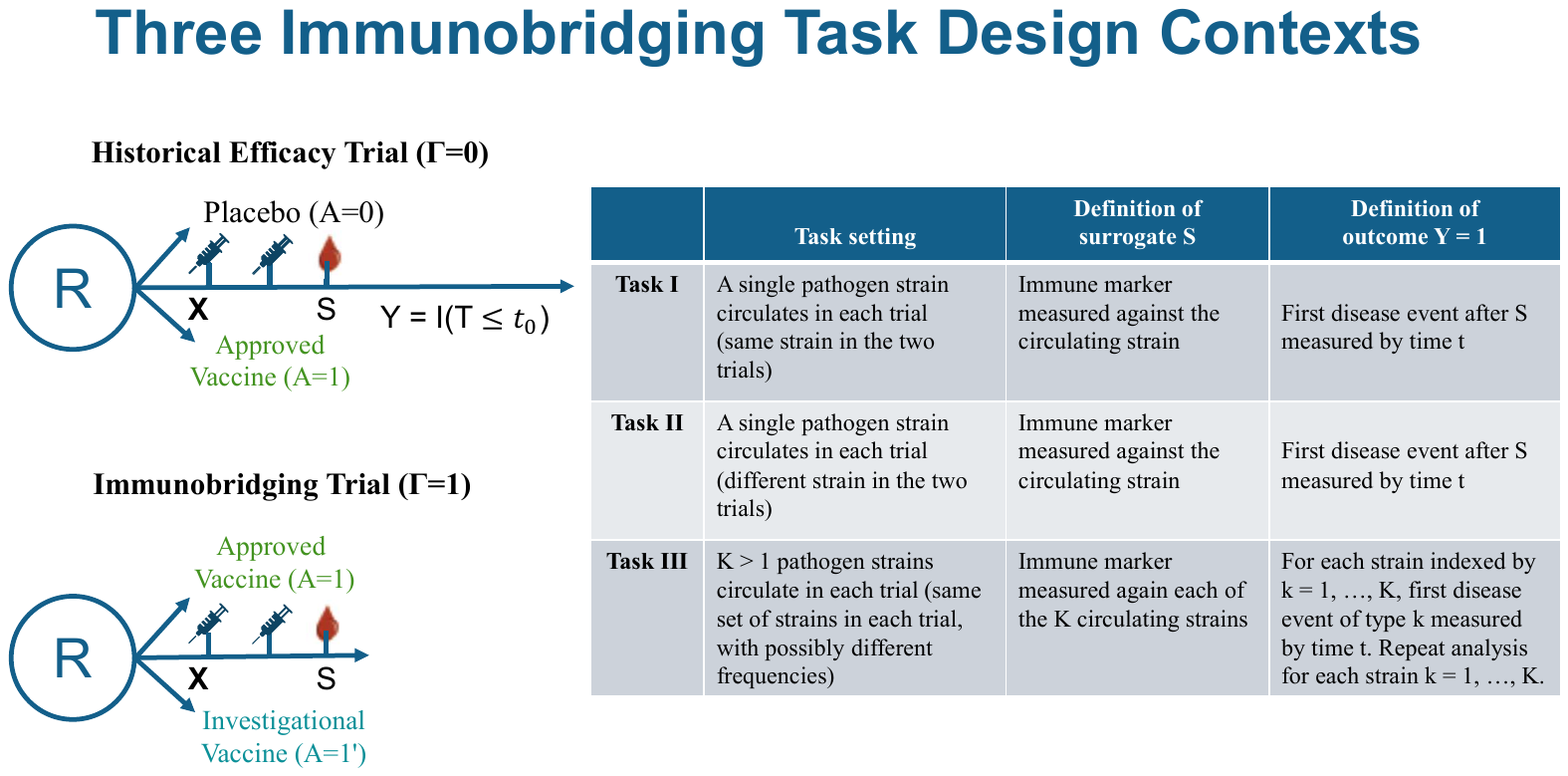}
    \caption{Three Immunobridging Task Design Contexts}
    \label{fig:scheme}
\end{figure}

\subsection{COVID-19 Variant Immunologic Landscape (COVAIL) trial}
\label{subsec: intro COVAIL}
The methods developed in this article can be applied to a broad range of scientific questions in immunology and vaccinology. Here, we focus on the COVID-19 Variant Immunologic Landscape (COVAIL) trial, a randomized clinical study conducted across 22 sites in the United States. The COVAIL trial used a multistage design in which participants were sequentially enrolled and randomized within each stage to receive one of several variant vaccines administered as a second booster dose. The primary objective of the trial is to evaluate the immunogenicity and safety of variant vaccines from three platforms: Moderna mRNA (Stage 1), Pfizer-BioNTech mRNA (Stages 2 and 4), and Sanofi recombinant protein (Stage 3) \citep{branche2023comparison,branche2023immunogenicity}. Recent studies have shown that post-vaccination neutralizing antibody (nAb) levels are inversely correlated with time to COVID-19, with stronger correlations among participants with prior SARS-CoV-2 infection \citep{fong2025neutralizing,zhang2025neutralizing}. The case study in this article focuses on Stages 2 and 4.

\subsection{Organization}
\label{subsec: intro organization}
The remainder of the paper is organized as follows. Section \ref{sec: statistical framework} introduces the statistical framework and notation. Section \ref{sec: identification} outlines the identification assumptions and presents the main identification results. Section \ref{sec: estimation and inference} proposes the estimators and establishes their theoretical properties. Section \ref{sec: extension to competing risk} extends the methods to settings with multiple serotype-specific time-to-event endpoints. Section \ref{sec: simulation} reports results from simulation studies, and Section \ref{sec: case study COVAIL} applies the proposed methods to data from the COVAIL trial. Section \ref{sec: discussion} concludes with a discussion. Proofs and additional technical details are provided in the Supplemental Materials. An \textsf{R} implementation of the proposed methods is publicly available at \url{https://github.com/panzhaooo/data-fusion-surrogates-survival-analysis}.

%% file: 2.Notation.tex
\section{Statistical Framework}
\label{sec: statistical framework}
We present the statistical framework for Task I in Sections \ref{subsec: data} and \ref{subsec: estimand}. We then extend this framework to Task II in Section \ref{subsec: notation task II}. Finally, we develop the framework for Task III, which accommodates the co-circulation of multiple serotypes, in Section \ref{sec: extension to competing risk}.

\subsection{Data structure and potential outcomes}
\label{subsec: data}

Consider a historical randomized, placebo-controlled clinical trial dataset $\mathcal{D}_{h} = \{(\boldsymbol{X}_i, A_i, S_i, T_i): i = 1, \ldots, n_h\}$, where  $\boldsymbol{X}_i \in \mathcal{X}$ denotes a vector of baseline covariates (e.g., age, baseline immunity factors, and other risk factors), $A_i \in \{0, 1\}$ denotes vaccine assignment ($A_i = 1$ for vaccine; $A_i = 0$ for placebo), $S_i \in \mathcal{S}$ is an immune marker measured at a prespecified post-vaccination visit (e.g., $15$ days after vaccination), and $T_i$ represents the time from when the immune marker $S$ is measured to the clinical endpoint of interest (e.g., COVID-19). For simplicity, we assume that all participants attend the visit at which the immune marker is measured and that no participants experience the clinical endpoint prior to that visit.

In addition to $\mathcal{D}_{h}$, researchers also have access to an immunobridging dataset $\mathcal{D}_{b} = \{(\boldsymbol{X}_i, A_i, S_{i}) : i = n_h + 1, \ldots, n_h + n_b\}$, where $A_i \in \{1, 1'\}$ indicates randomized assignment to the approved vaccine regimen ($A_i = 1$), identical to that in the historical efficacy trial, or an investigational vaccine regimen ($A_i = 1'$), and $S_i$ denotes the same immune marker measured at a prespecified post-vaccination study visit. Unlike the historical trial dataset $\mathcal{D}_h$, the immunobridging dataset $\mathcal{D}_b$ does not include time-to-event outcomes. In addition, $\mathcal{D}_b$ typically has a much smaller sample size than $\mathcal{D}_h$ (e.g., several hundred compared with tens of thousands). Let $\mathcal{P}_h$ and $\mathcal{P}_b$ denote the marginal distributions of baseline covariates $\boldsymbol{X}$ in the historical trial and immunobridging populations, respectively. To track study membership, we define a binary indicator $\Gamma_i$ such that $\Gamma_i = 0$ if participant $i$ is from the historical trial ($\mathcal{D}_h$) and $\Gamma_i = 1$ if from the immunobridging study ($\mathcal{D}_b$).

We adopt the Neyman–Rubin potential outcomes (PO) framework to formalize our problem \citep{neyman1923application,rubin1974estimating}. Let $S(A = a)$ denote the potential immune marker level measured at a specific post-vaccination visit had a participant received vaccine regimen $A = a$. In defining $S(A = a)$, we assume Rubin's Stable Unit Treatment Value Assumption (SUTVA) \citep{rubin1980discussion}, so that a participant's post-vaccination marker level depends only on their own treatment assignment and there is a single version of each treatment, ensuring that $S(A = a)$ is well-defined.

Both $\mathcal{D}_h$ and $\mathcal{D}_b$ arise from randomized trials, so the treatment assignment mechanism is known in each dataset. Each participant has associated potential outcomes $T(S = s, A = a)$, denoting the counterfactual time-to-event outcome (e.g., time to ancestral strain COVID-19) if the participant were assigned treatment $A = a$ and their post-vaccination immune marker $S$ were set to level $s$. It is important to distinguish this joint-intervention potential outcome, $T(s, a)$, from the potential outcome $T(S(a), a)$---or simply $T(a)$---which represents the time-to-event outcome under assignment to $A = a$ with $S$ taking its natural value under $A = a$, that is, $S(a)$.

\subsection{Estimands}
\label{subsec: estimand}
We first consider the quantity $\mathbb{E}[y(T(a,s)) \mid \boldsymbol{X}, \Gamma = 0]$, where $T(a,s)$ denotes the counterfactual time-to-event outcome under vaccine assignment $A=a$ and immune marker level $S=s$, and $y(\cdot)$ is a real-valued transformation. When $y(\cdot)$ is chosen as the indicator function $I\{~\cdot \le t_0\}$ for a prespecified time point $t_0$, the quantity $\mathbb{E}[y(T(a,s)) \mid \boldsymbol{X}, \Gamma = 0]$ reduces to the conditional controlled risk in the historical phase 3 trial, 
$P\{T(a, s) \leq t_0 \mid \boldsymbol{X}, \Gamma = 0\}$, as defined in \citet{gilbert2023controlled}.
This conditional risk parameter represents the probability of experiencing the clinical endpoint by time $t_0$ under a hypothetical joint intervention that simultaneously sets $A = a$ and $S = s$, conditional on baseline covariates.

In addition, we define the quantity $R(a; \boldsymbol{X}, \Gamma = 0) = \mathbb{E}[y(T(a)) \mid \boldsymbol{X}, \Gamma = 0]$, which, unlike $\mathbb{E}[y(T(a, s)) \mid \boldsymbol{X}, \Gamma = 0]$, is not a function of the post-vaccination marker $S = s$. Averaging $R(a; \boldsymbol{X}, \Gamma = 0)$ over the historical trial population yields the marginalized potential outcome: 
\begin{equation*}
    R(a; \Gamma = 0) = \mathbb{E}_{\mathcal{P}_h}\left[\mathbb{E}\left[y(T(a)) \mid \boldsymbol{X}, \Gamma = 0\right]\right],~a \in \{0, 1\}.
\end{equation*}
\noindent By choosing $y(\cdot) = I\{~\cdot \leq t_0\}$, $R(a; \boldsymbol{X}, \Gamma = 0)$ and $R(a; \Gamma = 0)$ reduce to the conditional and marginalized risk parameters $P\{T(a) \leq t_0 \mid \boldsymbol{X}, \Gamma = 0\}$ and $\mathbb{E}_{\mathcal{P}_h}\left[P\{T(a) \leq t_0 \mid \boldsymbol{X}, \Gamma = 0\}\right]$, respectively. Both $R(a; \boldsymbol{X}, \Gamma = 0)$ and $R(a; \Gamma = 0)$ are identifiable and can be estimated without bias, with appropriate uncertainty quantification, from the historical randomized controlled trial, under standard causal identification assumptions and suitable assumptions on the censoring mechanism (see, e.g., \citet{westling2024inference}).

Next, consider a hypothetical phase 3 trial conducted in the target population, evaluating an investigational vaccine targeting the same pathogen strain as in the historical VE trial. We define the conditional and marginalized potential time-to-event outcomes in this new setting as $R(a; \boldsymbol{X}, \Gamma = 1) = \mathbb{E}[y(T(a)) \mid \boldsymbol{X}, \Gamma = 1]$ and 
$R(a; \Gamma = 1) = \mathbb{E}_{\mathcal{P}_b}\left[\mathbb{E}\left[y(T(a)) \mid \boldsymbol{X}, \Gamma = 1\right]\right],~a \in \{1, 1'\}$, respectively. Our target parameter for Task I is the marginalized potential time-to-event outcome in the target population, $R(a; \Gamma = 1)$, with a particular focus on the marginalized risk $\mathbb{E}_{\mathcal{P}_b}\left[P\{T(a) \leq t_0 \mid \boldsymbol{X}, \Gamma = 1\}\right]$.

Importantly, differences in the amount of pathogen circulating and exposing people, or the immune response elicited by the approved versus investigational vaccines, imply that conditional mean exchangeability cannot generally be expected to hold for $R(A = 1; \boldsymbol{X}, \Gamma = 0)$ and $R(A = 1'; \boldsymbol{X}, \Gamma = 1)$, that is, 
$R(A = 1; \boldsymbol{X}, \Gamma = 0) \neq R(A = 1'; \boldsymbol{X}, \Gamma = 1).$ This breakdown of exchangeability mirrors challenges encountered in generalizing intention-to-treat effects in the presence of treatment noncompliance, where the treatment received differs from the treatment assigned; see \citet{he2024generalizing} for a detailed discussion.

Finally, we define the following relative vaccine efficacy comparing the investigational vaccine $A = 1'$ to the approved vaccine $A = 1$ in the target population:
\begin{equation*}
    \text{relVE}(1, 1'; t_0, \Gamma = 1) = 1 - \frac{R(A = 1'; \Gamma = 1)}{R(A = 1; \Gamma = 1)},
\end{equation*}
with $y(\cdot) = I\{~\cdot \leq t_0\}$ in $R(a;\Gamma = 1),~a \in \{1, 1'\}$.

\subsection{Extension to immunobridging with a variant-matching vaccine}
\label{subsec: notation task II}
In Task II, researchers seek to evaluate the effectiveness of an investigational vaccine updated to match the currently circulating strain. Let $\mathcal{D}_{h} = \{(\boldsymbol{X}_i, A_i, S_i, T_i): i = 1, \ldots, n_h\}$ be the same historical VE trial dataset described in Section \ref{subsec: data}. To address Task II, the immunobridging study now measures the immune marker against the circulating strain. Let $\mathcal{D}_{b} = \{(\boldsymbol{X}_i, A_i, S'_{i}) : i = n_h + 1, \ldots, n_h + n_b\}$ denote the corresponding immunobridging dataset, where $S'_i$ represents the same type of post-vaccination immune marker as $S_i$ in $\mathcal{D}_h$, but measured against the circulating strain. In the COVID-19 setting, $S_i$ may correspond to the approximate peak neutralizing antibody level against the ancestral SARS-CoV-2 strain, whereas $S'_i$ denotes the approximate peak nAb level against the Omicron BA.4/BA.5 strain.

Moreover, we define $T'(S' = s, A = a)$ as the counterfactual time-to-event outcome for the clinical endpoint caused by the currently circulating strain (e.g., Omicron BA.4/BA.5 COVID-19), had a participant received vaccine $A = a$ and their variant-specific immune marker level $S'$ set to $s$. This potential outcome $T'(S' = s, A = a)$ is defined for the immunobridging trial $\Gamma = 1$. We also define the counterfactual time-to-event outcome $T'(a)$, $R'(a; \boldsymbol{X}, \Gamma = 1) = \mathbb{E}[y(T'(a)) \mid \boldsymbol{X}, \Gamma = 1]$, and 
$R'(a; \Gamma = 1) = \mathbb{E}_{\mathcal{P}_b}\left[\mathbb{E}\left[y(T'(a)) \mid \boldsymbol{X}, \Gamma = 1\right]\right],~a \in \{1, 1'\}$, all defined against the currently circulating strain in the immunobridging trial. These are the target parameters for Task II.

%% file: 3.Identification.tex
\section{Identification for Tasks I and II}
\label{sec: identification}

\subsection{Task I}
\label{subsec: identification assumption Task I}

We first discuss assumptions needed to identify the target parameter $R(a; \Gamma = 1),~a \in \{1, 1'\},$ for Task I. Assumptions \ref{ass: standard assumpion in Dh}(i)--\ref{ass: standard assumpion in Dh}(iv) are standard assumptions in many causal inference problems. We assume Assumptions \ref{ass: standard assumpion in Dh}(i)--\ref{ass: standard assumpion in Dh}(iv) hold for the historical phase 3 clinical trial and a hypothetical trial comparing the investigational vaccine ($A = 1'$) versus the approved vaccine ($A = 1$).

\begin{assumption}\label{ass: standard assumpion in Dh}
{\rm (i. Consistency)} $S = S(A)$, $T = T(A, S)$ and $T = T(A) = T(A, S(A))$; {\rm (ii. Randomization)} $A \perp T(a,s) \mid \boldsymbol{X}$; {\rm (iii. Strong sequential ignorability)} $S \perp T(a,s) \mid \boldsymbol{X} = \boldsymbol{x}$; {\rm (iv. Positivity)} $P(A = a \mid \boldsymbol{X} = \boldsymbol{x}) > 0$ and $f(s \mid \boldsymbol{x}, a) > 0$, for all $a \in \{0, 1, 1'\}, \boldsymbol{x} \in \mathcal{X}, s \in \mathcal{S}$.
\end{assumption}

Assumption \ref{ass: standard assumpion in Dh}(iii) effectively assumes that 
$S$ is randomized within strata defined by vaccine arm and baseline covariates.

Assumption \ref{ass: overlap} is often required in the generalizability and transportability literature.

\begin{assumption}[Overlap]\label{ass: overlap}
The support of baseline covariates $\boldsymbol{X}$ in the target trial population is a subset of that in the historical trial population.
\end{assumption}

In practice, when the support of the target population lies outside that of the historical trial population, one pragmatic strategy is to restrict the analysis to the region of sufficient covariate overlap. This approach helps mitigate extrapolation beyond the support of the available data and ensures more credible inference; see, e.g., \citet[Section 6.2]{he2024generalizing} for an illustrative example.

Assumption \ref{ass: conditional exchangeability task I} is the key ``conditional exchangeability assumption" required to identify the target parameter $R(A = 1; \Gamma = 1)$. It extends the conventional exchangeability assumption \citep{dahabreh2022generalizing} to a hypothetical joint intervention that simultaneously sets $A = a$ and $S = s$.

\begin{assumption}[Conditional exchangeability of the approved vaccine in two trial settings]\label{ass: conditional exchangeability task I} Let $y(\cdot)$ denote a real-valued transformation. Conditional exchangeability is said to hold for the approved vaccine $A = 1$ in the historical trial context and the immunobridging trial context if
\begin{equation*}
\mathbb{E}\left[y(T(A = 1, S = s)) \mid \mathbf{X} = \boldsymbol{x}, \Gamma = 0\right] = \mathbb{E}\left[y(T(A = 1, S = s)) \mid  \mathbf{X} = \boldsymbol{x}, \Gamma = 1\right],
\end{equation*}
for $\boldsymbol{x} \in \mathcal{X}, s \in \mathcal{S}$ where $f(s \mid \boldsymbol{x}, 1) > 0$.
\end{assumption}

For infectious diseases, variability in the ``background intensity" of exposure and transmission over geography and calendar time poses a challenge to the conditional exchangeability assumption, which may be violated if the two trials differ in either factor. A concept addressing this challenge defines a background intensity covariate with participants' values assigned using an external surveillance data base (common to the two trials) that reports disease incidence by geography and calendar date (see, e.g., \citet{zhang2025neutralizing}). Consequently, Assumption~\ref{ass: conditional exchangeability task II} is more plausible when $\mathbf{X}$ includes variables that predict participants' exposure risk as well as relevant ecological factors.

Finally, Assumption \ref{ass: same joint intervention A = 1 & 1' Task I} is the other key assumption to identify $R(A = 1'; \Gamma = 1)$. It states an equivalence between the joint intervention $(A = 1, S = s)$ and $(A = 1', S = s)$ in the immunobridging-trial context. 

\begin{assumption}[No controlled direct effects comparing investigational and approved vaccines in the immunobridging trial]\label{ass: same joint intervention A = 1 & 1' Task I}
The no controlled direct effects assumption is said to hold for the approved vaccine $A = 1$, the investigational vaccine $A = 1'$, and the immune marker $S$ if the following holds: 
\begin{equation*}
\mathbb{E}\left[y(T(A = 1, S = s)) \mid \mathbf{X} = \boldsymbol{x}, \Gamma = 1\right] = \mathbb{E}\left[y(T(A = 1', S = s)) \mid  \mathbf{X} = \boldsymbol{x}, \Gamma = 1\right],
\end{equation*}
for $\boldsymbol{x} \in \mathcal{X}, s \in \mathcal{S}$ where $f(s \mid \boldsymbol{x}, a) > 0$, $a \in \{1, 1'\}$.
\end{assumption}

Under Assumption \ref{ass: standard assumpion in Dh}, Assumption \ref{ass: same joint intervention A = 1 & 1' Task I} then implies
\[
\mathbb{E}\left[y(T) \mid A = 1, S = s, \mathbf{X} = \boldsymbol{x}, \Gamma = 1\right] = \mathbb{E}\left[y(T) \mid A = 1', S = s, \mathbf{X} = \boldsymbol{x}, \Gamma = 1\right],
\]
which is a version of the surrogacy assumption described by \citet{athey2025surrogate} and \citet{gilbert2024surrogateendpointbasedprovisional}. We formulate this assumption using potential outcomes rather than observed data, because the former is more interpretable in the context of our case study.

Assumption \ref{ass: same joint intervention A = 1 & 1' Task I} will be violated if the investigational vaccine influences the clinical outcome through immune mechanisms not captured by $S$, and these mechanisms differ between the investigational and approved vaccines. 

\begin{remark}[Relaxing Assumptions \ref{ass: conditional exchangeability task I} and \ref{ass: same joint intervention A = 1 & 1' Task I}]\label{remark: relax Assumptions 3 & 4}
    Because the clinical endpoint is not observed in the immunobridging trial, neither Assumption \ref{ass: conditional exchangeability task I} nor Assumption \ref{ass: same joint intervention A = 1 & 1' Task I} yields testable implications. We propose two strategies to relax these assumptions. First, one can adopt a bounding perspective: instead of requiring equality, posit a directional inequality informed by domain knowledge. For example, rather than assuming Assumption \ref{ass: same joint intervention A = 1 & 1' Task I}, it may be more plausible to assume that the controlled direct effect of the investigational vaccine compared to the approved vaccine is positive. As a concrete example, in the vaccine setting, $S$ may represent neutralizing antibody levels, while the controlled direct effect captures other protective immune responses such as cell-mediated and non-neutralizing humoral responses. Under certain circumstances, candidate vaccines may induce these components in similar proportions \citep{krause2022making}; hence, it may be reasonable to assume that the investigational vaccine induces equivalent or improved cell-mediated and non-neutralizing humoral responses compared to the approved vaccine, provided its neutralizing antibody response is no worse. In this case, we have
\begin{equation*}
P\{T(A = 1, S = s) \leq t_0 \mid \mathbf{X}, \Gamma = 1\} \leq P\{T(A = 1', S = s) \leq t_0 \mid  \mathbf{X}, \Gamma = 1\}, 
\end{equation*}
where we have let $y(\cdot) = I\{~\cdot \leq t_0\}$. 

Alternatively, Assumptions \ref{ass: conditional exchangeability task I} and \ref{ass: same joint intervention A = 1 & 1' Task I} may be relaxed through sensitivity analysis. For example, to relax Assumption \ref{ass: same joint intervention A = 1 & 1' Task I}, one may introduce a sensitivity function $h(\mathbf{X})$ and specify
\begin{equation*}
    g(\mathbb{E}\left[y(T(A = 1, S = s)) \mid  \mathbf{X}, \Gamma = 1\right]) = g(\mathbb{E}\left[y(T(A = 1', S = s)) \mid  \mathbf{X}, \Gamma = 1\right]) + h(\mathbf{X}),
\end{equation*}
where $g(\cdot)$ is a link function (e.g., the logit link for a binary outcome, such as when $y(\cdot) = I(\cdot \leq t)$), and $h(\mathbf{X})$ quantifies the deviation from equality. Related sensitivity function–based approaches are discussed in \citet{brumback2004sensitivity}, \citet{he2024generalizing}, and \citet{gilbert2024surrogateendpointbasedprovisional}.
\end{remark}


Proposition \ref{prop:identification-complete-data} is analogous to the identification results in \citet{athey2025surrogate}, though the setting here is slightly different. Proposition \ref{prop:identification-complete-data} identifies the target parameter $R(a; \Gamma = 1),~a \in \{1, 1'\},$ for Task I in three distinct ways under Assumptions \ref{ass: standard assumpion in Dh}--\ref{ass: same joint intervention A = 1 & 1' Task I}.

\begin{proposition}\label{prop:identification-complete-data}
Under Assumptions \ref{ass: standard assumpion in Dh}--\ref{ass: same joint intervention A = 1 & 1' Task I}, for $a \in \{1, 1'\}$, we have
\begin{equation*}{\textstyle
R(a; \Gamma = 1) = \mathbb{E}\left[\frac{(1 - \Gamma) I\{A = 1\}}{\kappa} \frac{f(\Gamma = 1 \mid \boldsymbol{X})}{f(A = 1, \Gamma = 0 \mid \boldsymbol{X})} \frac{f(S \mid \boldsymbol{X}, A = a, \Gamma = 1)}{f(S \mid \boldsymbol{X}, A = 1, \Gamma = 0)} y(T)\right],}
\end{equation*}
where $\kappa = f(\Gamma = 1)$, 
\begin{equation*}{\textstyle
R(a; \Gamma = 1) = \mathbb{E}\left[\frac{\Gamma I\{A = a\}}{\kappa f(A = a \mid \boldsymbol{X}, \Gamma = 1)} \mu(\boldsymbol{X}, 1, S)\right],}
\end{equation*}
where $\mu(\boldsymbol{x}, 1, s) = \mathbb{E}\left[y(T) \mid  \boldsymbol{X} = \boldsymbol{x}, A = 1, S = s, \Gamma = 0\right]$, and 
\begin{equation*}{\textstyle
R(a; \Gamma = 1) = \mathbb{E}\left[\frac{\Gamma}{\kappa} \int_{s\in \mathcal{S}} \mu( \boldsymbol{X}, 1, s) f(s \mid  \boldsymbol{X}, A = a, \Gamma = 1) ds\right].}
\end{equation*}
\end{proposition}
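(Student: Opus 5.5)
The plan is to derive the third display (the g-formula/mediator-integration form) directly from the potential-outcome assumptions, and then to obtain the other two displays by standard reweighting identities.

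\emph{Step 1: conditional g-formula.} For $a \in \{1,1'\}$ we use consistency, $T(a) = T(a, S(a))$, and iterated expectations over $S(a)$. Conditional on $\boldsymbol{X}$ and $\Gamma=1$, this gives
\begin{equation*}
R(a;\boldsymbol{X},\Gamma=1) = \int \mathbb{E}[y(T(a,s)) \mid S(a)=s, \boldsymbol{X}, \Gamma=1]\, f_{S(a)}(s \mid \boldsymbol{X},\Gamma=1)\, ds .
\end{equation*}
Randomization in the hypothetical trial comparing $1'$ with $1$ gives $f_{S(a)}(s\mid \boldsymbol{X},\Gamma=1) = f(s \mid \boldsymbol{X}, A=a, \Gamma=1)$. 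Strong sequential ignorability, Assumption \ref{ass: standard assumpion in Dh}(iii), read within arm $a$ as in the remark after the assumption, lets us drop the conditioning on $S(a)=s$. This leaves $\mathbb{E}[y(T(a,s))\mid \boldsymbol{X},\Gamma=1]$.

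\emph{Step 2: chain of equalities.} Assumption \ref{ass: same joint intervention A = 1 & 1' Task I} replaces $a$ by $1$ inside this expectation; this step is trivial when $a=1$. Assumption \ref{ass: conditional exchangeability task I} then moves the expectation to $\Gamma=0$. Finally, randomization, sequential ignorability and consistency in the historical trial give
\begin{equation*}
\mathbb{E}[y(T(1,s))\mid \boldsymbol{X},\Gamma=0] = \mathbb{E}[y(T)\mid \boldsymbol{X},A=1,S=s,\Gamma=0] = \mu(\boldsymbol{X},1,s).
\end{equation*}
Positivity (Assumption \ref{ass: standard assumpion in Dh}(iv)) and overlap (Assumption \ref{ass: overlap}) guarantee that $\mu$ is well defined on the support of $f(s\mid\boldsymbol{x},a,\Gamma=1)$ for $\boldsymbol{x}$ in the target support. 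Averaging over $\mathcal{P}_b$ and writing $\mathbb{E}_{\mathcal{P}_b}[g(\boldsymbol{X})] = \mathbb{E}[\Gamma g(\boldsymbol{X})]/\kappa$ yields the third display.

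\emph{Step 3: the two weighted forms.} For the second display, condition on $(\boldsymbol{X},\Gamma=1)$:
\begin{equation*}
\mathbb{E}\left[\frac{I\{A=a\}}{f(A=a\mid\boldsymbol{X},\Gamma=1)}\,\mu(\boldsymbol{X},1,S)\,\Big|\,\boldsymbol{X},\Gamma=1\right] = \mathbb{E}[\mu(\boldsymbol{X},1,S)\mid \boldsymbol{X},A=a,\Gamma=1],
\end{equation*}
which is exactly the inner integral in the third display. For the first display, condition on $\boldsymbol{X}$ and replace $y(T)$ by $\mu(\boldsymbol{X},1,S)$ on the event $\{A=1,\Gamma=0\}$. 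The factor $(1-\Gamma)I\{A=1\}/f(A=1,\Gamma=0\mid\boldsymbol{X})$ turns the outer expectation into an expectation over $f(s\mid\boldsymbol{X},A=1,\Gamma=0)$. The density ratio then changes the measure to $f(s\mid\boldsymbol{X},A=a,\Gamma=1)$; positivity of $f(s\mid\boldsymbol{x},1)$ in the historical trial, together with overlap, keeps the ratio well defined. The remaining factor $f(\Gamma=1\mid\boldsymbol{X})/\kappa$ equals $\mathbb{E}[\Gamma\mid\boldsymbol{X}]/\kappa$, so the result matches the third display after iterated expectations.

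\emph{Main obstacle.} The delicate step is Step 1: justifying $f_{S(a)}(s\mid\boldsymbol{X},\Gamma=1) = f(s\mid\boldsymbol{X},A=a,\Gamma=1)$ and removing the conditioning on $S(a)$. Assumption \ref{ass: standard assumpion in Dh}(iii) is stated loosely, so I would interpret it as $S(a)\perp T(a,s)\mid \boldsymbol{X},\Gamma$, combined with $A\perp\{S(a),T(a,s)\}\mid\boldsymbol{X},\Gamma$ from randomization, and spell this reading out explicitly. Everything else is routine change of measure.
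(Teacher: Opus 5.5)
Your proposal is correct and follows essentially the same route as the paper: derive the mediation-formula (third) display first, then get the other two displays by inverse-probability reweighting within the immunobridging trial and by the density-ratio change of measure from the historical vaccine arm. The only difference is presentational. The paper passes through the observed-data regression $\mathbb{E}\{y(T)\mid \boldsymbol{X},A=a,S=s,\Gamma=1\}$ and replaces it by $\mu(\boldsymbol{X},1,s)$ in a single step, whereas your potential-outcome chain spells out which assumption is used where, including your reading of randomization as also covering $S(a)$ and of Assumption~\ref{ass: standard assumpion in Dh}(iii) as holding within arm.
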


Finally, let $C$ denote time-to-right-censoring, such as the time until a participant receives an out-of-study vaccination or is lost to follow-up. Instead of observing the true event times $T$ in the historical dataset $\mathcal{D}_h$, we observe the censored times $\Tilde{T}_i = \min(T_i, C_i)$ and the event indicator $\Delta_i = I\{T_i \leq C_i\}$. To still identify the target parameter, we make the following ignorable censoring assumption in the historical trial $(\Gamma = 0)$. 

\begin{assumption}\label{ass: standard right-censoring assumption}
{\rm (i. ignorable censoring)} $C \perp T \mid  \boldsymbol{X}, A, S$; {\rm (ii. finite horizon)} The transformation $y(\cdot)$ admits a maximal horizon $0 < H < \infty$ such that $y(t) = y(H)$ for all $t > H$; {\rm (iii. positivity of censoring)} $P(C < H \mid \boldsymbol{X}, A) < 1$.
\end{assumption}

Theorem \ref{thm:identification-censored-data} is our key identification result for the target parameter $R(a; \Gamma = 1),~ a\in\{1, 1'\}$, in the presence of covariate-dependent right-censoring.

\begin{theorem}[Identification for Task I]\label{thm:identification-censored-data}
Under Assumptions \ref{ass: standard assumpion in Dh}--\ref{ass: standard right-censoring assumption}, for $a \in \{1, 1'\}$, we have
\begin{equation*}{\textstyle
R(a; \Gamma = 1) = \mathbb{E}\left[\frac{(1 - \Gamma) I\{A = 1\}}{\kappa} \frac{f(\Gamma = 1 \mid  \boldsymbol{X})}{f(A = 1, \Gamma = 0 \mid  \mathbf{X})} \frac{f(S \mid \mathbf{X}, A = a, \Gamma = 1)}{f(S \mid \mathbf{X}, A = 1, \Gamma = 0)} \frac{\Delta y(\Tilde{T})}{G^{C}(\Tilde{T} \mid \mathbf{X}, A, S)}\right],}
\end{equation*}
where $G^{C}(t \mid \mathbf{x}, a, s) = P(C > t \mid \mathbf{X} = \mathbf{x}, A = a, S = s, \Gamma = 0)$ is the conditional survival function of censoring, 
\begin{equation*}{\textstyle
R(a; \Gamma = 1) = \mathbb{E}\left[\frac{\Gamma I\{A = a\}}{\kappa f(A = a \mid \mathbf{X}, \Gamma = 1)} \mu(\mathbf{X}, A, S)\right],}
\end{equation*}
and 
\begin{equation*}{\textstyle
R(a; \Gamma = 1) = \mathbb{E}\left[\frac{\Gamma}{\kappa} \int_{s\in \mathcal{S}} \mu( \mathbf{X}, a, s) f(s \mid  \mathbf{X}, A = a, \Gamma = 1) ds\right].}
\end{equation*}
\end{theorem}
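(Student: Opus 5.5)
The plan is to reduce Theorem \ref{thm:identification-censored-data} to Proposition \ref{prop:identification-complete-data}. The key is to show that, under Assumption \ref{ass: standard right-censoring assumption}, censoring can be removed by inverse probability of censoring weighting. Concretely, the first step is the identity
\begin{equation*}
\mathbb{E}\left[\frac{\Delta y(\Tilde{T})}{G^{C}(\Tilde{T} \mid \mathbf{X}, A, S)} \,\middle|\, \mathbf{X}, A, S, \Gamma = 0\right] = \mathbb{E}\left[y(T) \mid \mathbf{X}, A, S, \Gamma = 0\right] = \mu(\mathbf{X}, A, S).
\end{equation*}
To prove it, note that on $\{\Delta = 1\}$ we have $\Tilde{T} = T$. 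Condition additionally on $T$ and use Assumption \ref{ass: standard right-censoring assumption}(i): this gives $\mathbb{E}[\Delta \mid T, \mathbf{X}, A, S, \Gamma=0] = P(C \geq T \mid \mathbf{X}, A, S, \Gamma = 0)$, which equals $G^C(T \mid \mathbf{X}, A, S)$ up to the usual convention about ties (take $C$ continuous or define $G^C$ with a left limit). The ratio then cancels, and the tower property gives the display. Assumption \ref{ass: standard right-censoring assumption}(ii)--(iii) handles division by zero: since $y(t) = y(H)$ for $t > H$, the transformation only needs evaluating up to the horizon, so events after $H$ can be treated as occurring at $H$. Positivity of censoring then keeps $G^C$ bounded away from zero on $[0,H]$. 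I expect this to be the main technical obstacle. It requires stating precisely how $\Delta$ and $y$ interact beyond $H$, for instance by working with $T \wedge H$, or by arguing that $G^C(t) > 0$ for $t \le H$ suffices because $y(T)$ is determined by $T \wedge H$.

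The second step is the first (weighting) representation. Write the expectation as an iterated expectation, conditioning on $(\mathbf{X}, A, S, \Gamma)$. The factor $(1-\Gamma)I\{A=1\}$ times all the density ratios is a function of $(\mathbf{X}, A, S, \Gamma)$ only. The inner expectation of the censoring-weighted term therefore becomes $\mu(\mathbf{X}, 1, S)$ by step one. Likewise, in the first expression of Proposition \ref{prop:identification-complete-data}, conditioning $y(T)$ on the same variables also yields $\mu(\mathbf{X},1,S)$. So the two expressions coincide, and the first display follows from the proposition.

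The third step handles the remaining two representations. These involve only $\mu$ and the immunobridging data, where no censoring occurs. They are therefore exactly the second and third displays of Proposition \ref{prop:identification-complete-data}, with $\mu$ still identified from the censored data via step one. The one point to check is notation. The theorem writes $\mu(\mathbf{X}, A, S)$ and $\mu(\mathbf{X}, a, s)$, whereas the proposition uses $\mu(\cdot, 1, \cdot)$. Here $\mu(\mathbf{x}, a, s)$ must be read as the outcome regression under the approved vaccine in the historical trial, evaluated as the transported controlled risk. By Assumptions \ref{ass: conditional exchangeability task I} and \ref{ass: same joint intervention A = 1 & 1' Task I} together with Assumption \ref{ass: standard assumpion in Dh}(ii)--(iii), for $a = 1'$ it equals $\mathbb{E}[y(T(1',s)) \mid \mathbf{x}, \Gamma=1] = \mathbb{E}[y(T(1,s)) \mid \mathbf{x}, \Gamma = 0] = \mu(\mathbf{x}, 1, s)$. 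I will state this identification of $\mu$ explicitly, repeating briefly the mediation-formula argument behind the proposition: $R(a;\Gamma=1) = \mathbb{E}_{\mathcal{P}_b}\int \mathbb{E}[y(T(a,s)) \mid \mathbf{X}, \Gamma=1] f(s \mid \mathbf{X}, A=a, \Gamma=1)\,ds$. The displays then follow from Proposition \ref{prop:identification-complete-data} with censoring handled entirely by step one.
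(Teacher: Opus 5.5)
Your proposal is correct and follows the paper's proof. Both first establish an inverse-probability-of-censoring-weighting identity conditional on $(\mathbf{X}, A, S)$ within $\Gamma = 0$, and then reduce the three displays to those of Proposition~\ref{prop:identification-complete-data}. You are also more careful than the paper, which passes over these points: you handle ties, the horizon $H$ (of your two suggested fixes, truncating both $\Delta$ and $\Tilde{T}$ at $H$ is the one that works for general $y$, and the issue disappears for $y(\cdot) = I\{\cdot \le t_0\}$ with $t_0 \le H$), and the need to read $\mu(\mathbf{X}, A, S)$ and $\mu(\mathbf{X}, a, s)$ as $\mu(\mathbf{X}, 1, \cdot)$.
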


\subsection{Task II}
\label{subsec: identification assumption Task II}

Recall in Task II, the investigational vaccine is matched to a circulating strain different from the strain that dominated during the historical VE trial, and the immunogenicity trial measures the immune marker against this new strain for both the approved and investigational vaccines. In this case, the historical dataset $\mathcal{D}_{h} = \{(\boldsymbol{X}_i, A_i, S_i, T_i): i = 1, \ldots, n_h\}$ remains the same as in Task I, while the immunobridging dataset now becomes $\mathcal{D}_{b} = \{(\boldsymbol{X}_i, A_i, S'_{i}) : i = n_h + 1, \ldots, n_h + n_b\}$, where  $S'_i$ is the same type of post-vaccination immune marker as $S_i$ in $\mathcal{D}_h$, but measured against a different pathogen strain.

To still identify the target parameter $R'(A = 1'; \Gamma = 1),$ we consider the following assumptions:

\begin{assumption}\label{ass: standard assumpion in Db for Task II}
{\rm (i. Consistency)} $S' = S'(A)$, $T' = T'(A, S')$ and $T' = T'(A) = T'(A, S'(A))$; {\rm (ii. Randomization)} $A \perp T'(a,s) \mid \boldsymbol{X}$ for $a \in \{1, 1'\}$; {\rm (iii. Strong sequential ignorability)} $S' \perp T'(a,s) \mid \boldsymbol{X}$ for $a \in \{1, 1'\}$; {\rm (iv. Positivity)} $P(A = a \mid \boldsymbol{X} = \boldsymbol{x}) > 0$ and $f(s \mid \boldsymbol{x}, a) > 0$ for all $a \in \{1, 1'\}, \boldsymbol{x} \in \mathcal{X}, s \in \mathcal{S}$.
\end{assumption}

\begin{assumption}[Conditional variant-invariant model]\label{ass: conditional exchangeability task II} A conditional variant-invariant model is said to hold for (i) the approved vaccine ($A = 1$) in the historical VE trial setting ($\Gamma = 0$) and (ii) the investigational, variant-matched vaccine ($A = 1'$) in the immunobridging-trial setting ($\Gamma = 1$) if
\begin{equation}
\label{eq: conditional variant-invariant model}
    \frac{\mathbb{E}\left[y(T(A = 1, S = s)) \mid \mathbf{X} = \boldsymbol{x}, \Gamma = 0\right]}{R(A = 0; \mathbf{X} = \boldsymbol{x}, \Gamma = 0)} = \frac{\mathbb{E}\left[y(T'(A = 1', S' = s)) \mid  \mathbf{X} = \boldsymbol{x}, \Gamma = 1\right]}{R'(A = 0; \mathbf{X} = \boldsymbol{x}, \Gamma = 1)},
\end{equation}
for $\boldsymbol{x} \in \mathcal{X}, s \in \mathcal{S}$ where $f(s \mid \boldsymbol{x}, a) > 0$, $a \in \{1, 1'\}$.
\end{assumption}

\noindent 
Assumption \ref{ass: standard assumpion in Db for Task II} is analogous to Assumption \ref{ass: standard assumpion in Dh}. We assume Assumption \ref{ass: standard assumpion in Db for Task II} for the immunobridging trial. Assumption \ref{ass: conditional exchangeability task II} is a ``conditional-on-$\boldsymbol{X}$" formalization of the ``variant-invariant CoP model" concept (termed by Jerry Sadoff in a different setting) in the immune correlates literature \citep{benkeser2023immune,luedtke2025immune}. 

Rearranging Equation \eqref{eq: conditional variant-invariant model} yields the following:
\begin{equation*}
    \mathbb{E}\left[y(T'(A = 1', S' = s)) \mid  \mathbf{X}, \Gamma = 1\right] = \underbrace{\frac{R'(A = 0; \mathbf{X}, \Gamma = 1)}{R(A = 0; \mathbf{X}, \Gamma = 0)}}_{\substack{\text{Relative} \\ \text{transmissibility} \\ \text{factor} }}\cdot\mathbb{E}\left[y(T(A = 1, S = s)) \mid \mathbf{X}, \Gamma = 0\right],
\end{equation*}
which links the controlled risk in the immunobridging trial context under the joint intervention $(A = 1', S' = s)$ against the currently circulating strain to the controlled risk in the historical trial context under the joint intervention $(A = 1, S = s)$ against the then-circulating strain, multiplied by a \emph{relative transmissibility factor}. This factor is defined as the ratio of placebo risks across the two trial contexts---one against the then-circulating strain and the other against the currently circulating strain---and is intended to partially account for differences in: (1) the amount of virus circulating and exposing individuals in the two settings; and (2) transmissibility, viral fitness, and viral load across strains. The quantity $R(A = 0; \mathbf{X}, \Gamma = 0)$ can be unbiasedly estimated from $\mathcal{D}_h$ under suitable identification assumptions, whereas $R'(A = 0; \mathbf{X}, \Gamma = 1)$ is not identifiable without external surveillance data, since $\mathcal{D}_b$ does not collect clinical endpoint data. We therefore propose anchoring the relative transmissibility factor at $1$ and varying it as a sensitivity parameter.

To further identify the target parameter $R'(A = 1; \Gamma = 1)$, we consider the following assumption:

\begin{assumption}[No controlled direct effects comparing investigational and approved vaccines in the immunobridging trial]\label{ass: same joint intervention task II} The no controlled direct effects assumption is said to hold for the approved vaccine $A = 1$, the investigational vaccine $A = 1'$, and the immune marker $S'$ in the immunobridging trial if the following holds:
\begin{equation}
\mathbb{E}\left[y(T'(A = 1', S' = s)) \mid  \mathbf{X} = \boldsymbol{x}, \Gamma = 1\right] = \mathbb{E}\left[y(T'(A = 1, S' = s)) \mid  \mathbf{X} = \boldsymbol{x}, \Gamma = 1\right],
\end{equation}
for $\boldsymbol{x} \in \mathcal{X}, s \in \mathcal{S}$ where $f(s \mid \boldsymbol{x}, a) > 0$, $a \in \{1, 1'\}$.
\end{assumption}

Together, Assumptions \ref{ass: conditional exchangeability task II} and \ref{ass: same joint intervention task II} help connect the potential risk with respect to $T'$ under the joint intervention $(A = 1, S' = s)$ in the immunobridging-trial context to the risk with respect to $T$ under the joint intervention $(A = 1, S = s)$ in the historical-trial context, and the latter can be identified from the historical VE trial data. 

Proposition \ref{prop: identification Task II} formalizes the identification results for Task II. The identification results are essentially the same as those obtained in Proposition~\ref{prop:identification-complete-data}. The explicit formulas are deferred to Supplemental Materials Section~\ref{proof-prop:identification-complete-data}.

\begin{proposition}[Identification for Task II]\label{prop: identification Task II}
    Under Assumptions \ref{ass: standard assumpion in Dh}, \ref{ass: overlap}, \ref{ass: standard right-censoring assumption}, \ref{ass: standard assumpion in Db for Task II}, and \ref{ass: conditional exchangeability task II}, the target parameter $R'(A = 1'; \Gamma = 1)$ for Task II can be identified from the historical dataset $\mathcal{D}_{h} = \{(\boldsymbol{X}_i, A_i, S_i, T_i): i = 1, \ldots, n_h\}$ and the immunobridging dataset $\mathcal{D}_{b} = \{(\boldsymbol{X}_i, A_i, S'_{i}) : i = n_h + 1, \ldots, n_h + n_b\}$. If one further makes Assumption \ref{ass: same joint intervention task II}, then $R'(A = 1; \Gamma = 1)$ and hence the relative VE comparing $R'(A = 1'; \Gamma = 1)$ and $R'(A = 1; \Gamma = 1)$ can be identified.
\end{proposition}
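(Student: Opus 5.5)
The plan is to reduce Task II to the structure of Proposition \ref{prop:identification-complete-data} and Theorem \ref{thm:identification-censored-data}, with one extra multiplicative factor. The extra factor comes from the relative transmissibility ratio, which is fixed as a known sensitivity parameter; I anchor it at $1$, or at a specified function $\rho(\mathbf{x}) = R'(A=0;\mathbf{x},\Gamma=1)/R(A=0;\mathbf{x},\Gamma=0)$. Without fixing it, $R'(A=0;\mathbf{x},\Gamma=1)$ is not identified from $\mathcal{D}_b$, so identification is understood conditional on $\rho$.

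The first step works entirely within the immunobridging trial. Using Assumption \ref{ass: standard assumpion in Db for Task II}(i) and iterated expectations,
\[
R'(1';\mathbf{x},\Gamma=1)=\int \mathbb{E}[y(T'(1',S'(1')))\mid S'(1')=s,\mathbf{x},\Gamma=1]\, f(S'(1')=s\mid \mathbf{x},\Gamma=1)\,ds .
\]
By sequential ignorability (iii), the inner term equals $\mathbb{E}[y(T'(1',s))\mid \mathbf{x},\Gamma=1]$. By randomization (ii) and consistency, $f(S'(1')=s\mid\mathbf{x},\Gamma=1)=f(s\mid\mathbf{x},A=1',\Gamma=1)$, which is observed in $\mathcal{D}_b$. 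Positivity (iv) ensures that this density is supported where the controlled risk is needed.

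The second step links the controlled risk back to the historical trial. Assumption \ref{ass: conditional exchangeability task II} gives $\mathbb{E}[y(T'(1',s))\mid\mathbf{x},\Gamma=1]=\rho(\mathbf{x})\,\mathbb{E}[y(T(1,s))\mid\mathbf{x},\Gamma=0]$. Assumption \ref{ass: standard assumpion in Dh} then shows that the historical controlled risk equals $\mu(\mathbf{x},1,s)$. Assumption \ref{ass: standard right-censoring assumption} in turn shows that $\mu(\mathbf{x},1,s)$ is identified from censored data, either as $\mathbb{E}[\Delta y(\tilde T)/G^C(\tilde T\mid \mathbf{x},1,s)\mid \mathbf{x},A=1,s,\Gamma=0]$ or through the product-limit representation of the conditional survival function. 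This is the same argument used in Theorem \ref{thm:identification-censored-data}, and it relies on finite horizon and censoring positivity. Assumption \ref{ass: overlap} guarantees that $\mu(\mathbf{x},1,s)$ is defined for $\mathbf{x}$ in the target support. Positivity in both trials is also required so that $s$ in the support of $f(\cdot\mid\mathbf{x},1',\Gamma=1)$ lies in the support of $f(\cdot\mid\mathbf{x},1,\Gamma=0)$. I expect this support condition to be the main technical point to state carefully, since the Omicron-specific marker $S'$ can range differently from the ancestral marker $S$. If needed, I will add it explicitly as part of (iv). Averaging over $\mathcal{P}_b$ gives
\[
R'(1';\Gamma=1)=\mathbb{E}\Big[\tfrac{\Gamma}{\kappa}\rho(\mathbf{X})\int \mu(\mathbf{X},1,s)f(s\mid\mathbf{X},A=1',\Gamma=1)\,ds\Big].
\]
The weighting and IPCW forms then follow exactly as in the proof of Theorem \ref{thm:identification-censored-data}, by a change of measure from $\Gamma=1,A=1'$ to $\Gamma=0,A=1$ with density ratios, with $\rho(\mathbf{X})$ carried along.

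The third step handles $R'(1;\Gamma=1)$. Repeating the first step with $a=1$ gives the integral of $\mathbb{E}[y(T'(1,s))\mid\mathbf{x},\Gamma=1]$ against $f(s\mid\mathbf{x},A=1,\Gamma=1)$. Assumption \ref{ass: same joint intervention task II} replaces this controlled risk by $\mathbb{E}[y(T'(1',s))\mid\mathbf{x},\Gamma=1]$, which was identified in the second step. The three formulas therefore hold with $f(s\mid\mathbf{X},A=1,\Gamma=1)$ in place of the $A=1'$ density. RelVE is then the ratio of two identified functionals.

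If $\rho$ is constant, it cancels in the relVE. The relative VE is therefore identified even without anchoring $\rho$, which is worth remarking on.
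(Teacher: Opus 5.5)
Your proposal is correct and follows essentially the paper's own argument. That argument runs as follows: the mediation formula within the immunobridging trial; replacement of the controlled risk by the historical $\mu(\boldsymbol{x},1,s)$ through Assumption~\ref{ass: conditional exchangeability task II}; IPCW and density-ratio reweighting for the alternative representations; and Assumption~\ref{ass: same joint intervention task II} to carry the result over to $R'(A=1;\Gamma=1)$. The paper's proof silently sets the relative transmissibility factor to $1$, so your explicit treatment of $\rho(\boldsymbol{x})$ as a known sensitivity function, and your observation that a constant $\rho$ cancels in relVE, is simply a more careful version of the same result; the support condition you flag is already implied by Assumptions~\ref{ass: standard assumpion in Dh}(iv) and \ref{ass: standard assumpion in Db for Task II}(iv), which give positive marker densities on the common space $\mathcal{S}$.
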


Analogous to the Assumptions \ref{ass: conditional exchangeability task I} and \ref{ass: same joint intervention A = 1 & 1' Task I}, Assumptions \ref{ass: conditional exchangeability task II} and \ref{ass: same joint intervention task II} are not testable; therefore, researchers may want to use them as ``leading cases, not truths" \citep{tukey1986sunset}, and relax them using methods discussed in Remark \ref{remark: relax Assumptions 3 & 4}.

%% file: 4.Estimation-Inference.tex
\vspace{-10pt}
\section{Estimation and Inference for Tasks I and II}
\label{sec: estimation and inference}
\subsection{Uncensored outcome}
\label{subsec: estimation and inference uncensored outcome}
Proposition \ref{prop:EIF-complete-data} derives the efficient influence function (EIF) of the target parameter $R(a; \Gamma = 1)$ for an uncensored outcome in Proposition \ref{prop:identification-complete-data}. Proposition \ref{prop:EIF-complete-data} is the same as that derived in \citet{athey2025surrogate} and \citet{gilbert2024surrogateendpointbasedprovisional}, although the problem setup is slightly different. We nevertheless include it here for completeness, and note that the multiple robustness property and debiased machine learning approach were not explicitly stated in \citet{athey2025surrogate} or \citet{gilbert2024surrogateendpointbasedprovisional}.

\begin{proposition}\label{prop:EIF-complete-data}
The parameter $R(a; \Gamma = 1)$ is pathwise differentiable in a nonparametric model for complete data, with efficient influence function $\phi^\ast_a = \phi_a - R(a; \Gamma = 1)$, where $\phi_a$ equals
{\small
\begin{align*}
& \frac{(1 - \Gamma) I\{A = 1\}}{\kappa} \frac{f(\Gamma = 1 \mid \boldsymbol{X})}{f(A = 1, \Gamma = 0 \mid \boldsymbol{X})} \frac{f(S \mid \boldsymbol{X}, A = a, \Gamma = 1)}{f(S \mid \boldsymbol{X}, A = 1, \Gamma = 0)} \left\{y(T) - \mu(\boldsymbol{X}, a, S)\right\} \\
& + \frac{\Gamma I\{A = a\}}{\kappa f(A = a \mid \boldsymbol{X}, \Gamma = 1)} \left\{\mu(\boldsymbol{X}, a, S) - \int_{s\in \mathcal{S}} \mu(\boldsymbol{X}, a, s) f(s \mid \boldsymbol{X}, A = a, \Gamma = 1) ds\right\} \\
& + \frac{\Gamma}{\kappa} \int_{s\in \mathcal{S}} \mu(\boldsymbol{X}, a, s) f(s \mid  \boldsymbol{X}, A = a, \Gamma = 1) ds.
\end{align*}}
\end{proposition}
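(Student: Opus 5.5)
We will derive the efficient influence function by computing the Gateaux derivative of the identified functional along a regular parametric submodel. The starting point is the third representation in Proposition \ref{prop:identification-complete-data}:
\[
\Psi(P)=\frac{1}{\kappa}\,\mathbb{E}\big[\Gamma\,\eta_a(\boldsymbol{X})\big],\qquad \eta_a(\boldsymbol{x})=\int_{s\in\mathcal{S}}\mu(\boldsymbol{x},1,s)\,f(s\mid\boldsymbol{x},A=a,\Gamma=1)\,ds .
\]
Under Assumption \ref{ass: same joint intervention A = 1 & 1' Task I}, $\mu(\boldsymbol{x},1,s)$ plays the role of $\mu(\boldsymbol{x},a,s)$ in the displayed $\phi_a$; we read $\mu(\boldsymbol X,a,S)$ there as this common regression.

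The observed-data density factorizes as
\[
f(\boldsymbol{X},\Gamma)\,f(A\mid\boldsymbol{X},\Gamma)\,f(S\mid\boldsymbol{X},A,\Gamma)\,f(T\mid\boldsymbol{X},A,S,\Gamma).
\]
The model is nonparametric, so the tangent space is saturated. It then suffices to show that, for every submodel $P_\epsilon$ with score $\ell(O)$,
\[
\partial_\epsilon\Psi(P_\epsilon)\big|_{\epsilon=0}=\mathbb{E}[\phi^\ast_a\,\ell(O)],
\]
and the gradient we obtain is automatically the efficient influence function. The score splits into components $\ell_{X\Gamma},\ell_A,\ell_S,\ell_T$, each mean-zero given its conditioning variables.

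The derivative of $\Psi=\mathbb{E}[\Gamma\eta_a]/\mathbb{E}[\Gamma]$ is taken via the product and quotient rules, in four pieces.
\begin{itemize}
\item \textbf{Marginal of $(\boldsymbol X,\Gamma)$.} Perturbing this law gives the term $\kappa^{-1}\Gamma\{\eta_a(\boldsymbol{X})-\Psi\}$. The $-\Psi$ comes from differentiating the denominator $\kappa$.
\item \textbf{Mediator density.} Perturbing $f(s\mid\boldsymbol{x},a,1)$ gives $\kappa^{-1}\mathbb{E}\big[\Gamma\int\mu\,\ell_S\,f\,ds\big]$. The $\Gamma=1$ law of $\boldsymbol X$ has been reweighted to $A=a$, so this becomes the second line of $\phi_a$: the factor $\Gamma I\{A=a\}/\{\kappa f(A=a\mid\boldsymbol{X},\Gamma=1)\}$ times $\{\mu(\boldsymbol{X},a,S)-\eta_a(\boldsymbol{X})\}$. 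Centering at $\eta_a$ is allowed because $\ell_S$ has conditional mean zero.
\item \textbf{Outcome regression.} Perturbing $\mu(\boldsymbol{x},1,s)=\mathbb{E}[y(T)\mid \boldsymbol{x},A=1,s,\Gamma=0]$ contributes $\mathbb{E}[\{y(T)-\mu\}\ell_T\mid \boldsymbol{x},1,s,0]$. This has to be integrated against $f(\boldsymbol x\mid\Gamma=1)\,f(s\mid\boldsymbol{x},a,1)/\kappa$ and rewritten as an expectation over the full observed law.
\item \textbf{Treatment mechanism.} The remaining score components, including $\ell_A$, have zero derivative, because $\Psi$ does not depend on those factors.
\end{itemize}

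The main obstacle is the change of measure in the outcome-regression term. To rewrite it as an expectation over the full law, we multiply and divide by the density of $(\boldsymbol X,A=1,S,\Gamma=0)$. This yields the weight
\[
\frac{(1-\Gamma)I\{A=1\}}{\kappa}\,\frac{f(\Gamma=1\mid\boldsymbol{X})}{f(A=1,\Gamma=0\mid\boldsymbol{X})}\,\frac{f(S\mid\boldsymbol{X},a,1)}{f(S\mid\boldsymbol{X},1,0)} .
\]
This rewriting requires Assumptions \ref{ass: standard assumpion in Dh}(iv) and \ref{ass: overlap}, so that the ratios are finite on the relevant support. Since $\{y(T)-\mu\}$ is conditionally mean-zero, the score $\ell_T$ can be replaced by the full score $\ell$. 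The same replacement works for the other pieces: each term is orthogonal to the irrelevant score components because of its conditional-mean-zero structure.

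Summing the three nonzero contributions reproduces $\phi_a-\Psi$, which is exactly $\phi^\ast_a$. As a sanity check, each summand has mean zero under $P$, and a von Mises expansion of the remainder $\Psi(\bar P)-\Psi(P)+\mathbb{E}_P[\phi^\ast_a(\bar P)]$ gives second-order products of nuisance errors. This confirms pathwise differentiability and foreshadows the multiple robustness property.
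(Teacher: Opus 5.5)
Your route differs from the paper's. The paper uses point-mass contamination. It writes $\Psi$ as a ratio of six joint-density factors, $\iiint y(t)\,\frac{f(t,x,1,s,0)f(s,x,a,1)f(x,1)}{f(x,1,s,0)f(x,a,1)f(1)}\,dt\,ds\,dx$, and differentiates each factor toward a Dirac mass. You instead take scores of a factorized likelihood and isolate the contributions of the law of $(\boldsymbol X,\Gamma)$, of $f(s\mid\boldsymbol x,a,1)$, and of $\mu(\boldsymbol x,1,s)$. You then pass from component scores to the full score using conditional-mean-zero structure.

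Your three pieces and the change of measure producing the weight on $\{y(T)-\mu\}$ are correct. The sentence just before that display has a stray $1/\kappa$ when written with $f(\boldsymbol x\mid\Gamma=1)$, but the weight you display is right. The score route checks pathwise differentiability directly, including for continuous $S$ and $T$. It also shows transparently why perturbing $f(A\mid\boldsymbol X,\Gamma)$ or $f(s\mid\boldsymbol x,1,0)$ contributes nothing. The point-mass route is more mechanical but heuristic for continuous data, and here it slips exactly in the normalizing factor.

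The gap is your final step. Your first bullet correctly gives $\kappa^{-1}\Gamma\{\eta_a(\boldsymbol X)-\Psi\}$, so your three pieces sum to
\[
(\text{first line of }\phi_a)+(\text{second line of }\phi_a)+\frac{\Gamma}{\kappa}\bigl\{\eta_a(\boldsymbol X)-R(a;\Gamma=1)\bigr\},
\]
not to $\phi_a-R(a;\Gamma=1)$. The two differ by $R(a;\Gamma=1)(1-\Gamma/\kappa)$, which has mean zero but is not identically zero.

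As you note, the gradient is unique in a nonparametric model, so both cannot be the EIF, and yours is the correct one. To see this, take $y\equiv c$: then $R(a;\Gamma=1)=c$ under every law, so the EIF must vanish. Your expression vanishes, while the displayed one equals $c(\Gamma/\kappa-1)$. The paper's point-mass calculation arrives at $-\Psi$ because it evaluates the term from perturbing $f(\Gamma=1)$ as $\Psi$ rather than $\Psi\,\Gamma/\kappa$.

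So you should not claim to reproduce the displayed $\phi^\ast_a$ exactly. What your argument proves is the statement with the third line centered as $\frac{\Gamma}{\kappa}\{\int\mu(\boldsymbol X,1,s)f(s\mid\boldsymbol X,A=a,\Gamma=1)\,ds-R(a;\Gamma=1)\}$.

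Your mean-zero sanity check cannot detect this, since both versions have mean zero. Your von Mises check would: perturbing only $\kappa$ to $\bar\kappa$, the displayed version leaves a first-order remainder $R(a;\Gamma=1)(\kappa/\bar\kappa-1)$, while the corrected one leaves none. The point estimator $\frac{1}{n}\sum\hat\phi_a$ is unaffected. However, $\mathbb{E}[(\phi^\ast_a)^2]$ computed from the displayed formula exceeds the efficiency bound by $R(a;\Gamma=1)^2(1-\kappa)/\kappa$.
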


While we assume a nonparametric model, the treatment assignment probabilities are known by design in our setup, and note that knowledge of treatment assignment should not change the EIF \citep{hahn1998role}. We can thus use the EIF to construct a standard semiparametrically efficient estimator
$\hat{R}(a; \Gamma = 1) = \frac{1}{n} \sum_{i=1}^{n} \hat{\phi}_a,$
where $n = n_h + n_b$, and $\hat{\phi}_a$ is defined as the function $\phi_a$ in Proposition \ref{prop:EIF-complete-data} with the nuisance functions estimated via parametric or semiparametric models. 

The multiple robustness property of $\hat{R}(a; \Gamma = 1)$ is described in Proposition \ref{prop:MR-complete-data}.

\begin{proposition}\label{prop:MR-complete-data}
Under standard regularity conditions, $\hat{R}(a; \Gamma = 1)$ is a consistent and asymptotically normal estimator of $R(a; \Gamma = 1)$ under the union of the following three models:
\begin{description}
    \item[$\mathcal{M}_a$]: models for $\mu(\boldsymbol{x}, a, s)$ and $f(s \mid \boldsymbol{x}, A = a, \Gamma = 1)$ are correctly specified,

    \item[$\mathcal{M}_b$]: models for $\mu(\boldsymbol{x}, a, s)$ and $f(A = a \mid \boldsymbol{x}, \Gamma = 1)$ are correctly specified,

    \item[$\mathcal{M}_c$]: models for $f(A = a \mid \boldsymbol{x}, \Gamma = 1)$, $f(\Gamma = 1 \mid \boldsymbol{x})$, $f(A = 1, \Gamma = 0 \mid \boldsymbol{x})$, $f(s \mid \boldsymbol{x}, A = a, \Gamma = 1)$ and $f(s \mid \boldsymbol{x}, A = 1, \Gamma = 0)$ are correctly specified.
\end{description}
\end{proposition}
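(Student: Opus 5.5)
We prove Proposition \ref{prop:MR-complete-data} by computing the probability limit of $\hat R(a;\Gamma=1)$ when the nuisance estimators converge to possibly misspecified limits. We then show that the resulting bias term vanishes under each of $\mathcal{M}_a$, $\mathcal{M}_b$, $\mathcal{M}_c$. Asymptotic normality then follows from standard M-estimation arguments.

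\textbf{Setup.} Write $\bar\mu$, $\bar f_1(s\mid x)$ for $f(s\mid x,A=a,\Gamma=1)$, $\bar f_0(s\mid x)$ for $f(s\mid x,A=1,\Gamma=0)$, $\bar\pi_a(x)$ for $f(A=a\mid x,\Gamma=1)$, $\bar\rho(x)$ for $f(\Gamma=1\mid x)$, and $\bar\lambda(x)$ for $f(A=1,\Gamma=0\mid x)$ for these limits. Let $\bar\nu(x)=\int\bar\mu(x,a,s)\bar f_1(s\mid x)ds$. Suppose the working models are fit by parametric or semiparametric estimators that converge at $\sqrt n$ rates (the ``standard regularity conditions''). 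Then $\hat R\to \mathbb{E}[\bar\phi_a]$, and $\hat R$ is asymptotically linear with influence function $\bar\phi_a-\mathbb{E}\bar\phi_a$ plus a term accounting for nuisance estimation. It therefore suffices to show $\mathbb{E}[\bar\phi_a]=R(a;\Gamma=1)$ on the union model.

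\textbf{Conditioning on $X$.} Take expectations of the three pieces of $\bar\phi_a$ conditionally on $X$, using the true laws $\rho,\lambda,\pi_a,f_1,f_0,\mu$.
\begin{itemize}
\item The first term gives $\kappa^{-1}\lambda(x)\frac{\bar\rho}{\bar\lambda}\int \frac{\bar f_1}{\bar f_0}f_0\{\mu-\bar\mu\}ds$.
\item The second gives $\kappa^{-1}\rho(x)\frac{\pi_a}{\bar\pi_a}\{\int\bar\mu f_1ds-\bar\nu\}$.
\item The third gives $\kappa^{-1}\rho(x)\bar\nu$.
\end{itemize}
By Proposition \ref{prop:identification-complete-data}, the target equals $\mathbb{E}[\kappa^{-1}\rho(X)\int\mu f_1ds]$. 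Subtracting, the bias is $\kappa^{-1}\mathbb{E}[B_1(X)+B_2(X)]$, where
\[
B_1=\lambda\tfrac{\bar\rho}{\bar\lambda}\int\tfrac{\bar f_1}{\bar f_0}f_0(\mu-\bar\mu)ds-\rho\int f_1(\mu-\bar\mu)ds,
\qquad
B_2=\rho\Big(\tfrac{\pi_a}{\bar\pi_a}-1\Big)\Big(\int\bar\mu f_1ds-\bar\nu\Big).
\]
Checking this decomposition is the main algebraic step: $\rho\int\bar\mu f_1$ must appear with coefficient $+1$ and then be subtracted. Here we use that $f_1$ is the true law of $S$ given $X$, $A=a$, $\Gamma=1$ in the $\Gamma=1$ subsample. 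Positivity (Assumption \ref{ass: standard assumpion in Dh}(iv)) and overlap (Assumption \ref{ass: overlap}) make the density ratios well-defined and the integrals finite.

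\textbf{Case verification.}
\begin{itemize}
\item Under $\mathcal{M}_a$, $\bar\mu=\mu$ kills $B_1$, and $\bar f_1=f_1$ gives $\int\bar\mu f_1=\bar\nu$, killing $B_2$.
\item Under $\mathcal{M}_b$, $\bar\mu=\mu$ kills $B_1$, and $\bar\pi_a=\pi_a$ kills $B_2$.
\item Under $\mathcal{M}_c$, $\bar\pi_a=\pi_a$ kills $B_2$. Also $\bar\rho=\rho$, $\bar\lambda=\lambda$, $\bar f_1=f_1$, $\bar f_0=f_0$ reduce the first part of $B_1$ to $\rho\int f_1(\mu-\bar\mu)ds$, which cancels the second part for arbitrary $\bar\mu$.
\end{itemize}

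\textbf{Asymptotic normality.} This follows by a Taylor expansion of $\mathbb{P}_n\hat\phi_a$ around the nuisance limits $\bar\eta$. The derivative of $\eta\mapsto\mathbb{E}\phi_a(\eta)$ at $\bar\eta$ is finite, so the nuisance contribution is $\sqrt n$-asymptotically linear. Consequently $\sqrt n(\hat R-R)$ is a sum of i.i.d.\ terms, and the CLT applies. This assumes Donsker or cross-fitting conditions and bounded inverse weights, which are the regularity conditions invoked. One can alternatively use the sandwich variance.

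The main obstacle is the $\mathcal{M}_c$ cancellation in $B_1$. It requires the combined weight $\frac{\rho}{\lambda}\frac{f_1}{f_0}$ to exactly transport the $(X,S)$ distribution of the $\Gamma=0$, $A=1$ sample to that of the $\Gamma=1$, $A=a$ sample. This weight is nonetheless multiplied by $\lambda(x)f_0$ rather than by the joint law of $A$ and $\Gamma$. I would check carefully that the indicator $(1-\Gamma)I\{A=1\}$ conditions correctly on $X$, using $f(A=1,\Gamma=0\mid X)=\lambda$.
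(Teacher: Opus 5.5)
Your proof is correct and rests on the same mechanism as the paper's: show that the estimating function, evaluated at the nuisance limits, has mean $R(a;\Gamma=1)$ on each submodel. The organization differs.

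The paper works case by case. It regroups the three terms of $\phi_a$ into two mean-zero pieces plus one of the three identification formulas of Proposition~\ref{prop:identification-complete-data}: integrated regression under $\mathcal{M}_a$, regression-IPW under $\mathcal{M}_b$, and full weighting under $\mathcal{M}_c$. This makes transparent which representation of the target drives each case.

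You instead condition on $\boldsymbol{X}$ once, obtain the exact bias $\kappa^{-1}\mathbb{E}[B_1+B_2]$ in product-of-errors form, and read all three cases off it; the add-and-subtract step checks out. Your route buys a slightly sharper conclusion. $B_2$ vanishes if either $\bar\pi_a=\pi_a$ or $\bar f_1=f_1$. $B_1$ vanishes if either $\bar\mu=\mu$ or the product $\bar\rho\bar f_1/(\bar\lambda\bar f_0)$ is correct. So correctness of $f(A=a\mid\boldsymbol{x},\Gamma=1)$ is in fact unnecessary under $\mathcal{M}_c$, whereas the paper's $\mathcal{M}_c$ grouping invokes it.

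Two small points. First, the $\mu$ in $B_1$ must be read as the historical vaccine-arm regression $\mu(\boldsymbol{x},1,s)=\mathbb{E}[y(T)\mid\boldsymbol{X}=\boldsymbol{x},A=1,S=s,\Gamma=0]$; the paper's $\mu(\boldsymbol{x},a,s)$ notation is loose here. Second, the worry in your last paragraph is already resolved by your own computation. We have $\mathbb{E}[(1-\Gamma)I\{A=1\}g\mid\boldsymbol{X}]=\lambda(\boldsymbol{X})\,\mathbb{E}[g\mid\boldsymbol{X},A=1,\Gamma=0]$, and $\lambda$ is exactly the conditional probability of $(A,\Gamma)=(1,0)$, so it cancels the $1/\lambda$ in the weight. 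Note also that the weight transports to $f(\boldsymbol{x}\mid\Gamma=1)f(s\mid\boldsymbol{x},A=a,\Gamma=1)$, not to the $(\boldsymbol{X},S)$ law of the $\Gamma=1,A=a$ subsample. That product is precisely what the target requires.

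Your asymptotic-normality sketch is at the same level of detail as the paper, which likewise leaves it to standard M-estimation arguments.
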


When parametric and semiparametric models are used to estimate nuisance functions, standard M-estimation theory can be used to obtain a nonparametric sandwich estimator for the standard error of $\hat{R}(a;\Gamma=1)$. The resulting sandwich variance estimator is robust to partial model misspecification. As an alternative, inference can be based on the nonparametric bootstrap with nuisance functions estimated via parametric or semiparametric models \citep{cheng2010bootstrap}.

When flexible machine learning methods are used to estimate the nuisance functions, we propose the following debiased machine learning (cross-fitted one-step) estimator \citep{chernozhukov2018double}
\begin{equation*}
    \hat{R}_{\textrm{DML}}(a; \Gamma = 1) = \frac{1}{n} \sum_{k=1}^{K} \sum_{i \in \mathcal{I}_{n,k}} \hat{\phi}_{a},
\end{equation*}
where we define a $K$-fold random partition $(\mathcal{I}_{n,k})_{k = 1, \ldots, K}$ of $\{1, \ldots, n\}$ for some fixed $K$. For each $k \in \{1, 2, \ldots, K\}$, the observations outside fold $k$ constitute the training set, and for $i \in \mathcal{I}_{n,k}$, the nuisance functions in $\hat{\phi}_{a}$ are estimated via machine learning methods fitted on the corresponding training set. We show the following asymptotic linearity to achieve valid inference.
\begin{proposition}\label{prop: asymptotic linearity complete data}
Under standard conditions, we have
\begin{equation*}
    \hat{R}_{\textrm{DML}}(a; \Gamma = 1) = R(a; \Gamma = 1) + \mathbb{P}_{n}\phi^{\ast}_{a} + o_{P}\left(n^{-1/2}\right),
\end{equation*}
which implies that $n^{1/2} \left\{\hat{R}_{\textrm{DML}}(a; \Gamma = 1) - R(a; \Gamma = 1)\right\}$ converges in distribution to a Gaussian random variable with mean $0$ and variance $\mathbb{E}\left[(\phi^{\ast}_{a})^2\right]$.
\end{proposition}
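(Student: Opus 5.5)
We prove Proposition \ref{prop: asymptotic linearity complete data} with the standard cross-fitting argument, splitting the error of each fold's one-step estimator into an empirical-process term and a second-order remainder. Write $\eta$ for the nuisance functions in $\phi_a$. These are the outcome regression $\mu(\boldsymbol{x},a,s)$, the marker densities $f(s\mid\boldsymbol{x},A=a,\Gamma=1)$ and $f(s\mid\boldsymbol{x},A=1,\Gamma=0)$, and the propensities $f(\Gamma=1\mid\boldsymbol{x})$, $f(A=1,\Gamma=0\mid\boldsymbol{x})$ and $f(A=a\mid\boldsymbol{x},\Gamma=1)$. We also estimate $\kappa$ by the sample proportion $n_b/n$. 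Let $\hat\eta_k$ be fitted on the complement of $\mathcal{I}_{n,k}$, and let $\mathbb{P}_{n,k}$ be the empirical measure on fold $k$. Then
\begin{equation*}
\mathbb{P}_{n,k}\phi_a(\hat\eta_k) - R = \mathbb{P}_{n,k}\phi^\ast_a + (\mathbb{P}_{n,k}-\mathbb{P})\{\phi_a(\hat\eta_k)-\phi_a(\eta)\} + \{\mathbb{P}\phi_a(\hat\eta_k) - R\}.
\end{equation*}
Averaging over $k$ with weights $|\mathcal{I}_{n,k}|/n$ recovers $\mathbb{P}_n\phi^\ast_a$ in the first term, so it suffices to show that the last two terms are $o_P(n^{-1/2})$. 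The estimation error of $\hat\kappa$ adds only a term linear in $\Gamma-\kappa$. We can absorb it either by noting that $\mathbb{E}[\phi_a]$ rescales in $\kappa$, or by treating $\kappa$ as known, since the $-R$ centering of the EIF already accounts for it.

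The "standard conditions" we impose are the following.
\begin{itemize}
\item[(a)] All propensities and density ratios are bounded away from zero and infinity, for both the truth and the estimates.
\item[(b)] $\|\phi_a(\hat\eta_k)-\phi_a(\eta)\|_{L_2(\mathbb{P})}=o_P(1)$.
\item[(c)] Certain products of $L_2$ nuisance errors are $o_P(n^{-1/2})$.
\end{itemize}
Under (a) and (b), the empirical-process term is handled by conditioning on the training sample. Given the training data, $\phi_a(\hat\eta_k)-\phi_a(\eta)$ is a fixed function, so Chebyshev's inequality bounds the term by $O_P(n^{-1/2}\|\phi_a(\hat\eta_k)-\phi_a(\eta)\|_{2}) = o_P(n^{-1/2})$. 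No Donsker condition is needed.

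The main work is the remainder $\mathbb{P}\phi_a(\hat\eta) - R$, and we must show it is a sum of products of errors. First take the conditional expectation of the first term given $(\boldsymbol{X},S,A=1,\Gamma=0)$, which replaces $y(T)$ by $\mu$. Then change measure from $(\Gamma=0,A=1)$ to $(\Gamma=1,A=a)$ using the true propensity and density ratios, and use the identity $\mu(\boldsymbol{X},a,\cdot)=\mu(\boldsymbol{X},1,\cdot)$ implied by Assumption \ref{ass: same joint intervention A = 1 & 1' Task I}. This gives
\begin{equation*}
\kappa^{-1}\mathbb{E}\Big[\Gamma\,\tfrac{\hat w_1}{w_1}\, \tfrac{f(A=a\mid \boldsymbol X,\Gamma=1)}{1}\!\int (\mu-\hat\mu)(\boldsymbol X,a,s)\tfrac{\hat f_b}{f_b}\hat f\ldots\Big],
\end{equation*}
where $w_1$ collects the first-term weights $f(\Gamma=1\mid\boldsymbol X)/f(A=1,\Gamma=0\mid\boldsymbol X)$ and $f_b$ denotes $f(s\mid\boldsymbol X,A=a,\Gamma=1)$. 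Treat the second and third terms the same way, integrating over $S$ given $(\boldsymbol X, A=a,\Gamma=1)$. After this, everything collapses to
\begin{equation*}
\mathbb{E}\Big[\tfrac{\Gamma}{\kappa}\int (\mu-\hat\mu)\,\hat f_b\,\big(\tfrac{\hat r\, f_0}{\hat f_0}-1\big)ds\Big] + \mathbb{E}\Big[\tfrac{\Gamma}{\kappa}\big(\tfrac{\pi}{\hat\pi}-1\big)\int (\mu-\hat\mu)(f_b-\hat f_b)\,ds\Big],
\end{equation*}
where $\pi=f(A=a\mid\boldsymbol{X},\Gamma=1)$, $f_0$ denotes $f(s\mid\boldsymbol X,A=1,\Gamma=0)$, and $\hat r$ collects the ratio of estimated to true $\Gamma$- and $A$-propensities. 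The key check, and the step I expect to be most error-prone, is that every surviving term contains $(\mu-\hat\mu)$ multiplied by an error in the weights or marker densities. This bookkeeping is exactly the calculation behind Proposition \ref{prop:MR-complete-data}: each of $\mathcal{M}_a$, $\mathcal{M}_b$ and $\mathcal{M}_c$ makes one factor vanish. We will redo it carefully with estimated rather than limiting nuisances.

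Applying Cauchy--Schwarz together with the boundedness in (a), the remainder is bounded by
\begin{equation*}
\|\hat\mu-\mu\|\cdot\big(\|\hat f_b - f_b\| + \|\hat f_0-f_0\| + \|\hat w - w\| + \|\hat\pi-\pi\|\big).
\end{equation*}
We then take the rate condition (c) to be that this product is $o_P(n^{-1/2})$; in fact, since $\pi$ and the $\Gamma$-propensities are known by design or estimated at parametric rates, only the products involving the densities and $\mu$ matter. With the remainder controlled, summing over the folds yields the asymptotic linear expansion. The central limit theorem applied to $\mathbb{P}_n\phi^\ast_a$, which has mean zero and finite variance under (a) and bounded $y$ on $[0,H]$, then gives asymptotic normality with variance $\mathbb{E}[(\phi^\ast_a)^2]$.
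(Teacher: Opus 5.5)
Your overall architecture matches the paper's: a cross-fitted one-step decomposition into $\mathbb{P}_n\phi^\ast_a$, an empirical-process term handled by conditioning on the training folds (Lemma~\ref{lemma:crossfitting}), and a second-order remainder controlled by Cauchy--Schwarz under product-rate conditions. The gap is in the remainder, which is the step you flagged.

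Use your notation with $\kappa$ fixed, $\mu=\mu(\boldsymbol X,1,s)$, and $\hat r=\hat f(\Gamma=1\mid\boldsymbol X)f(A=1,\Gamma=0\mid\boldsymbol X)/\{f(\Gamma=1\mid\boldsymbol X)\hat f(A=1,\Gamma=0\mid\boldsymbol X)\}$. Condition the first term on $(\boldsymbol X,A=1,\Gamma=0)$ and the second on $(\boldsymbol X,\Gamma=1)$, then reweight from the pooled $\boldsymbol X$-distribution to $\Gamma=1$. This gives
\[
\mathbb{P}\phi_a(\hat\eta)-R=\mathbb{E}\Big[\frac{\Gamma}{\kappa}\Big\{\int(\mu-\hat\mu)\,\hat f_b\Big(\frac{\hat r f_0}{\hat f_0}-1\Big)ds+\frac{\pi}{\hat\pi}\int(\mu-\hat\mu)(\hat f_b-f_b)\,ds+\Big(\frac{\pi}{\hat\pi}-1\Big)\int\mu\,(f_b-\hat f_b)\,ds\Big\}\Big].
\]
The last term has no factor $\mu-\hat\mu$. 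It is the term $-\int\{\pi/(\kappa\hat\pi)-1/\kappa\}\mu(\hat f_b-f_b)\,ds$ in the paper's expansion of $Z(\hat{\mathbb P},\mathbb P)$, and the paper controls it with its own rate condition. So your claim that every surviving term carries $\mu-\hat\mu$ is false, and it cannot be true. It would make the estimator consistent whenever $\mu$ alone is correct. But with $\mu$ correct and both $\pi$ and $f_b$ misspecified, the bias is exactly this nonzero term. That is why Proposition~\ref{prop:MR-complete-data} needs $\mu$ together with $f_b$ ($\mathcal M_a$) or with $\pi$ ($\mathcal M_b$). Your bound therefore omits a product, and condition (c) as you state it does not make the remainder $o_P(n^{-1/2})$.

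The fix is to add $\|\hat\pi-\pi\|\,\|\hat f_b-f_b\|=o_P(n^{-1/2})$, together with boundedness. In this setting it costs nothing: if you plug in the known randomization probability of the immunobridging trial for $\pi$, the term vanishes identically. That, rather than the collapse you wrote, should be your justification.

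Two smaller points. First, $f(\Gamma=1\mid\boldsymbol x)$ is a study-membership (covariate-shift) score, not something known by design. Under flexible estimation, the products involving $\|\hat\mu-\mu\|$ and the error in the weights $w$ therefore do matter. Second, suppose $\Gamma$ is random and you estimate $\hat\kappa=n_b/n$. Since $\phi_a$ scales as $1/\kappa$, this adds $-R(\Gamma-\kappa)/\kappa$ to the influence function. That is a first-order term, and the $-R$ centering does not absorb it. To obtain the stated expansion with $\phi^\ast_a=\phi_a-R$, treat $\kappa$ as fixed by design, as the paper implicitly does.
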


\subsection{Pointwise and uniform inference for a censored outcome}
\label{subsec: estimation and inference censored outcome}
By setting $y(T) = I\{T \leq t\}$ in Theorem \ref{thm:identification-censored-data}, the target parameter reduces to the following cumulative incidence curve:
\begin{equation*}
    t \mapsto R(a, t; \Gamma = 1) = \mathbb{E}_{\mathcal{P}_b}\left[P\{T(a) \leq t \mid \boldsymbol{X}, \Gamma = 1\}\right],
\end{equation*}
for $a \in \{1, 1'\}$ and $t \in [0, H]$ for some positive and finite $H$. We next study both pointwise and uniform inference for $R(a, t; \Gamma = 1)$ in the presence of covariate-dependent right-censoring.

Theorem \ref{thm:EIF-censored-data} is our main result and establishes the EIF for the target parameter $R(a, t; \Gamma = 1)$.

\begin{theorem}\label{thm:EIF-censored-data}
The parameter $R(a, t; \Gamma = 1)$ is pathwise differentiable in a nonparametric model for censored data, with efficient influence function $\phi^{C\ast}_{a,t} = \phi^{C}_{a,t} - R(a, t; \Gamma = 1)$, where $\phi^{C}_{a,t}$ equals
{\small
\begin{align*}
& 1 - \frac{(1 - \Gamma) I\{A = 1\}}{\kappa} \frac{f(\Gamma = 1 \mid \boldsymbol{X})}{f(A = 1, \Gamma = 0 \mid \boldsymbol{X})} \frac{f(S \mid \boldsymbol{X}, A = a, \Gamma = 1)}{f(S \mid \boldsymbol{X}, A = 1, \Gamma = 0)} \left\{\frac{-I\{\Tilde{T} \leq t, \Delta = 1\} G^{T}(t \mid \boldsymbol{X}, A, S)}{G^{T}(\Tilde{T} \mid \boldsymbol{X}, A, S) G^{C}(\Tilde{T} \mid \boldsymbol{X}, A, S)} \right. \\
& \left.\quad + \int_{0}^{t \wedge \Tilde{T}} \frac{G^{T}(t \mid \boldsymbol{X}, A, S) \Lambda(du \mid \boldsymbol{X}, A, S)}{G^{T}(u \mid \boldsymbol{X}, A, S) G^{C}(u \mid \boldsymbol{X}, A, S)}\right\} \\
& - \frac{\Gamma I\{A = a\}}{\kappa f(A = a \mid \boldsymbol{X}, \Gamma = 1)} \left\{G^{T}(t \mid \boldsymbol{X}, a, S) - \int_{s\in \mathcal{S}} G^{T}(t \mid \boldsymbol{X}, a, s) f(s \mid \boldsymbol{X}, A = a, \Gamma = 1) ds\right\} \\
& - \frac{\Gamma}{\kappa} \int_{s\in \mathcal{S}} G^{T}(t \mid \boldsymbol{X}, a, s) f(s \mid  \boldsymbol{X}, A = a, \Gamma = 1) ds,
\end{align*}}
where $G^{T}(t \mid \boldsymbol{x}, a, s) = P\left[T > t \mid \boldsymbol{X} = \boldsymbol{x}, A = a, S = s, \Gamma = 0\right]$ is the conditional survival function, $\Lambda(t \mid \boldsymbol{x}, a, s)$ is the conditional cumulative hazard function of $T$, and $G^C$ is defined in Theorem \ref{thm:identification-censored-data}.
\end{theorem}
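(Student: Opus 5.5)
The plan is to obtain the censored-data EIF by composing the complete-data EIF of Proposition \ref{prop:EIF-complete-data} with the standard inverse-probability-of-censoring mapping. Write the target via the third identification formula of Theorem \ref{thm:identification-censored-data} with $y(T)=I\{T\le t\}$:
\[
R(a,t;\Gamma=1)=1-\mathbb{E}\Big[\tfrac{\Gamma}{\kappa}\int_{s\in\mathcal{S}} G^{T}(t\mid \boldsymbol{X},a,s)\, f(s\mid \boldsymbol{X},A=a,\Gamma=1)\,ds\Big],
\]
since $\mu(\boldsymbol{x},a,s)=1-G^{T}(t\mid\boldsymbol{x},a,s)$. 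Under Assumption \ref{ass: standard right-censoring assumption}, the survival function $G^{T}$ is identified from the observed $(\Tilde{T},\Delta)$ in the historical trial through the product-integral of the observed-data cause-specific hazard, $G^{T}(t\mid\cdot)=\prod_{(0,t]}\{1-\Lambda(du\mid\cdot)\}$. Thus the functional depends on the observed-data law through four pieces: the marginal of $\boldsymbol{X}$ and $\Gamma$, the conditional law of $A$ and $S$ given $(\boldsymbol{X},\Gamma)$, and the observed-data hazard $\Lambda$ on the $\Gamma=0,A=1$ stratum. I would compute the Gateaux derivative along a regular one-dimensional parametric submodel $P_\epsilon$ and read off the gradient. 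In a nonparametric model the gradient is unique, so it is the EIF.

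The derivative splits by the chain rule into two parts.

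\textbf{Part one: all pieces except $\Lambda$.} Here $G^{T}$ plays exactly the role of $\mu$ in the uncensored problem. So the terms coming from $\boldsymbol{X}$, $\Gamma$, $A$, and $S$ in the immunobridging sample are the last two lines of Proposition \ref{prop:EIF-complete-data} with $\mu$ replaced by $1-G^{T}$. This produces the two displayed $\Gamma$-terms with the leading constant $1$ and the minus signs.

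\textbf{Part two: perturbation of $\Lambda$.} I would use the known influence function of the conditional survival function $G^{T}(t\mid\boldsymbol{x},1,s)$ under coarsening at random. At a point $(\boldsymbol{x},s)$, and conditional on that stratum, it is
\[
-G^{T}(t\mid\cdot)\int_0^{t} \frac{dM(u)}{G^{T}(u\mid\cdot)\,G^{C}(u\mid\cdot)},
\qquad
M(u)=I\{\Tilde T\le u,\Delta=1\}-\int_0^{u\wedge\Tilde T}\Lambda(dv\mid\cdot),
\]
where I use that $P(\Tilde T>u^-)=G^{T}G^{C}$, with care about left limits. This stratum-level influence function is then weighted by the density ratio that converts the $(\Gamma=0,A=1)$ stratum distribution of $(\boldsymbol{X},S)$ into the target measure, namely $\Gamma=1$ with $S\sim f(\cdot\mid\boldsymbol{X},a,\Gamma=1)$ and normalization $\kappa$. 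That weight is exactly the product of ratios in the first line of the theorem, as in the first formula of Theorem \ref{thm:identification-censored-data}.

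Expanding $\int_0^t dM/(G^{T}G^{C})$ gives two pieces: the jump term $I\{\Tilde T\le t,\Delta=1\}/(G^{T}(\Tilde T)G^{C}(\Tilde T))$ and the compensator integral over $[0,t\wedge\Tilde T]$. Multiplying by $G^{T}(t)$ and carrying the overall minus sign from $R=1-\mathbb{E}[\cdots]$ yields the braced expression in the theorem. Note that $A=1$ in that term, so $G^{T}(\cdot\mid\boldsymbol{X},A,S)$ evaluates at $A=1$, matching the statement's use of $A$ in place of $a$.

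\textbf{Main obstacle.} The delicate step is the hazard derivative: differentiating the product-integral $G^{T}$ along the submodel, and verifying that the resulting score projection is the martingale integral above. Concretely, I would compute
\[
\frac{d}{d\epsilon}\Lambda_\epsilon(du\mid\cdot)=\frac{\mathbb{E}[dM(u)\,\text{score}\mid\cdot]}{P(\Tilde T\ge u\mid\cdot)}
\]
using the ratio form $\Lambda(du)=P(\Tilde T\in du,\Delta=1)/P(\Tilde T\ge u)$. Combining this with $dG^{T}(t)/d\Lambda=-G^{T}(t)\int_0^t dh/(1-\Delta\Lambda)$ gives the integrand. Assuming continuous $\Lambda$, or absorbing the atom correction into the $G^{T}(u)$ versus $G^{T}(u^-)$ convention, avoids the extra factor.

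Finally, I would check mean-zero and that the gradient is in $L^2$, which follows from the positivity conditions in Assumptions \ref{ass: standard assumpion in Dh}(iv) and \ref{ass: standard right-censoring assumption}(iii). Pathwise differentiability then holds by the usual bounded-remainder argument.
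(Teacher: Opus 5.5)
\textbf{Comparison of routes.} Your opening sentence describes the paper's route, but what you actually carry out is different. You differentiate the observed-data functional directly and get the hazard contribution from the conditional Kaplan--Meier influence function, reweighted by the transport density ratio. The paper instead pushes the complete-data EIF of Proposition~\ref{prop:EIF-complete-data} through the augmented-IPCW map of Tsiatis (2006, Sec.~10.4), $\Delta\phi^\ast_a/G^C(\tilde T)+\int L(u)\,dM^C(u)/G^C(u)$ with $L(u)=\mathbb{E}[\phi^\ast_a\mid T\ge u,\boldsymbol{X},A,S]$. It then states the result without converting the $M^C$-integral into the displayed form. Your Part two is correct: the weight, the sign, and the $T$-martingale expansion all check out. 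It is, if anything, more informative than the paper's argument, because it lands directly on the braced first line.

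\textbf{The gap is in Part one.} Substituting $\mu=1-G^T$ into the last line of Proposition~\ref{prop:EIF-complete-data} gives $\frac{\Gamma}{\kappa}\int(1-G^T)f\,ds=\frac{\Gamma}{\kappa}-\frac{\Gamma}{\kappa}\int G^Tf\,ds$. The constant is therefore $\Gamma/\kappa$, not $1$, and your claim that this step ``produces the leading constant 1'' is false.

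Moreover, suppose you genuinely compute the Gateaux derivative, as you say you will, instead of quoting Proposition~\ref{prop:EIF-complete-data}. Write $m_G(\boldsymbol{X})=\int G^T(t\mid\boldsymbol{X},1,s)f(s\mid\boldsymbol{X},A=a,\Gamma=1)\,ds$ and $R=R(a,t;\Gamma=1)$. Then the perturbations of $P(\Gamma=1)$ and of $f(\boldsymbol{x}\mid\Gamma=1)$ contribute $\frac{\Gamma}{\kappa}\{1-m_G(\boldsymbol{X})-R\}$. The centering is thus $\frac{\Gamma}{\kappa}R$, not $R$. Executed correctly, your method yields the displayed $\phi^C_{a,t}-R$ minus $(1-\Gamma/\kappa)(1-R)$.

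That discrepancy has mean zero, but it is not zero, and it matters. Because $R$ does not depend on $P(\Gamma=1)$, every gradient must satisfy $\mathbb{E}[\phi\mid\Gamma]=0$. The displayed expression instead has conditional mean $1-R$ given $\Gamma=0$. Hence it cannot be the unique nonparametric EIF, and you should not bend the derivation to match the $1$.

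The paper's proof contains the same slip. The Tsiatis map leaves terms depending only on $(\boldsymbol{X},A,S,\Gamma)$ unchanged, so it too outputs $\Gamma/\kappa$. It also inherits the $-R$ centering from Proposition~\ref{prop:EIF-complete-data}, which has the same issue. Point estimates are unaffected when $\hat\kappa=n_b/n$, since then $n^{-1}\sum_i\Gamma_i/\hat\kappa=1$. However, a variance estimate built from the displayed $\phi^{C\ast}_{a,t}$ is inflated by $(1-R)^2(1-\kappa)/\kappa$.
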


Similarly, we can use the EIF in Theorem \ref{thm:EIF-censored-data} to construct a semiparametrically efficient estimator
\begin{equation*}
    \hat{R}^{C}(a, t; \Gamma = 1) = \frac{1}{n} \sum_{i=1}^{n} \hat{\phi}_{a,t}^{C},
\end{equation*}
where $\hat{\phi}_{a,t}^{C}$ is defined as the function $\phi^{C}_{a,t}$ where the nuisance functions are estimated via (semi)parametric models. Theorem \ref{thm:MR-censored-data} establishes the multiple robustness property of $\hat{R}^{C}(a, t; \Gamma = 1)$.

\begin{theorem}\label{thm:MR-censored-data}
Under standard regularity conditions, $\hat{R}^{C}(a, t; \Gamma = 1)$ is a consistent and asymptotically normal estimator of $R(a, t; \Gamma = 1)$ under the union of the following three models:
\begin{description}
    \item[$\mathcal{M}'_a$]: models for $G^{T}(t \mid \boldsymbol{x}, a, s)$ and $f(s \mid \boldsymbol{x}, A = a, \Gamma = 1)$ are correctly specified,

    \item[$\mathcal{M}'_b$]: models for $G^{T}(t \mid \boldsymbol{x}, a, s)$ and $f(A = a \mid \boldsymbol{x}, \Gamma = 1)$ are correctly specified,

    \item[$\mathcal{M}'_c$]: models for $f(A = a \mid \boldsymbol{x}, \Gamma = 1)$, $f(\Gamma = 1 \mid \boldsymbol{x})$, $f(A = 1, \Gamma = 0 \mid \boldsymbol{x})$, $f(s \mid \boldsymbol{x}, A = a, \Gamma = 1)$, $f(s \mid \boldsymbol{x}, A = 1, \Gamma = 0)$ and $G^{C}(t \mid \boldsymbol{x}, a, s)$ are correctly specified.
\end{description}
\end{theorem}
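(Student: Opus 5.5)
The plan is the standard route: write the estimator's probability limit as the expectation of $\phi^{C}_{a,t}$ evaluated at limiting (possibly misspecified) nuisances $(\bar G^T,\bar G^C,\bar f,\ldots)$, and show that this expectation equals $R(a,t;\Gamma=1)$ whenever one of $\mathcal{M}'_a,\mathcal{M}'_b,\mathcal{M}'_c$ holds. Under standard regularity conditions (nuisance estimators converge to these limits at parametric rates, Donsker/M-estimation conditions, boundedness of the inverse weights), consistency follows from a uniform law of large numbers. Asymptotic normality follows by stacking the nuisance estimating equations with $\mathbb{P}_n\{\phi^C_{a,t}-R\}=0$ and applying the usual M-estimation sandwich argument. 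So the work reduces to an exact bias computation $\mathbb{E}[\bar\phi^C_{a,t}]-R(a,t;\Gamma=1)$.

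I will organize the bias computation by conditioning in layers, mirroring Proposition \ref{prop:MR-complete-data}.

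\textbf{Censoring layer.} First condition on $(\boldsymbol{X},A,S,\Gamma=0)$ and evaluate the conditional mean of the bracketed augmented-IPCW term, namely
\[
-\frac{I\{\tilde T\le t,\Delta=1\}\bar G^T(t)}{\bar G^T(\tilde T)\bar G^C(\tilde T)}+\int_0^{t\wedge\tilde T}\frac{\bar G^T(t)\bar\Lambda(du)}{\bar G^T(u)\bar G^C(u)}.
\]
Using the ignorable censoring of Assumption \ref{ass: standard right-censoring assumption} and the Duhamel equation, I expect this conditional mean to equal $G^T(t)-\bar G^T(t)$ plus a product-type remainder
\[
\int_0^t \frac{\bar G^T(t)}{\bar G^T(u)}\left\{\frac{G^C(u)}{\bar G^C(u)}-1\right\}G^T(u-)\,(\bar\Lambda-\Lambda)(du),
\]
up to sign conventions. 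This remainder vanishes if either $\bar G^T=G^T$ or $\bar G^C=G^C$. This step is the main obstacle, since it requires a careful Duhamel/martingale computation; I would follow the familiar augmented-IPCW survival-curve derivation (e.g., as in \citet{westling2024inference}).

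\textbf{Remaining layers.} After the censoring layer, $\phi^C$ reduces in expectation to $1-\tilde\phi$, where $\tilde\phi$ is the complete-data $\phi_a$ of Proposition \ref{prop:EIF-complete-data} with $y$ replaced by survival, $\mu$ replaced by $\bar G^T(t\mid\cdot)$, and the first term's residual $y(T)-\mu$ replaced by $G^T-\bar G^T$. I then handle the three models as follows.

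\emph{Under $\mathcal{M}'_a$ or $\mathcal{M}'_b$} we have $\bar G^T=G^T$. Hence the censoring remainder is zero and the first residual term is zero regardless of the weights.
\begin{itemize}
\item Under $\mathcal{M}'_a$, $\bar f(s\mid\boldsymbol{x},a,1)$ is correct. Conditioning on $(\boldsymbol{X},A=a,\Gamma=1)$ then makes the second term mean zero for any $\bar f(A=a\mid\boldsymbol{x},\Gamma=1)$, and the third term equals $R$ by Theorem \ref{thm:identification-censored-data}.
\item Under $\mathcal{M}'_b$, the correct propensity in the second term cancels the possibly wrong integral in the third term. What remains is the second identification formula of Theorem \ref{thm:identification-censored-data}.
\end{itemize}

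\emph{Under $\mathcal{M}'_c$} all weights and $G^C$ are correct but $\bar G^T$ may be wrong. The censoring remainder again vanishes because $\bar G^C=G^C$. Reweighting the first term via the density ratios converts $\mathbb{E}[w(G^T-\bar G^T)]$ into $\frac{1}{\kappa}\mathbb{E}[\Gamma\int (G^T-\bar G^T)f(s\mid\boldsymbol{X},a,1)\,ds]$. This is the same change of measure used in proving Proposition \ref{prop:identification-complete-data}, relying on Assumptions \ref{ass: overlap}--\ref{ass: same joint intervention A = 1 & 1' Task I}. With the correct $f(A=a\mid\boldsymbol{x},\Gamma=1)$ and $f(s\mid\cdot)$, the second and third terms collapse to $\frac{1}{\kappa}\mathbb{E}[\Gamma\int\bar G^T f\,ds]$. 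The $\bar G^T$ contributions therefore cancel, leaving the truth.

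I expect the layered algebra after the censoring layer to be routine. The care needed is in the Duhamel remainder and in verifying the positivity bounds on $\bar G^T\bar G^C$ over $[0,H]$ that the regularity conditions must supply.
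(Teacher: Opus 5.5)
Your proposal is correct and follows essentially the same route as the paper: evaluate the mean of $\phi^{C}_{a,t}$ at the limiting nuisances and check, model by model, that it equals $R(a,t;\Gamma=1)$, using the same cancellations under $\mathcal{M}'_a$, $\mathcal{M}'_b$ and $\mathcal{M}'_c$. The only difference is organizational. You derive a single Duhamel product remainder for the censoring layer, which via the Duhamel identity is the term $\hat G^{T}(t)\int_0^t\{G^{C}/\hat G^{C}-1\}(G^{T}/\hat G^{T}-1)(du)$ that the paper only writes down in its proof of Proposition~\ref{prop: asymptotic linearity censored data}; the paper instead handles the censoring term separately in each case, using the martingale mean-zero property when $G^T$ is correct and adding and subtracting the IPCW identification formula when $G^C$ is correct.
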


As in Proposition~\ref{prop:MR-complete-data}, the standard error of $\hat{R}^{C}(a, t; \Gamma = 1)$ can be obtained using a standard sandwich estimator or nonparametric bootstrap when nuisance functions are estimated via parametric or semiparametric models. When using flexible machine learning methods, we propose the debiased machine learning estimator
\begin{equation*}
    \hat{R}^{C}_{\textrm{DML}}(a, t; \Gamma = 1) = \frac{1}{n} \sum_{k=1}^{K} \sum_{i \in \mathcal{I}_{n,k}} \hat{\phi}_{a,t}^{C}.
\end{equation*}

The following asymptotic linearity result serves as the basis of conducting pointwise and uniform inference.

\begin{proposition}\label{prop: asymptotic linearity censored data}
Under standard conditions given in the Supplemental Materials Section~\ref{sec:SM proof asymptotic linearity censored data}, we have
\begin{equation*}
    \hat{R}^{C}_{\textrm{DML}}(a, t; \Gamma = 1) = R(a, t; \Gamma = 1) + \mathbb{P}_{n}\phi^{C\ast}_{a,t} + o_{P}\left(n^{-1/2}\right).
\end{equation*}
\end{proposition}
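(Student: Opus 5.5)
The plan is the standard one-step argument: an exact decomposition into an empirical-process term, a remainder term, and the main term. Write $\hat\eta_{-k}$ for the nuisance estimates trained outside fold $k$: the propensities $f(A=a\mid \boldsymbol{x},\Gamma=1)$, $f(\Gamma=1\mid\boldsymbol{x})$, $f(A=1,\Gamma=0\mid\boldsymbol{x})$, the marker densities, $G^T$ (equivalently $\Lambda$), and $G^C$. Let $\phi^{C}_{a,t}(\eta)$ denote the uncentered EIF from Theorem~\ref{thm:EIF-censored-data}, and let $P$ be the true law. Then
\begin{equation*}
\hat R^{C}_{\textrm{DML}} - R(a,t;\Gamma=1) = \mathbb{P}_n\phi^{C\ast}_{a,t} + \sum_{k=1}^K \frac{n_k}{n}(\mathbb{P}_{n,k}-P)\{\phi^C_{a,t}(\hat\eta_{-k})-\phi^C_{a,t}(\eta)\} + \sum_{k=1}^K\frac{n_k}{n}\{P\phi^C_{a,t}(\hat\eta_{-k}) - R(a,t;\Gamma=1)\}.
\end{equation*}
The estimating function is an average of $\phi^C$ over all $n$ units, with $\kappa$ either known or replaced by $\hat\kappa=n_b/n$. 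The $\kappa$ correction contributes only a term linear in $\Gamma-\kappa$, which is already accounted for in $\phi^{C\ast}$ or is $O_P(n^{-1})$, so I absorb it.

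For the empirical-process term I condition on the training folds and use Chebyshev's inequality, which gives an $O_P(n^{-1/2}\|\phi^C(\hat\eta_{-k})-\phi^C(\eta)\|_{L_2(P)})$ bound. It then suffices to show $L_2$ consistency of $\phi^C(\hat\eta)$. This holds under the conditions stated in the Supplement: strong positivity bounds, that is, $\hat G^C$, $G^C$, the propensities, the density ratio and $\hat G^T(t)/\hat G^T(u)$ are bounded away from $0$ and $\infty$ on $[0,H]$, plus $L_2$ consistency of each nuisance uniformly in $u\le t$. The integral term $\int_0^{t\wedge\tilde T}$ is handled with the Duhamel identity, which expresses differences of $G^T$ through differences of $\Lambda$, so that its $L_2$ error is controlled by $\sup_{u\le t}$ errors.

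The main obstacle is the remainder $P\phi^C(\hat\eta)-R$. I would compute it in two layers.

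First, I take the conditional expectation given $(\boldsymbol{X},A=1,S,\Gamma=0)$ of the censoring-augmented bracket. By the Duhamel equation,
\begin{equation*}
G^T(t)-\hat G^T(t) = -\hat G^T(t)\int_0^t \frac{G^T(u-)}{\hat G^T(u)}(\Lambda-\hat\Lambda)(du).
\end{equation*}
Under Assumption~\ref{ass: standard right-censoring assumption}, the bracket's conditional mean equals $\hat G^T(t)-G^T(t)$ plus a term of the form $\int_0^t \hat G^T(t)\frac{G^T(u-)}{\hat G^T(u)}\bigl(\frac{G^C(u-)}{\hat G^C(u-)}-1\bigr)(\Lambda-\hat\Lambda)(du)$. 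This is a product of censoring error and hazard error.

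Second, I take the outer expectation over $S$ and $\boldsymbol{X}$. This mirrors the uncensored computation behind Proposition~\ref{prop:MR-complete-data}: with $\hat\mu=1-\hat G^T$, the remainder becomes a sum of products. The terms are (density-ratio and selection-weight error)$\times(\hat\mu-\mu)$, then $(\hat f(A=a\mid \boldsymbol{x},\Gamma=1)-f)\times(\int\hat\mu \hat f-\int\mu f)$-type terms, then $(\hat f(s\mid\cdot,\Gamma=1)-f)\times(\hat\mu-\mu)$, together with the censoring product above. Applying Cauchy--Schwarz, each term is bounded by a product of $L_2$ rates. The rate condition that each product is $o_P(n^{-1/2})$ then makes the remainder negligible.

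Care is needed to check that the weight combination exactly cancels the first-order terms, i.e.\ Neyman orthogonality. I would verify this by differentiating $\epsilon\mapsto P\phi^C(\eta+\epsilon h)$ at $\epsilon=0$, using the pathwise-derivative characterization already established in Theorem~\ref{thm:EIF-censored-data}. Combining the three pieces yields the displayed expansion.
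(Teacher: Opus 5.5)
Your proposal follows the paper's proof essentially step for step. The decomposition into a CLT term, an empirical-process term and a remainder is the same. Your conditioning-plus-Chebyshev bound for the cross-fitted empirical-process term is exactly the Kennedy lemma the paper invokes. You also write the remainder as a sum of second-order products, closed by Cauchy--Schwarz under the stated rate conditions. Its censoring piece, which you obtain via Duhamel, coincides with the paper's term $\hat G^{T}(t)\int_0^t\{G^C/\hat G^C-1\}(G^T/\hat G^T-1)(du)$.

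The one claim that is wrong is your aside that replacing $\kappa$ by $\hat\kappa=n_b/n$ can be absorbed. Write $\phi^{C}_{a,t}=1-Q/\kappa$, where $Q$ does not involve $\kappa$ and $\mathbb{E}[Q]=\kappa\{1-R(a,t;\Gamma=1)\}$. Let $\hat R^{C}_{\textrm{DML}}[\tilde\kappa]$ denote the estimator computed with $\tilde\kappa$ in place of $\kappa$. If $\Gamma$ is random and $\hat\kappa=\mathbb{P}_n\Gamma$, then
\[
\hat R^{C}_{\textrm{DML}}[\hat\kappa]-\hat R^{C}_{\textrm{DML}}[\kappa]=\{1-R(a,t;\Gamma=1)\}\,\mathbb{P}_n\left(\frac{\Gamma}{\kappa}-1\right)+o_P(n^{-1/2}).
\]
This term is of exact order $n^{-1/2}$, so it is not $O_P(n^{-1})$. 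It is also not contained in $\phi^{C\ast}_{a,t}=\phi^{C}_{a,t}-R(a,t;\Gamma=1)$, which is precisely the influence function of the estimator that uses the true $\kappa$.

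The displayed expansion therefore holds only when $\kappa$ is treated as known, or when $n_b/n$ is fixed by design so that $\mathbb{P}_n(\Gamma/\kappa-1)=0$ identically. The paper's proof adopts this convention implicitly, since its remainder $Z$ carries $\kappa$ as a fixed constant. With an estimated $\kappa$ the influence function becomes $\phi^{C\ast}_{a,t}+\{1-R(a,t;\Gamma=1)\}(\Gamma/\kappa-1)$. Either state the known-$\kappa$ convention explicitly or carry this extra linear term.
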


Proposition~\ref{prop: asymptotic linearity censored data} implies that $n^{1/2} \left\{\hat{R}^{C}_{\textrm{DML}}(a, t; \Gamma = 1) - R(a, t; \Gamma = 1)\right\}$ converges in distribution to a Gaussian random variable with mean $0$ and variance $\mathbb{E}\left[(\phi^{C\ast}_{a,t})^2\right]$. 

If additional standard uniformity conditions are assumed \citep{westling2024inference}, we also have
\begin{equation*}
    \sup_{u \in [0,t]} \left|\hat{R}^{C}_{\textrm{DML}}(a, u; \Gamma = 1) - R(a, u; \Gamma = 1) - \mathbb{P}_{n}\phi^{C\ast}_{a,u}\right| = o_{P}\left(n^{-1/2}\right),
\end{equation*}
which then implies that \[
\left\{n^{1/2} \left[\hat{R}^{C}_{\textrm{DML}}(a, u; \Gamma = 1) - R(a, u; \Gamma = 1)\right] : u \in [0, t]\right\}
\] converges weakly to a tight mean-zero Gaussian process with covariance function $(u,v) \mapsto \mathbb{E}\left[\phi^{C\ast}_{a,u} \phi^{C\ast}_{a,v}\right]$.

For a given dataset, the estimated cumulative incidence evaluated at a set of time points $t$ may not be monotone non-decreasing in $t$. To enforce monotonicity, one can project the estimated cumulative incidence curve onto the space of non-decreasing functions using isotonic regression \citep{westling2020correcting,westling2024inference}. \citet{westling2020correcting} demonstrate that the resulting ``monotone-corrected” cumulative incidence function performs at least as well as the uncorrected curve in finite samples, and that the two curves---the original and the monotone-corrected---are asymptotically equivalent. \citet{westling2024inference} provide a general procedure to conduct asymptotically valid pointwise and uniform inference for survival curves.

Finally, a natural estimator of the relative vaccine efficacy, defined as $1 - R(A = 1', t; \Gamma = 1)/R(A = 1, t; \Gamma = 1)$, can be obtained using $1 - \widehat{R}(A = 1', t; \Gamma = 1)/\widehat{R}(A = 1, t; \Gamma = 1)$. Proposition \ref{prop: relative risk} states the property of the estimator and derives the new EIF via the delta method.

\begin{proposition}\label{prop: relative risk}
    Under the same standard conditions as assumed in Proposition~\ref{prop: asymptotic linearity censored data}, the estimator $1-\widehat{R}(A=1', t; \Gamma = 1)/\widehat{R}(A=1, t; \Gamma = 1)$ is consistent and asymptotically normal for the relative vaccine efficacy. The variance of $\log\{1 - \widehat{R}(A=1', t; \Gamma = 1)/\widehat{R}(A=1, t; \Gamma = 1)\}$ equals
    \begin{equation*}
        \mathbb{E}\left\{\frac{1}{R(1, t; \Gamma = 1) - R(1', t; \Gamma = 1)}\left(\phi^{C\ast}_{1',t} - \frac{R(1', t; \Gamma = 1)}{R(1, t; \Gamma = 1)}\phi^{C\ast}_{1,t}\right)\right\}^2
    \end{equation*}
    and can be estimated via the usual plug-in estimator.
\end{proposition}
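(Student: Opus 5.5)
The proof is a direct application of the functional delta method to the pair of estimators $(\widehat{R}(1,t;\Gamma=1),\widehat{R}(1',t;\Gamma=1))$, using the joint asymptotic linearity in Proposition~\ref{prop: asymptotic linearity censored data}.

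\textbf{Step 1 (joint asymptotic linearity).} Apply Proposition~\ref{prop: asymptotic linearity censored data} separately for $a=1$ and $a=1'$. Both expansions hold on the same sample with remainders $o_P(n^{-1/2})$, so stacking them gives
\[
n^{1/2}\bigl(\widehat{R}(1,t)-R(1,t),\ \widehat{R}(1',t)-R(1',t)\bigr)^\top = n^{1/2}\mathbb{P}_n\bigl(\phi^{C\ast}_{1,t},\phi^{C\ast}_{1',t}\bigr)^\top+o_P(1).
\]
The pair $(\phi^{C\ast}_{1,t},\phi^{C\ast}_{1',t})$ is mean zero with finite second moments, since the weights are bounded under positivity. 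The multivariate central limit theorem then yields a bivariate normal limit with covariance matrix $\mathbb{E}[\phi\phi^\top]$. Consistency of each component follows immediately.

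\textbf{Step 2 (smooth transformation).} Define the map
\[
g(r_1,r_{1'})=\log(1-r_{1'}/r_1)=\log(r_1-r_{1'})-\log r_1 .
\]
This map is continuously differentiable at $(R(1,t),R(1',t))$ provided $R(1,t)>0$ and $R(1,t)>R(1',t)$, i.e.\ the relative VE is positive. I would state this as part of the standard conditions, because it is needed for the logarithm to be defined. Consistency and asymptotic normality of $1-\widehat R(1',t)/\widehat R(1,t)$ follow from the continuous mapping theorem and the delta method applied to $r_{1'}/r_1$. The gradient of $g$ is
\[
\partial_{r_1}g=\frac{1}{r_1-r_{1'}}-\frac{1}{r_1}=\frac{r_{1'}}{r_1(r_1-r_{1'})},\qquad \partial_{r_{1'}}g=-\frac{1}{r_1-r_{1'}}.
\]

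\textbf{Step 3 (influence function and variance).} The delta method gives the influence function of $g(\widehat R)$ as $\nabla g^\top \phi$, which equals
\[
-\frac{1}{R(1,t)-R(1',t)}\left(\phi^{C\ast}_{1',t}-\frac{R(1',t)}{R(1,t)}\phi^{C\ast}_{1,t}\right).
\]
Its second moment is exactly the displayed expression, because the overall sign disappears on squaring.

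\textbf{Step 4 (plug-in variance estimator).} Replace $R$ by $\widehat R$ and $\phi^{C\ast}$ by $\hat\phi^{C}_{a,t}-\widehat R(a,t)$ computed with the cross-fitted nuisances, then take the empirical mean of the squares. Consistency requires $\mathbb{P}_n(\hat\phi-\phi)^2=o_P(1)$. This holds under the $L_2$ consistency of the nuisance estimators already assumed in the Supplemental conditions for Proposition~\ref{prop: asymptotic linearity censored data}, together with the boundedness of the weights.

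\textbf{Main obstacle.} The only nontrivial point is this last one: verifying that the estimated influence functions converge in $L_2(P)$ uniformly across folds. I would handle it by conditioning on the training folds, bounding $\|\hat\phi-\phi\|_{L_2}$ by a constant times the nuisance errors, and applying Markov's inequality. The rest of the argument is routine.
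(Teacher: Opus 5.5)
Your proposal is correct and matches the paper's argument: the paper justifies this proposition only by invoking the delta method on the jointly asymptotically linear pair $(\widehat R(1,t;\Gamma=1),\widehat R(1',t;\Gamma=1))$ from Proposition~\ref{prop: asymptotic linearity censored data}, and your gradient of $\log(r_1-r_{1'})-\log r_1$ reproduces the stated variance exactly. You also make explicit two conditions the paper leaves implicit: $R(1,t;\Gamma=1)>R(1',t;\Gamma=1)$, so that the logarithm is defined, and $L_2$ consistency of the estimated influence functions, which justifies the plug-in variance estimator.
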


Similarly, results in Theorem \ref{thm:EIF-censored-data} and Proposition \ref{prop: asymptotic linearity censored data}, combined with the delta method, can be used to perform pointwise and uniform inference for a generic contrast of the form 
$g\{R(A=1, t; \Gamma = 1), R(A=1', t; \Gamma = 1)\}$
for any differentiable function $g$.

%% file: 5.Competing-risk.tex
\section{Extension to multiple serotype-specific endpoints}
\label{sec: extension to competing risk}
\subsection{Target parameter}
\label{subsec: competing risk target parameter}
For some pathogens, disease outcomes can be classified into distinct serotypes based on variations in bacterial or viral surface antigens. For example, Group B Streptococcus (GBS) comprises up to ten Capsular Polysaccharide serotypes globally, and most vaccine candidates under development target several of the most prevalent ones—such as serotypes Ia, Ib, II, III, and V \citep{heath2016status}. The dengue virus has four serotypes (DENV-1 through DENV-4) \citep{deng2020review}, and vaccine development efforts aim to achieve protection against all four. Similarly, influenza viruses exhibit substantial antigenic diversity and are classified into multiple types and subtypes. Seasonal influenza vaccines in the United States are formulated to protect against three major circulating strains, typically including one influenza A(H1N1) virus, one influenza A(H3N2) virus, and one influenza B virus from the Victoria lineage (see, e.g., \citet{jackson2017influenza}).



In vaccine efficacy trials, a commonly used endpoint is the time to the first occurrence of a serotype-specific event. Let $j = 1, \dots, J$ denote the $J$ circulating pathogen strains. Our target parameter is the following cause-$j$-specific cumulative incidence function:
\begin{equation}\label{eq: conditional cause-specific risk}
    R^j(A = 1', t; \Gamma = 1) = \mathbb{E}_{\mathcal{P}_b}\left[P\{T(1') \leq t, \Delta(1') = j \mid \boldsymbol{X}, \Gamma = 1\}\right],
\end{equation}
in a hypothetical trial evaluating a candidate vaccine regimen $A = 1'$ in the target population, where $\{T(1'), \Delta(1')\}$ denotes the potential time to the first failure event under the intervention $A = 1'$, and $\Delta$ is the failure event indicator with $0$ for censoring. The all-cause cumulative incidence function in the target trial is then defined by $R(A = 1', t;\Gamma = 1) = \sum_{j=1}^J R^j(A = 1', t; \Gamma = 1)$. Let $C$ denote time to right-censoring, so the observed time is $\Tilde{T} = \min(T, C)$.

For each cause-specific time-to-event outcome, we consider the corresponding cause-specific mediator $S^j$. In the GBS example, $S^j$ denotes the IgG binding antibody level in cord blood against the $j$-th GBS serotype. In the dengue vaccine example, \(S^j\) represents the post-vaccination binding or neutralizing antibody level against the $j$-th dengue virus serotype. We write $\boldsymbol{S} = (S^1, \ldots, S^J)$. With a cause-specific time-to-event endpoint, the historical dataset is $\mathcal{D}_{h} = \{(\boldsymbol{X}_i, A_i, \boldsymbol{S}_i, \tilde{T}_i, \Delta_i) : i = 1, \ldots, n_h\},
$ and the immunobridging dataset is
$\mathcal{D}_{b} = \{(\boldsymbol{X}_i, A_i, \boldsymbol{S}_i) : i = n_h + 1, \ldots, n_h + n_b\}.$

\subsection{Identification and inference for Task III}
\label{subsec: competing risk identification and inference}
Assumption \ref{ass: standard assumpion in Dh competing risks} is an analogue of Assumption \ref{ass: standard assumpion in Dh} under a competing risks framework. We assume Assumption \ref{ass: standard assumpion in Dh competing risks} holds for the historical trial $\Gamma = 0$ and the hypothetical trial $\Gamma = 1$.

\begin{assumption}\label{ass: standard assumpion in Dh competing risks}
{\rm (i. Consistency)} $\boldsymbol{S} = \boldsymbol{S}(A)$, $T = T(A, \boldsymbol{S})$, $\Delta = \Delta(A, \boldsymbol{S})$, $T = T(A) = T(A, \boldsymbol{S}(A))$ and $\Delta = \Delta(A) = \Delta(A, \boldsymbol{S}(A))$; {\rm (ii. Randomization)} $A \perp \{T(a,\boldsymbol{s}), \Delta(a,\boldsymbol{s})\} \mid \boldsymbol{X}$ for $a \in \{0, 1, 1'\}$; {\rm (iii. Strong sequential ignorability)} $\boldsymbol{S} \perp \{T(a,\boldsymbol{s}), \Delta(a,\boldsymbol{s})\} \mid \boldsymbol{X}$ for $a \in \{0, 1, 1'\}$; {\rm (iv. Positivity)} $P(A = a \mid \boldsymbol{X}) > 0$ and $f(\boldsymbol{s} \mid \boldsymbol{x}, a) > 0$.
\end{assumption}

Assumptions \ref{ass: conditional exchangeability competing risks} and \ref{ass: same joint intervention competing risks} are analogous to Assumptions \ref{ass: conditional exchangeability task I} and \ref{ass: same joint intervention A = 1 & 1' Task I}, but stated with respect to the cause-$j$-specific endpoint.

\begin{assumption}[Conditional exchangeability of the approved vaccine for the cause-$j$-specific endpoint in two trial settings]\label{ass: conditional exchangeability competing risks} Conditional exchangeability across historical and hypothetical trials is said to hold if 
\[
P\left(T(A = 1, \boldsymbol{S} = \boldsymbol{s}) \leq t, \Delta = j \mid \mathbf{X} = \boldsymbol{x}, \Gamma = 0\right) = P\left(T(A = 1, \boldsymbol{S} = \boldsymbol{s}) \leq t, \Delta = j \mid  \mathbf{X} = \boldsymbol{x}, \Gamma = 1\right),
\]
for $\boldsymbol{x} \in \mathcal{X}, \boldsymbol{s} \in \mathcal{S}$ where $f(\boldsymbol{s} \mid \boldsymbol{x}, 1) > 0$.
\end{assumption}

\begin{assumption}[No controlled direct effects for cause-$j$-specific endpoint comparing investigational and approved vaccines in the immunobridging trial]\label{ass: same joint intervention competing risks} The no controlled direct effects assumption is said to hold for the approved vaccine $A = 1$, the investigational vaccine $A = 1'$, the immune marker $\boldsymbol{S}$, and the cause-$j$-specific endpoint if the following holds:
\[
P\left(T(A = 1, \boldsymbol{S} = \boldsymbol{s}) \leq t, \Delta = j \mid \mathbf{X} = \boldsymbol{x}, \Gamma = 1\right) = P\left(T(A = 1', \boldsymbol{S} = \boldsymbol{s}) \leq t, \Delta = j \mid  \mathbf{X} = \boldsymbol{x}, \Gamma = 1\right),
\]
for $\boldsymbol{x} \in \mathcal{X}, \boldsymbol{s} \in \mathcal{S}$ where $f(\boldsymbol{s} \mid \boldsymbol{x}, a) > 0$, $a \in \{1, 1'\}$.
\end{assumption}


Theorem \ref{thm:identification-complete-data competing risk} identifies the cause-$j$-specific risk in the absence of censoring, and Theorem \ref{thm:EIF-complete-data competing risk} derives its EIF. Theorems \ref{thm:identification-complete-data competing risk} and \ref{thm:EIF-complete-data competing risk} are competing risks extensions of Propositions \ref{prop:identification-complete-data} and \ref{prop:EIF-complete-data}.

\begin{theorem}\label{thm:identification-complete-data competing risk}
Under Assumptions~\ref{ass: overlap}, \ref{ass: standard assumpion in Dh competing risks}, \ref{ass: conditional exchangeability competing risks}, and \ref{ass: same joint intervention competing risks}, for $a \in \{1, 1'\}$, we have
{\small
\begin{equation*}
R^j(a, t; \Gamma = 1) = \mathbb{E}\left[\frac{(1 - \Gamma) I\{A = 1\}}{\kappa} \frac{f(\Gamma = 1 \mid \boldsymbol{X})}{f(A = 1, \Gamma = 0 \mid \boldsymbol{X})} \frac{f(\boldsymbol{S} \mid \boldsymbol{X}, A = a, \Gamma = 1)}{f(\boldsymbol{S} \mid \boldsymbol{X}, A = 1, \Gamma = 0)} I\{T \leq t, \Delta = j\}\right],
\end{equation*}}
{\small
\begin{equation*}
R^j(a, t; \Gamma = 1) = \mathbb{E}\left[\frac{\Gamma I\{A = a\}}{\kappa f(A = a \mid \boldsymbol{X}, \Gamma = 1)} F^{T,j}(t \mid \boldsymbol{X}, 1, \boldsymbol{S})\right],
\end{equation*}
}
where $F^{T,j}(t \mid \boldsymbol{x}, a, \boldsymbol{s}) = P\left[T \leq t, \Delta = j \mid \boldsymbol{X} = \boldsymbol{x}, A = a, \boldsymbol{S} = \boldsymbol{s}, \Gamma = 0\right]$, and 
\begin{equation*}
R^j(a, t; \Gamma = 1) = \mathbb{E}\left[\frac{\Gamma}{\kappa} \int_{\boldsymbol{s} \in \boldsymbol{\mathcal{S}}} F^{T,j}(t \mid \boldsymbol{X}, 1, \boldsymbol{s}) f(\boldsymbol{s} \mid \boldsymbol{X}, A = a, \Gamma = 1) d\boldsymbol{s}\right].
\end{equation*}
\end{theorem}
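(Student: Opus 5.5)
The plan is to prove the third (regression/integration) representation first and then derive the other two from it by standard reweighting identities, mirroring the argument for Proposition~\ref{prop:identification-complete-data} with $y(T)$ replaced by the cause-specific indicator $I\{T \le t, \Delta = j\}$. Throughout, write $F^{j}_{a,\boldsymbol{s}}(t\mid \boldsymbol{x},\gamma) = P\{T(a,\boldsymbol{s})\le t, \Delta(a,\boldsymbol{s}) = j \mid \boldsymbol{X}=\boldsymbol{x}, \Gamma=\gamma\}$.

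\emph{Step 1: integral representation of the target.} I would start from $R^j(a,t;\Gamma=1)=\mathbb{E}[P\{T(a)\le t,\Delta(a)=j\mid \boldsymbol{X},\Gamma=1\}\mid\Gamma=1]$. Within $\Gamma=1$, I would condition on $\boldsymbol{S}(a)=\boldsymbol{s}$ and use consistency, $T(a)=T(a,\boldsymbol{S}(a))$. Randomization together with strong sequential ignorability (Assumption~\ref{ass: standard assumpion in Dh competing risks}(ii)--(iii), read for the hypothetical trial) then gives
\[
P\{T(a)\le t,\Delta(a)=j\mid \boldsymbol{X}=\boldsymbol{x},\Gamma=1\}=\int F^{j}_{a,\boldsymbol{s}}(t\mid\boldsymbol{x},1)\, f(\boldsymbol{s}\mid \boldsymbol{x},A=a,\Gamma=1)\,d\boldsymbol{s},
\]
where $f(\boldsymbol{s}\mid\boldsymbol{x},A=a,\Gamma=1)$ is the law of $\boldsymbol{S}(a)$ given $\boldsymbol{x}$, identified by randomization in $\mathcal{D}_b$.

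\emph{Step 2: transport the inner integrand to the historical trial.} For $a=1'$, Assumption~\ref{ass: same joint intervention competing risks} replaces $F^{j}_{1',\boldsymbol{s}}(\cdot\mid\boldsymbol{x},1)$ with $F^{j}_{1,\boldsymbol{s}}(\cdot\mid\boldsymbol{x},1)$; for $a=1$ nothing is needed. Assumption~\ref{ass: conditional exchangeability competing risks} then gives $F^{j}_{1,\boldsymbol{s}}(t\mid\boldsymbol{x},0)$. In the historical trial, randomization and sequential ignorability give $F^{j}_{1,\boldsymbol{s}}(t\mid\boldsymbol{x},0)=P(T\le t,\Delta=j\mid\boldsymbol{x},A=1,\boldsymbol{S}=\boldsymbol{s},\Gamma=0)=F^{T,j}(t\mid\boldsymbol{x},1,\boldsymbol{s})$. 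This conditioning is legitimate because positivity ($f(\boldsymbol{s}\mid\boldsymbol{x},1)>0$) and overlap (Assumption~\ref{ass: overlap}) guarantee that $F^{T,j}$ is well defined on the support of $(\boldsymbol{X},\boldsymbol{S})$ under $\Gamma=1$. Writing $\mathbb{E}[\,\cdot\mid\Gamma=1]=\mathbb{E}[\Gamma\,\cdot\,]/\kappa$ yields the third display.

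\emph{Step 3: the other two representations.} For the second display, I would note that $\mathbb{E}[\Gamma I\{A=a\}h(\boldsymbol{X},\boldsymbol{S})\mid\boldsymbol{X}] = f(\Gamma=1\mid\boldsymbol{X})f(A=a\mid\boldsymbol{X},\Gamma=1)\int h\, f(\boldsymbol{s}\mid\boldsymbol{X},a,1)\,d\boldsymbol{s}$. Dividing by $f(A=a\mid\boldsymbol{X},\Gamma=1)$ recovers the third display after iterating expectations.

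For the first display, I would condition on $(\boldsymbol{X},\boldsymbol{S})$ within $\{\Gamma=0,A=1\}$ and replace $I\{T\le t,\Delta=j\}$ by $F^{T,j}(t\mid\boldsymbol{X},1,\boldsymbol{S})$. Conditioning further on $\boldsymbol{X}$ produces the factor $f(A=1,\Gamma=0\mid\boldsymbol{X})$, which cancels the denominator. The density ratio then converts integration against $f(\boldsymbol{s}\mid\boldsymbol{X},1,0)$ into integration against $f(\boldsymbol{s}\mid\boldsymbol{X},a,1)$, and the remaining weight $f(\Gamma=1\mid\boldsymbol{X})/\kappa$ turns the $\boldsymbol{X}$-marginal into the $\mathcal{P}_b$-marginal.

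\emph{Main obstacle.} The delicate step is Step 2. It must be checked that the joint-intervention potential outcome $\{T(a,\boldsymbol{s}),\Delta(a,\boldsymbol{s})\}$ is independent of $\boldsymbol{S}$ given $\boldsymbol{X}$ jointly with $A$, so that conditioning on both $A=a$ and $\boldsymbol{S}=\boldsymbol{s}$ is valid. I would read Assumption~\ref{ass: standard assumpion in Dh competing risks}(iii) as holding within $A=a$, as the paper's remark after Assumption~\ref{ass: standard assumpion in Dh} indicates. It must also be checked that the supports line up, so that the density ratio is finite; this is where positivity and overlap are invoked.
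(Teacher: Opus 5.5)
Your proposal is correct and follows essentially the same route as the paper. The paper proves this result by rerunning the argument for Proposition~\ref{prop:identification-complete-data} with $y(T)=I\{T\le t,\Delta=j\}$: it first derives the mediation-formula (third) representation and then obtains the other two by iterated expectations and reweighting. Your Step 2 is more explicit than the paper's version, which collapses the no-controlled-direct-effects, transport, and historical-trial ignorability steps into the single equality $\mathbb{E}\{y(T)\mid \boldsymbol{X},A=a,S=s,\Gamma=1\}=\mathbb{E}\{y(T)\mid \boldsymbol{X},A=1,S=s,\Gamma=0\}$.
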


\begin{theorem}
\label{thm:EIF-complete-data competing risk}
The parameter $R^j(a, t; \Gamma = 1)$ is pathwise differentiable in a nonparametric model for complete competing risks data, with efficient influence function $\phi^{j\ast}_a = \phi^j_a - R^j(a, t; \Gamma = 1)$, where $\phi^j_a$ equals
{\small
\begin{align*}
& \frac{(1 - \Gamma) I\{A = 1\}}{\kappa} \frac{f(\Gamma = 1 \mid \boldsymbol{X})}{f(A = 1, \Gamma = 0 \mid \boldsymbol{X})} \frac{f(\boldsymbol{S} \mid \boldsymbol{X}, A = a, \Gamma = 1)}{f(\boldsymbol{S} \mid \boldsymbol{X}, A = 1, \Gamma = 0)} \left\{I\{T \leq t, \Delta = j\} - F^{T,j}(t \mid \boldsymbol{X}, 1, \boldsymbol{S}) \right\} \\
& + \frac{\Gamma I\{A = a\}}{\kappa f(A = a \mid \boldsymbol{X}, \Gamma = 1)} \left\{F^{T,j}(t \mid \boldsymbol{X}, 1, \boldsymbol{S}) - \int_{\boldsymbol{s} \in \boldsymbol{\mathcal{S}}} F^{T,j}(t \mid \boldsymbol{X}, 1, \boldsymbol{s}) f(\boldsymbol{s} \mid \boldsymbol{X}, A, \Gamma = 1) d\boldsymbol{s}\right\} \\
& + \frac{\Gamma}{\kappa} \int_{\boldsymbol{s} \in \boldsymbol{\mathcal{S}}} F^{T,j}(t \mid \boldsymbol{X}, 1, \boldsymbol{s}) f(\boldsymbol{s} \mid  \boldsymbol{X}, A = a, \Gamma = 1) d\boldsymbol{s}.
\end{align*}}
\end{theorem}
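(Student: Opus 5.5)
The plan is to reduce Theorem~\ref{thm:EIF-complete-data competing risk} to Proposition~\ref{prop:EIF-complete-data}. For fixed $t$ and $j$, set $y^j(T,\Delta) = I\{T \le t, \Delta = j\}$. Then $F^{T,j}(t \mid \boldsymbol{x}, 1, \boldsymbol{s})$ plays the role of $\mu(\boldsymbol{x},1,\boldsymbol{s})$, and the vector marker $\boldsymbol{S}$ plays the role of $S$. By the third identification formula of Theorem~\ref{thm:identification-complete-data competing risk}, the estimand is the functional
\[
\Psi(P) = \kappa^{-1}\,\mathbb{E}\Big[\Gamma \int F^{T,j}(t \mid \boldsymbol{X}, 1, \boldsymbol{s}) f(\boldsymbol{s} \mid \boldsymbol{X}, a, \Gamma = 1)\, d\boldsymbol{s}\Big].
\]
This functional depends on the observed-data law only through three pieces: the marginal of $(\boldsymbol{X},\Gamma)$, the conditional density of $\boldsymbol{S}$ given $(\boldsymbol{X}, A=a, \Gamma=1)$, and the regression $F^{T,j}$ in the $\Gamma=0, A=1$ stratum. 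That is exactly the structure treated in Proposition~\ref{prop:EIF-complete-data}, with the scalar outcome $y(T)$ replaced by a binary outcome that is a function of $(T,\Delta)$. Nothing in that derivation used the dimension of $S$ or the specific form of $y$, so one could simply invoke it. For completeness, I would redo the pathwise derivative directly.

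The derivation runs as follows. Take a regular one-dimensional parametric submodel $P_\epsilon$ with score $\dot\ell$. Factor the score along
\[
p(\boldsymbol{x},\gamma)\,p(a\mid \boldsymbol{x},\gamma)\,p(\boldsymbol{s}\mid \boldsymbol{x},a,\gamma)\,p(t,\delta\mid \boldsymbol{x},a,\boldsymbol{s},\gamma),
\]
where the last factor is present only when $\gamma=0$. Differentiate $\Psi(P_\epsilon)$ at $\epsilon=0$ and handle each perturbed component in turn.

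\begin{enumerate}
\item The marginal of $(\boldsymbol{X},\Gamma)$, including the normalizer $\kappa$, contributes $\frac{\Gamma}{\kappa}\{\int F^{T,j} f\,d\boldsymbol{s} - R^j\}$. This is the last term of $\phi^j_a$ after centering.
\item The density $f(\boldsymbol{s}\mid\boldsymbol{x},a,1)$ contributes
\[
\frac{\Gamma I\{A=a\}}{\kappa f(A=a\mid \boldsymbol{X},\Gamma=1)}\Big\{F^{T,j}(t\mid \boldsymbol{X},1,\boldsymbol{S}) - \int F^{T,j} f\, d\boldsymbol{s}\Big\}.
\]
To see this, write $\mathbb{E}[\Gamma h(\boldsymbol{X},\boldsymbol{S}_a)]$ as a reweighted expectation over units with $\Gamma=1, A=a$, then use that the conditional score has mean zero.
\item The regression $F^{T,j}$ contributes the first term. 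This requires changing measure from $(\boldsymbol{X},\boldsymbol{S}) \mid \Gamma=1, A=a$ to $(\boldsymbol{X},\boldsymbol{S}) \mid \Gamma=0, A=1$. The density ratio is
\[
\frac{f(\Gamma=1\mid \boldsymbol{X})}{f(A=1,\Gamma=0\mid\boldsymbol{X})}\cdot\frac{f(\boldsymbol{S}\mid \boldsymbol{X},a,1)}{f(\boldsymbol{S}\mid \boldsymbol{X},1,0)},
\]
and it multiplies the residual $I\{T\le t,\Delta=j\}-F^{T,j}$.
\end{enumerate}

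Summing the three pieces gives $\mathbb{E}[\phi^{j\ast}_a\dot\ell]$. Each piece is a mean-zero function lying in the tangent space of its own factor, and the model is nonparametric. Together these show that $\phi^{j\ast}_a$ is a gradient and, the tangent space being saturated, it is the efficient influence function.

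The main obstacle is step 3. That is where Assumption~\ref{ass: overlap} and the positivity condition in Assumption~\ref{ass: standard assumpion in Dh competing risks}(iv) are needed to guarantee that the density ratio is well defined and that $\phi^{j\ast}_a$ has finite variance. It is also the step where one must check that the competing-risk structure adds nothing. Since $(T,\Delta)$ enters only through the single indicator $I\{T\le t,\Delta=j\}$, the conditional law of $(T,\Delta)$ is perturbed only via $\mathbb{E}[\,\cdot \mid \boldsymbol{X},A,\boldsymbol{S},\Gamma=0]$ of that indicator. The multivariate $\boldsymbol{S}$ only replaces one-dimensional integrals by integrals over $\boldsymbol{\mathcal{S}}$.

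Finally, I would note that the middle term's integral is written with $f(\boldsymbol{s}\mid\boldsymbol{X},A,\Gamma=1)$. Because of the indicator $I\{A=a\}$, this coincides with $A=a$ in the formula, consistent with the statement.
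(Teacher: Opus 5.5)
Your proposal is correct and follows the paper's route: the paper obtains this theorem by substituting $y(T)=I\{T\le t,\Delta=j\}$ into the derivation of Proposition~\ref{prop:EIF-complete-data}, with $F^{T,j}(t\mid\boldsymbol{X},1,\boldsymbol{S})$ in place of $\mu(\boldsymbol{X},1,S)$ and $\boldsymbol{S}$ in place of $S$. Your optional self-contained re-derivation differs only in technique: you factor the score of a parametric submodel, whereas the paper differentiates along a point-mass contamination of the joint density. It yields the same three terms, and the score-based version checks pathwise differentiability a little more directly.
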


Finally, Theorems \ref{thm:identification-censored-data competing risks} and \ref{thm:EIF-censored-data competing risks} are analogous to Theorems \ref{thm:identification-censored-data} and \ref{thm:EIF-censored-data} and derive the identification functional and the corresponding efficient influence function for cause-$j$-specific risk under the ignorable censoring assumption formalized in Assumption \ref{ass: standard right-censoring assumption competing risks}.

\begin{assumption}\label{ass: standard right-censoring assumption competing risks}
{\rm (i. ignorable censoring)} $C \perp \{T, \Delta\} \mid  \boldsymbol{X}, A, \boldsymbol{S}$; {\rm (ii. finite horizon)} The transformation $y(\cdot)$ admits a maximal horizon $0 < H < \infty$ such that $y(t) = y(H)$ for all $t > H$; {\rm (iii. positivity of censoring)} $P(C < H \mid \boldsymbol{X}, A, \boldsymbol{S}) < 1$.
\end{assumption}

\begin{theorem}\label{thm:identification-censored-data competing risks}
Under Assumptions~\ref{ass: overlap} and \ref{ass: standard assumpion in Dh competing risks}--\ref{ass: standard right-censoring assumption competing risks}, for $a \in \{1, 1'\}$, we have
{\small
\begin{equation*}
R^j(a, t; \Gamma = 1) = \mathbb{E}\left[\frac{(1 - \Gamma) I\{A = 1\}}{\kappa} \frac{f(\Gamma = 1 \mid \boldsymbol{X})}{f(A = 1, \Gamma = 0 \mid  \boldsymbol{X})} \frac{f(\boldsymbol{S} \mid \boldsymbol{X}, A = a, \Gamma = 1)}{f(\boldsymbol{S} \mid \boldsymbol{X}, A = 1, \Gamma = 0)} \frac{I\{\Tilde{T} \leq t, \Delta = j\}}{G^{C}(\Tilde{T} \mid \boldsymbol{X}, A, \boldsymbol{S})}\right],
\end{equation*}}
where $G^{C}(t \mid \boldsymbol{x}, a, \boldsymbol{s}) = P\left(C > t \mid \boldsymbol{X} = \boldsymbol{x}, A = a, \boldsymbol{S} = \boldsymbol{s}, \Gamma = 0\right)$,
{\small
\begin{equation*}
R^j(a, t; \Gamma = 1) = \mathbb{E}\left[\frac{\Gamma I\{A = a\}}{\kappa f(A = a \mid \boldsymbol{X}, \Gamma = 1)} F^{T,j}(t \mid \boldsymbol{X}, 1, \boldsymbol{S})\right],
\end{equation*}}
and 
{\small
\begin{equation*}
R^j(a, t; \Gamma = 1) = \mathbb{E}\left[\frac{\Gamma}{\kappa} \int_{\boldsymbol{s} \in \boldsymbol{\mathcal{S}}} F^{T,j}(t \mid \boldsymbol{X}, 1, \boldsymbol{s}) f(\boldsymbol{s} \mid \boldsymbol{X}, A = a, \Gamma = 1) d\boldsymbol{s}\right].
\end{equation*}}
\end{theorem}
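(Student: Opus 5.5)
The plan is to reduce everything to Theorem~\ref{thm:identification-complete-data competing risk}, which already gives the three representations in terms of $F^{T,j}(t\mid\boldsymbol{x},1,\boldsymbol{s})$ when event times are fully observed. It therefore suffices to show two things: that $F^{T,j}$, defined through the latent $(T,\Delta)$, is identified from the censored observed data; and that the inverse-probability-of-censoring weighted indicator has the same conditional mean as the uncensored indicator. Concretely, the second and third displays follow verbatim from Theorem~\ref{thm:identification-complete-data competing risk} once we note that, under Assumption~\ref{ass: standard right-censoring assumption competing risks}(i) and (iii), $F^{T,j}(t\mid\boldsymbol{x},1,\boldsymbol{s})$ is a functional of the observed-data law on $[0,H]$. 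It equals the Aalen--Johansen form $\int_0^t G^{T}(u^-\mid\cdot)\,\Lambda^j(du\mid\cdot)$, where the cause-specific hazard $\Lambda^j$ is identified by the observed subdistribution of $(\Tilde T,\Delta=j)$ divided by the observed at-risk probability $P(\Tilde T\ge u\mid\cdot)$. Censoring positivity, together with $t\le H$, guarantees that the denominator is positive.

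For the first (weighting) display, I would condition on $(\boldsymbol{X},A,\boldsymbol{S},\Gamma=0)$ and compute
\[
\mathbb{E}\!\left[\frac{I\{\Tilde T\le t,\Delta=j\}}{G^C(\Tilde T\mid \boldsymbol{X},A,\boldsymbol{S})}\,\Big|\,\boldsymbol{X},A,\boldsymbol{S},\Gamma=0\right].
\]
On the event $\{\Delta=j\}$ (with $j\ge1$) we have $T\le C$ and $\Tilde T=T$. The integrand is then $I\{T\le t,\text{cause}=j\}\,I\{C\ge T\}/G^C(T\mid\cdot)$. By iterated expectation given $T$ and the cause, and by ignorable censoring $C\perp\{T,\Delta\}\mid \boldsymbol{X},A,\boldsymbol{S}$, the factor $I\{C\ge T\}$ averages to $G^C(T^-\mid\cdot)$. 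This cancels the denominator (under the usual convention about ties or continuity of $C$), so the conditional mean equals $F^{T,j}(t\mid\boldsymbol{X},A,\boldsymbol{S})$. Substituting back into the outer expectation with the weight $\frac{(1-\Gamma)I\{A=1\}}{\kappa}\cdots$, which is $(\boldsymbol{X},A,\boldsymbol{S},\Gamma)$-measurable, reproduces the first display of Theorem~\ref{thm:identification-complete-data competing risk} with $I\{T\le t,\Delta=j\}$ replaced by its conditional mean. That expression is equal by the tower property, and the result follows.

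The main obstacle is bookkeeping rather than depth. Two points need care. First, the censoring assumption is stated in the historical trial only, so all conditional quantities ($G^C$, $F^{T,j}$) must be taken with $\Gamma=0$, and the factor $I\{A=1\}$ lets us replace $A$ by $1$ inside $G^C$ and $F^{T,j}$. Second, the claim $P(C\ge T\mid T,\cdot)>0$ on $\{T\le t\}$ must be justified from positivity of censoring up to $H$. I would state the argument for $t\le H$ and handle the left-limit/ties issue by assuming $T$ and $C$ have no common atoms.
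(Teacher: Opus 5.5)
Your proposal is correct and follows essentially the same route as the paper. The paper's proof is also an IPCW conditional-expectation step given $(\boldsymbol{X},A,\boldsymbol{S},\Gamma=0)$, showing that the weighted observed indicator has conditional mean $F^{T,j}$, followed by the complete-data argument of Theorem~\ref{thm:identification-complete-data competing risk}; your extra remarks on identifying $F^{T,j}$ via the Aalen--Johansen form and on the left-limit/ties issue in $G^C$ fill in details the paper leaves implicit (it simply writes $I\{C>T\}$).
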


\begin{theorem}\label{thm:EIF-censored-data competing risks}
The parameter $R^j(a, t; \Gamma = 1)$ is pathwise differentiable in a nonparametric model for censored competing risks data, with efficient influence function $\phi^{C,j\ast}_{a,t} = \phi^{C,j}_{a,t} - R^j(a, t; \Gamma = 1)$, where $\phi^{C,j}_{a,t}$ equals
{\scriptsize
\begin{align*}
& \frac{(1 - \Gamma) I\{A = 1\}}{\kappa} \frac{f(\Gamma = 1 \mid \boldsymbol{X})}{f(A = 1, \Gamma = 0 \mid \boldsymbol{X})} \frac{f(\boldsymbol{S} \mid \boldsymbol{X}, A = a, \Gamma = 1)}{f(\boldsymbol{S} \mid \boldsymbol{X}, A = 1, \Gamma = 0)} \sum_{k=1}^{J} \int \frac{I\{u \leq t\} h^{k}(t,u)}{G^{C}(\Tilde{T} \mid \boldsymbol{X}, A, \boldsymbol{S})} \left\{N^{k}(du) - I\{\Tilde{T} \geq u\}\Lambda^{k}(du \mid \boldsymbol{X}, A, \boldsymbol{S})\right\} \\
& + \frac{\Gamma I\{A = a\}}{\kappa f(A = a \mid \boldsymbol{X}, \Gamma = 1)} \left\{F^{T,j}(t \mid \boldsymbol{X}, 1, \boldsymbol{S}) - \int_{\boldsymbol{s} \in \boldsymbol{\mathcal{S}}} F^{T,j}(t \mid \boldsymbol{X}, 1, \boldsymbol{s}) f(\boldsymbol{s} \mid \boldsymbol{X}, A, \Gamma = 1) d\boldsymbol{s}\right\} \\
& + \frac{\Gamma}{\kappa} \int_{\boldsymbol{s} \in \boldsymbol{\mathcal{S}}} F^{T,j}(t \mid \boldsymbol{X}, 1, \boldsymbol{s}) f(\boldsymbol{s} \mid \boldsymbol{X}, A = a, \Gamma = 1) d\boldsymbol{s},
\end{align*}}
where $N^{k}(u) = I\{T \leq u, \Delta = k\}$, $\Lambda^k(u \mid \boldsymbol{x}, a, \boldsymbol{s})$ is the conditional cumulative hazard function of $\{T, \Delta = k\}$, $G^{T}(u \mid \boldsymbol{x}, a, \boldsymbol{s}) = \exp{\left\{-\sum_{k=1}^{J} \Lambda^k(u \mid \boldsymbol{x}, a, \boldsymbol{s})\right\}}$, and we define
{
\begin{equation*}
    h^{k}(t,u) = \begin{cases}
        1 - \frac{F^{T,k}(t \mid \boldsymbol{X}, A, \boldsymbol{S}) - F^{T,k}(u \mid \boldsymbol{X}, A, \boldsymbol{S})}{G^{T}(u \mid \boldsymbol{X}, A, \boldsymbol{S})}, & \text{when } k = j,\\
        -\frac{F^{T,k}(t \mid \boldsymbol{X}, A, \boldsymbol{S}) - F^{T,k}(u \mid \boldsymbol{X}, A, \boldsymbol{S})}{G^{T}(u \mid \boldsymbol{X}, A, \boldsymbol{S})}, & \text{when } k \neq j. 
    \end{cases}
\end{equation*}}
\end{theorem}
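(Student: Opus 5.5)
Our plan is to derive the censored-data EIF of Theorem \ref{thm:EIF-censored-data competing risks} from the full-data EIF of Theorem \ref{thm:EIF-complete-data competing risk}, using the standard mapping of a full-data influence function into the observed-data model under coarsening at random (Assumption \ref{ass: standard right-censoring assumption competing risks}(i)). The observed-data parameter is the functional in the third display of Theorem \ref{thm:identification-censored-data competing risks}. It depends on the law of $(\Tilde{T},\Delta)$ given $(\boldsymbol{X},A,\boldsymbol{S},\Gamma=0)$ only through $F^{T,j}(t\mid \boldsymbol{X},1,\boldsymbol{S})$. By the product-integral representation,
\begin{equation*}
F^{T,j}(t\mid\cdot)=\int_0^t G^{T}(u^-\mid\cdot)\,\Lambda^j(du\mid\cdot),
\end{equation*}
and this is identified through the observable cause-specific subdistribution hazards.

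The derivation proceeds through the chain rule for pathwise derivatives along a one-dimensional submodel $P_\epsilon$. The law factorizes into the marginal of $\boldsymbol{X}$, the law of $(\Gamma,A)\mid\boldsymbol{X}$, the law of $\boldsymbol{S}\mid(\boldsymbol{X},A,\Gamma)$, and the law of $(\Tilde{T},\Delta)\mid(\boldsymbol{X},A,\boldsymbol{S},\Gamma=0)$. The derivatives with respect to every factor except the last are unchanged from the uncensored case. They therefore reproduce the second and third lines of $\phi^{j}_a$ in Theorem \ref{thm:EIF-complete-data competing risk}, and hence of $\phi^{C,j}_{a,t}$.

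For the last factor, the derivative of $\frac{\Gamma}{\kappa}\int F^{T,j}(t\mid \boldsymbol{X},1,\boldsymbol{s}) f(\boldsymbol{s}\mid\boldsymbol{X},a,\Gamma=1)d\boldsymbol{s}$ is obtained by reweighting. We change measure from $(\Gamma=1,\boldsymbol{X},\boldsymbol{S}\sim f(\cdot\mid \boldsymbol{X},a,1))$ to $(\Gamma=0,A=1,\boldsymbol{X},\boldsymbol{S})$. This produces the density-ratio weight $\frac{(1-\Gamma)I\{A=1\}}{\kappa}\frac{f(\Gamma=1\mid\boldsymbol{X})}{f(A=1,\Gamma=0\mid\boldsymbol{X})}\frac{f(\boldsymbol{S}\mid\boldsymbol{X},a,1)}{f(\boldsymbol{S}\mid\boldsymbol{X},1,0)}$, which multiplies the EIF of the conditional functional $F^{T,j}(t\mid \boldsymbol{x},1,\boldsymbol{s})$ in the single-sample competing risks model. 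Positivity and overlap (Assumption \ref{ass: overlap}, Assumption \ref{ass: standard assumpion in Dh competing risks}(iv)) make this change of measure legitimate.

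The main obstacle is the remaining step: computing the observed-data EIF of the cause-$j$ cumulative incidence at a fixed covariate stratum under independent censoring. We will write $F^{T,j}(t)$ as a Hadamard-differentiable map of $(\Lambda^1,\dots,\Lambda^J)$. Its Gateaux derivative in the direction $\delta\Lambda^k$ is
\begin{equation*}
\int_0^t h^k(t,u)\,G^{T}(u^-)\,\delta\Lambda^k(du),
\end{equation*}
where $h^k$ is as in the statement. This follows because perturbing $\Lambda^j$ changes both the integrator $\Lambda^j$ and the factor $G^T$, while perturbing $\Lambda^{k}$ for $k\neq j$ changes only $G^T$. The contribution through $G^T$ is
\begin{equation*}
-\{F^{T,k}(t)-F^{T,k}(u)\}/G^T(u)\,\delta\Lambda^k(du)\cdot G^T(u^-)
\end{equation*}
for $k=j$, and we will compute it carefully by Duhamel's equation.

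We will then substitute the known influence function of the Nelson--Aalen-type hazard estimator under ignorable censoring,
\begin{equation*}
\frac{N^k(du)-I\{\Tilde{T}\ge u\}\Lambda^k(du)}{P(\Tilde{T}\ge u\mid\cdot)}.
\end{equation*}
The at-risk probability factorizes as $G^T(u^-)G^C(u^-)$ by Assumption \ref{ass: standard right-censoring assumption competing risks}(i). The $G^T$ factors cancel, leaving the integrand $I\{u\le t\}h^k(t,u)/G^C\cdot\{N^k(du)-I\{\Tilde{T}\ge u\}\Lambda^k(du)\}$. This is the first line of $\phi^{C,j}_{a,t}$; the denominator should read $G^C(u^-)$, up to continuity conventions.

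To finish, we will confirm that the summed expression is mean-zero and that it equals the canonical gradient. Nonparametric saturation of the observed-data model makes the gradient unique, so a single gradient is the EIF. As a sanity check, with $J=1$ the result should reduce to Theorem \ref{thm:EIF-censored-data}.
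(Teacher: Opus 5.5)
Your route differs from the paper's. The paper applies Tsiatis's AIPCW map $\Delta\phi^\ast/G^C+\int L/G^C\,dM^C$ to the complete-data EIF of Theorem~\ref{thm:EIF-complete-data competing risk} and defers the competing-risks algebra to a citation. You instead factor the likelihood, reweight the outcome-factor contribution by the density ratio, and push Nelson--Aalen influence functions through the Aalen--Johansen map. The reweighting, the factorization $P(\tilde T\ge u\mid\cdot)=G^T(u^-)G^C(u^-)$, and the cancellation of $G^T(u^-)$ are all correct. You are also right that the denominator must be $G^C(u^-)$ rather than $G^C(\tilde T\mid\cdot)$; this is a substantive correction, not a continuity convention, because it changes the compensator term. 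The step that fails is your claim that the Gateaux derivative in direction $\delta\Lambda^k$ is $\int_0^t h^k(t,u)G^T(u^-)\delta\Lambda^k(du)$ ``with $h^k$ as in the statement.'' For $k\neq j$ this is false.

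For $k\neq j$, $\Lambda^k$ enters $F^{T,j}(t)=\int_0^t G^T(u^-)\Lambda^j(du)$ only through $G^T$. Duhamel's formula $\delta G^T(u^-)=-\int_{(0,u)}G^T(v^-)\{G^T(u^-)/G^T(v)\}\,\delta\Lambda^k(dv)$ and Fubini then give
\begin{equation*}
\int_0^t \delta G^T(u^-)\,\Lambda^j(du) = -\int_0^t G^T(v^-)\,\frac{F^{T,j}(t)-F^{T,j}(v)}{G^T(v)}\,\delta\Lambda^k(dv),
\end{equation*}
because after swapping the order the inner integral is against $\Lambda^j$, not $\Lambda^k$. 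So the weight is $I\{k=j\}-\{F^{T,j}(t)-F^{T,j}(u)\}/G^T(u)$ for every $k$: the target-cause incidence appears in all $J$ terms. The displayed $h^k$ instead uses $F^{T,k}$ when $k\neq j$. Your own remark that perturbing $\Lambda^k$, $k\neq j$, ``changes only $G^T$'' already forces the $F^{T,j}$ form. The paper's route agrees, since it needs $\mathbb{E}[I\{T\le t,\Delta=j\}\mid T>u,\boldsymbol{X},A,\boldsymbol{S}]=\{F^{T,j}(t)-F^{T,j}(u)\}/G^T(u)$.

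One check exposes the discrepancy. Summing the correct weights over $j$ gives $G^T(t)/G^T(u)$, which recovers the single-event martingale term of Theorem~\ref{thm:EIF-censored-data}. The displayed weights instead sum to $1-J\{F^{T,k}(t)-F^{T,k}(u)\}/G^T(u)$. Your planned mean-zero check cannot catch this, since any predictable integrand against the cause-specific martingales yields a mean-zero term; only the gradient identity distinguishes them.

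Carried out correctly, your argument proves the theorem with three corrections: $F^{T,k}$ replaced by $F^{T,j}$ in the $k\neq j$ branch of $h^k$, $G^C(u^-)$ inside the integral, and $N^k(u)=I\{\tilde T\le u,\Delta=k\}$. You should state these corrections explicitly rather than claim agreement with the displayed formula. A minor point: the identified quantities are cause-specific hazards, not subdistribution hazards.
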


Equipped with the EIF, a debiased machine learning estimator for the cause-$j$-specific cumulative incidence function can be constructed as described in Section~\ref{sec: estimation and inference}. The estimator retains the same multiple robustness property and asymptotic linearity, and both pointwise and uniform inference can be conducted in exactly the same manner as in Section~\ref{sec: estimation and inference}.

%% file: 6.Simulation.tex
\section{Simulation}
\label{sec: simulation}
\subsection{Simulation setup}
We evaluated the finite-sample performance of the proposed estimator under various data generating processes. We generated the historical trial dataset $\mathcal{D}_h$ as follows:
\begin{description}
   \item[Baseline covariates $\boldsymbol{X}$ in $\mathcal{D}_h$] Baseline covariates in the historical trial are 6-dimensional and follow a multivariate distribution with $\mathbf{\mu} = (c, c, c, 0.8c, 0.8c, 0.8c)^T$, where $c$ is to be specified later, and $\Sigma = 0.5\cdot\textbf{I}_{6}$, where $\textbf{I}_{d}$ denotes a dimension-$d$ identity matrix.

   \item[Treatment assignment $A$ in $\mathcal{D}_h$] Assignment $A \in \{0, 1\}$ is randomized with probability $0.5$.

   \item[Peak immune biomarker $S$ in $\mathcal{D}_h$] The peak immune biomarker $S$ follows:
   \begin{equation*}
       S = 2 + 0.5 X_1 - X_2 + 1.5 X_3 + A \cdot (X_2 - 0.5 X_4 + 1.5 X_5 - X_6) + \epsilon,
   \end{equation*}
   where $\epsilon \sim \mathcal{N}(0, 1)$ follows a standard normal distribution.

   \item[Time-to-event outcome $T$ in $\mathcal{D}_h$] Participants' time-to-event outcome was generated from an exponential distribution with rate parameter:
   \begin{equation*}
       \lambda_T = 0.1\cdot\exp\left\{0.5X_1 - 0.5X_5 + 0.3S\cdot(1.3X_2 + 0.4X_4) + A\cdot(0.2X_2 + 0.6X_3 - 1.2X_6)\right\}.
   \end{equation*}
   \item[Time-to-censoring $C$ in $\mathcal{D}_h$] Participants' time-to-censoring was generated from an exponential distribution with rate parameter: $\lambda_C = 0.03\cdot \exp\left\{-0.4X_2 + 0.1S\right\}.$
\end{description}

Under this data-generating process, a participant’s peak immune marker level $S$ depends on treatment assignment $A$ and baseline covariates $\boldsymbol{X}$; the time-to-event $T$ depends on $\boldsymbol{X}$ and $S$; and the censoring mechanism is covariate-dependent.

In addition to $\mathcal{D}_h$, an immunobridging dataset $\mathcal{D}_b$ was further generated as follows:
\begin{description}
   \item[Baseline covariates in $\mathcal{D}_b$] Baseline covariates in the immunobridging dataset are also 6-dimensional and follow a multivariate distribution with $\mathbf{\mu} = (c, c, c, 0.8c, 0.8c, 0.8c)^T$, where $c$ is to be specified later, and $\Sigma = 0.5\cdot\textbf{I}_{6}$.

   \item[Treatment assignment in $\mathcal{D}_b$] The treatment assignment $A \in \{1, 1'\}$ is randomized with probability $0.5$ in the immunobridging study.

   \item[Peak immune biomarker $S$ in $\mathcal{D}_b$] The peak immune biomarker $S$ follows:
   \begin{equation*}
       S = 4 + 1.5X_1 - X_3 + 1.5X_6 + (\mathbbm{1}\{A = 1\} + 2\cdot\mathbbm{1}\{A = 1'\})\cdot(0.5X_2 + 0.5X_3 + 0.5X_4 - X_6) + \epsilon,
   \end{equation*}
   where $\epsilon$ follows a standard normal distribution.
\end{description}

According to this data-generating process, the peak immune marker in the immunobridging study also depends on both the vaccine and baseline covariates, and this relationship is different between the historical trial and the immunobridging study.

The parameter $c$ in the data-generating process of $\boldsymbol{X}$ controls the degree of overlap of the covariates $\boldsymbol{X}$ between $\mathcal{D}_h$ and $\mathcal{D}_b$. The first factor we vary is the parameter $c$:
\begin{description}
  \item[Factor 1: overlap parameter \(c\).] We consider four values of \(c\): \(c = 0\) (full overlap), \(c = 0.05\) (high overlap), \(c = 0.125\) (moderate overlap), and \(c = 0.25\) (poor overlap).
\end{description}

\noindent The second factor we vary is the sample size of historical trial dataset, $n_h$:
\begin{description}
    \item[Factor 2: sample size $n_h$ of $\mathcal{D}_h$.] We consider four choices of $n_h$: $1000$, $2000$, $3000$, and $4000$.  
\end{description}

For each $n_h$, we set the sample size of the immunobridging trial dataset to $n_b = n_h/4$. The target parameter is $\theta=P(T(A = 1') > 5)$, where $t = 5$ is chosen to correspond approximately to the maximum follow-up times. For each data generating process specified by the overlapping parameter $c$ and the sample size $n_h$, we calculated the efficient influence function-based DML estimator $\widehat{\theta}$ and the associated pointwise confidence interval. When estimating nuisance functions, the estimator adopted ensemble machine learning consisting of (1) generalized linear model with interaction, random forest, and generalized additive model for regression and classification; (2) Kaplan-Meier estimator, Cox regression with interaction, and generalized additive model for modeling the survival and censoring processes. These machine learning tools were implemented using the packages $\textsf{SuperLearner}$ \citep{Polley2024superlearner} and $\textsf{survSuperLearner}$ \citep{survSuperLearner} in $\textsf{R}$.


\subsection{Simulation results}
\label{subsec: simulation results}
Figure \ref{fig: simu res t = 5 ML} shows the sampling distributions of the proposed estimators for $P(T(A = 1') > 5)$ across $4 \times 4 = 16$ data-generating processes. The true parameter values are indicated in each panel as a dashed red line. Figure \ref{fig: simu res t = 5 ML} suggest the proposed estimator is approximately normally distributed and centered closely around the true estimand values. 

Table~\ref{tbl: simu res ML} further summarizes the mean, median, bias, percentage bias, root mean squared error, average standard error, and coverage of the estimated 95\% confidence intervals for the proposed estimator across different settings. Across the scenarios considered, the proposed estimator exhibits minimal bias, almost always below 1\%, even for moderately large sample sizes (e.g., \(n_h = 2000\) and \(n_b = 500\)). In most settings, the empirical coverage of the 95\% confidence interval is close to the nominal level.

\begin{figure}[ht]
    \centering
\includegraphics[width=0.75\linewidth]{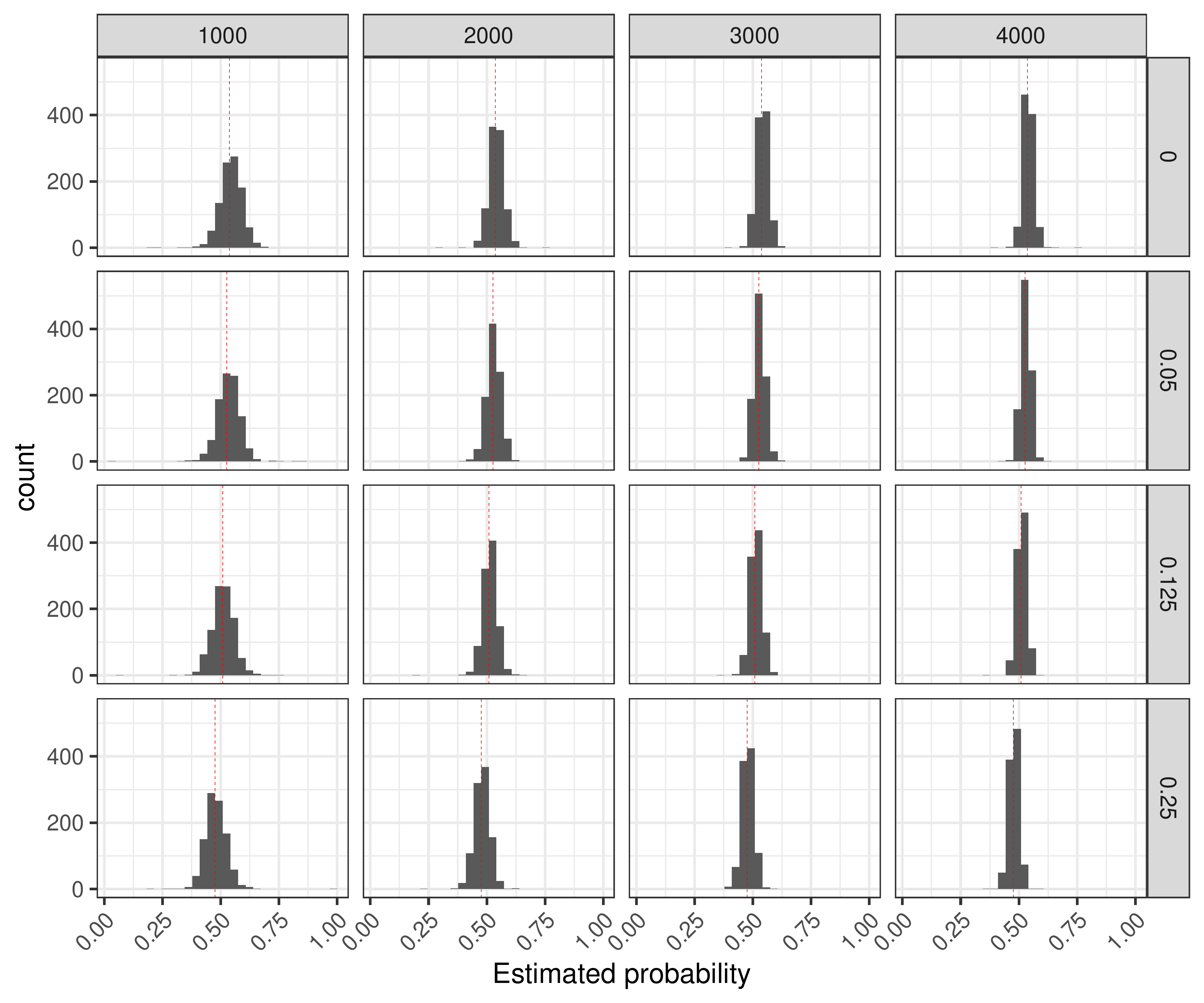}
    \caption{Sampling distributions of $\widehat{\theta}$ for $P(T(A = 1') > 5)$ corresponding to different data-generating processes specified by the sample size $n_h$ (columns) and the overlapping parameter $c$ (rows).  The red dashed lines correspond to the true parameter values.}
    \label{fig: simu res t = 5 ML}
\end{figure}

\begin{table}[ht]
\centering
\caption{Mean, median, bias, percentage of bias, root mean squared error, average standard error, and coverage of the proposed DML estimator for $P(T(A = 1') > 5)$ across a range of data-generating processes, characterized by the overlap parameter $c$ and the historical clinical trial sample size $n_h$, setting the the immunobridging trial sample size $n_b = n_h/4$. Simulations were repeated $1000$ times. }
\label{tbl: simu res ML}
\resizebox{0.8\textwidth}{!}{
\begin{tabular}{ccccccccccccccc}
  \hline
 $n_h$ & $c$ & \textbf{Mean} & \textbf{Median} &\textbf{Bias} &$\%$ \textbf{Bias} & \textbf{RMSE} & \textbf{Ave. SE} & \textbf{Coverage}\\ 
  \hline
1000 & 0.00 & 0.54 & 0.55 & 0.007 & 1.29\% & 0.049 & 0.049 & 96.6\% \\ 
  1000 & 0.05 & 0.53 & 0.53 & 0.008 & 1.48\% & 0.052 & 0.049 & 96.1\% \\ 
  1000 & 0.125 & 0.51 & 0.51 & 0.002 & 0.33\% & 0.049 & 0.047 & 95.8\% \\ 
  1000 & 0.25 & 0.48 & 0.48 & 0.004 & 0.80\% & 0.049 & 0.045 & 96.5\% \\ 
  2000 & 0.00 & 0.54 & 0.54 & 0.005 & 0.90\% & 0.033 & 0.033 & 96.2\% \\ 
  2000 & 0.05 & 0.53 & 0.53 & 0.002 & 0.46\% & 0.031 & 0.032 & 95.6\% \\ 
  2000 & 0.125 & 0.51 & 0.51 & 0.004 & 0.84\% & 0.033 & 0.032 & 96.8\% \\ 
  2000 & 0.25 & 0.48 & 0.48 & 0.003 & 0.70\% & 0.034 & 0.030 & 93.5\% \\ 
  3000 & 0.00 & 0.54 & 0.54 & 0.004 & 0.77\% & 0.026 & 0.026 & 96.3\% \\ 
  3000 & 0.05 & 0.53 & 0.53 & 0.003 & 0.57\% & 0.025 & 0.026 & 96.3\% \\ 
  3000 & 0.125 & 0.51 & 0.51 & 0.005 & 0.90\% & 0.026 & 0.025 & 93.8\% \\ 
  3000 & 0.25 & 0.48 & 0.48 & 0.003 & 0.56\% & 0.025 & 0.024 & 94.5\% \\ 
  4000 & 0.00 & 0.54 & 0.54 & 0.003 & 0.61\% & 0.024 & 0.023 & 94.9\% \\ 
  4000 & 0.05 & 0.53 & 0.53 & 0.003 & 0.66\% & 0.021 & 0.022 & 96.2\% \\ 
  4000 & 0.125 & 0.51 & 0.51 & 0.003 & 0.58\% & 0.022 & 0.022 & 95.8\% \\ 
  4000 & 0.25 & 0.48 & 0.48 & 0.003 & 0.60\% & 0.022 & 0.021 & 93.7\% \\ 
   \hline
\end{tabular}}
\end{table}

%% file: 7.Case-Study.tex
\section{The COVID-19 Variant Immunologic Landscape (COVAIL) trial}
\label{sec: case study COVAIL}
\subsection{Stages 2 and 4 of COVAIL}
\label{subsec: case study stages 2 and 4}
We illustrate Task I and our proposed method using Stages 2 and 4 of the COVAIL trial, conducted when Omicron BA.5 was the predominant circulating strain, with $S$ defined as the neutralizing antibody titer measured against the Omicron BA.4/5 strain. In Stage 2 of COVAIL, 313 participants were enrolled and randomized to receive one of six Pfizer-BioNTech mRNA vaccines, with the different vaccines defined by whether a single or two SARS-CoV-2 strains were represented in the vaccine construct (i.e., monovalent or bivalent), and the lineage of the strain: monovalent Prototype, monovalent Omicron, monovalent Beta, bivalent Beta + Omicron, bivalent Beta + Prototype, and bivalent Prototype + Omicron. Here, the Prototype lineage refers to the original ancestral lineage that circulated in early 2020, during the initial phase of the COVID-19 pandemic. We consider the $n = 154$ Stage-2 participants randomized to an Omicron-containing Pfizer-BioNTech mRNA vaccine (monovalent Omicron, bivalent Beta + Omicron, or bivalent Prototype + Omicron) and the $n = 47$ Stage-2 participants randomized to the monovalent Prototype Pfizer-BioNTech mRNA vaccine in the per-protocol correlates cohort. Because the sample size within each individual arm is limited, we follow \citet{zhang2025neutralizing} and pool the monovalent Omicron, bivalent Beta + Omicron, and bivalent Prototype + Omicron arms into a single Omicron-containing group.


\begin{figure}[!htb]
    \centering
    \begin{minipage}{.5\textwidth}
        \centering
    \includegraphics[width=0.97\linewidth]{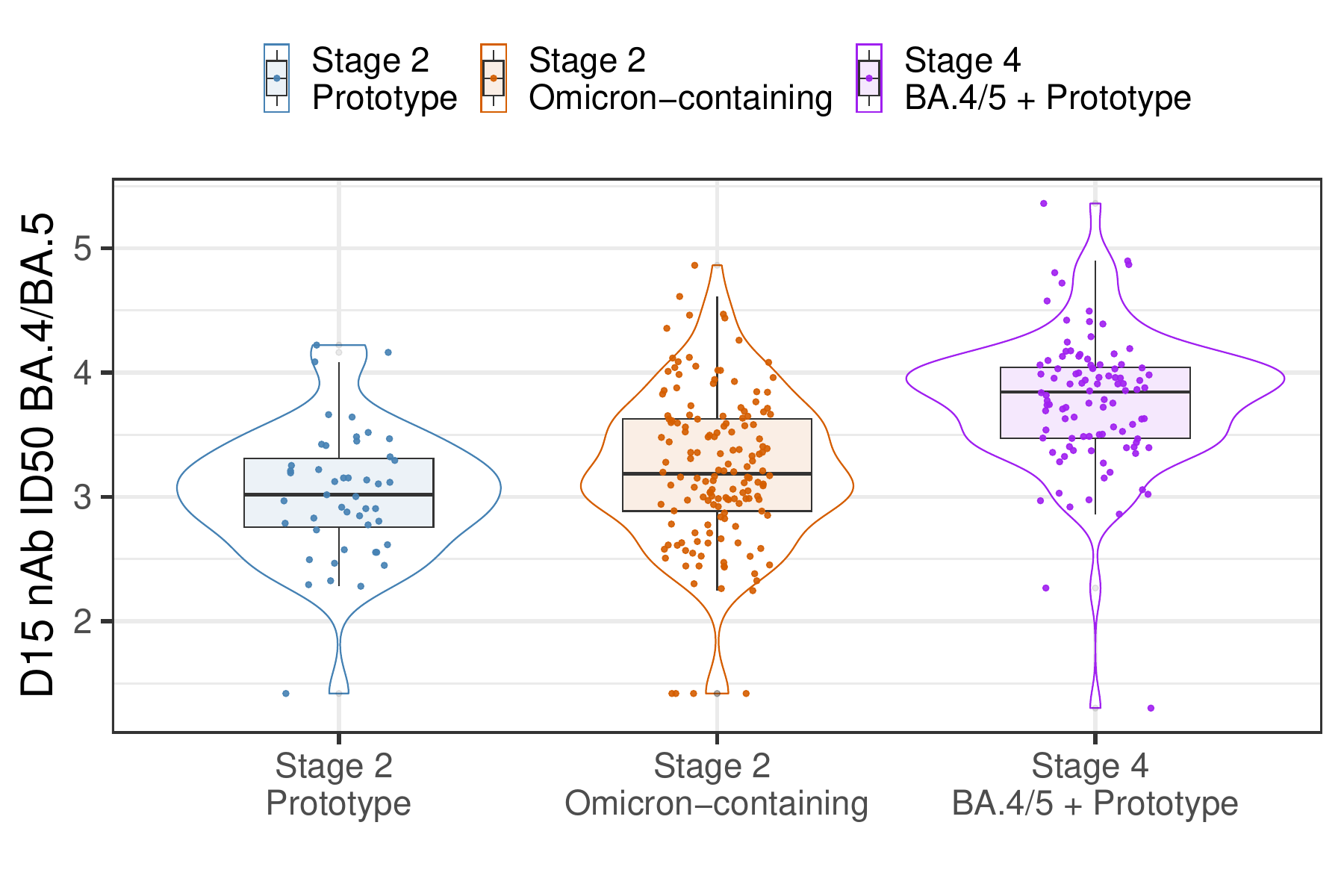}
      
    \end{minipage}%
    \begin{minipage}{0.5\textwidth}
        \centering
    \includegraphics[width=0.97\linewidth]{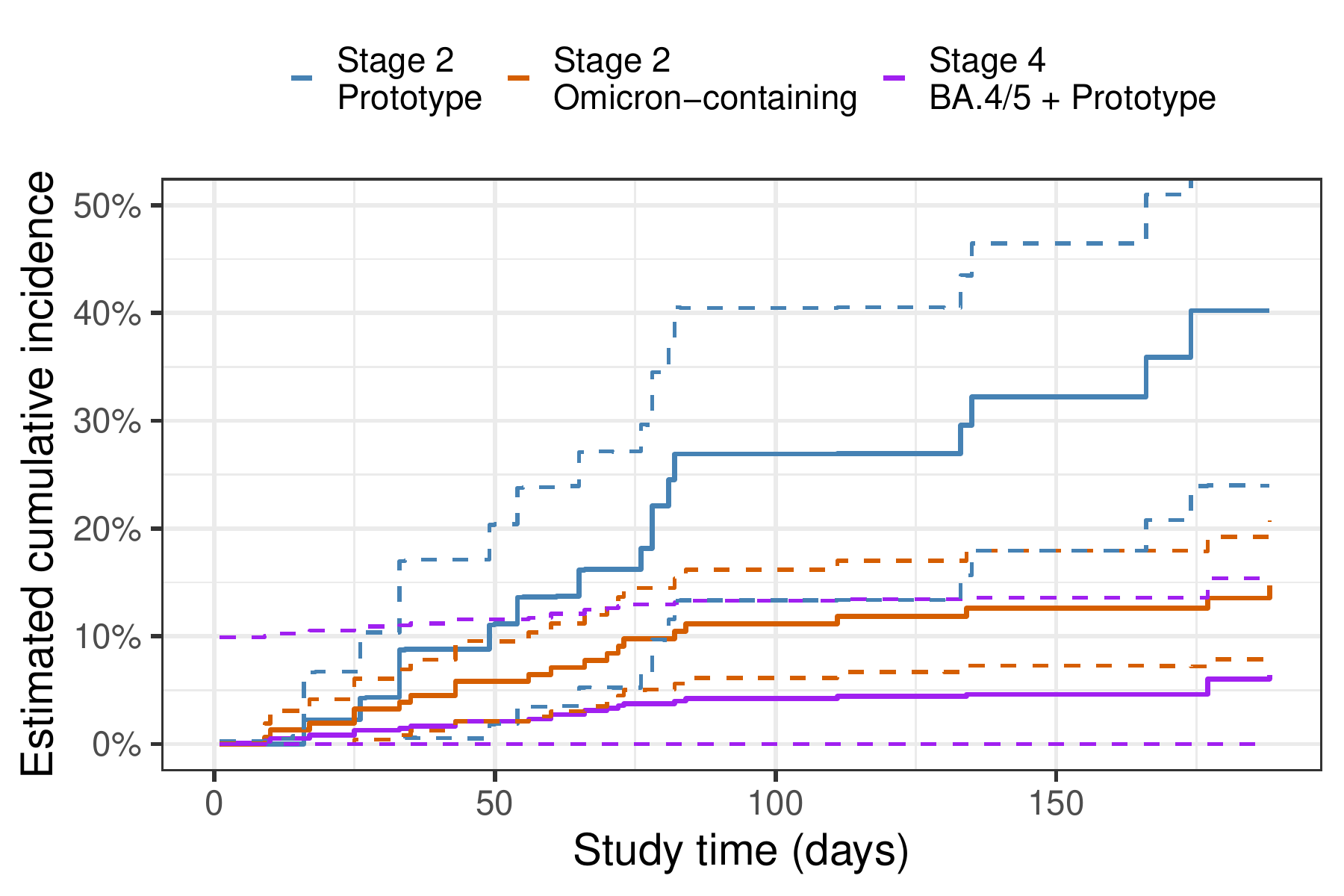}
      
    \end{minipage}
    \caption{\small\textbf{Left panel}: Boxplots and violin plots of Day 15 (D15) neutralizing antibody titers against Omicron BA.4/BA.5 among Stage-2 Prototype Pfizer-BioNTech vaccinee (blue), Stage-2 Omicron-containing Pfizer-BioNTech vaccinees (orange), and Stage-4 BA.4/5 + Prototype Pfizer-BioNTech vaccinees (purple). \textbf{Right panel}: Cumulative incidence curves for Stage-2 Prototype Pfizer-BioNTech vaccinees (blue) and Stage-2 Omicron-containing Pfizer-BioNTech vaccinees (orange), along with the estimated cumulative incidence curve for Stage-4 BA.4/5 + Prototype Pfizer-BioNTech vaccinees (purple). The cumulative incidence curve for Stage-4 BA.4/5 + Prototype Pfizer-BioNTech vaccinees (purple) was inferred using the proposed method based on the neutralizing antibody titer and COVID-19 outcome data of Stage-2 Omicron-containing Pfizer-BioNTech vaccinees ($\mathcal{D}_h$) as well as the neutralizing antibody titer data of Stage-4  BA.4/5 + Prototype Pfizer-BioNTech vaccinees ($\mathcal{D}_b$). Dashed lines indicate 95\% pointwise confidence intervals.}
    \label{fig: real data Pfizer}
\end{figure}

The first two boxplots and violin plots in the left panel of Figure \ref{fig: real data Pfizer} show the distribution of Day 15 neutralizing antibody titers against Omicron BA.4/BA.5 (nAb-ID50 BA.4/BA.5), measured approximately 15 days after receipt of the study vaccine, among Stage-2 Prototype Pfizer-BioNTech vaccinees (blue) and Stage-2 Omicron-containing Pfizer-BioNTech vaccinees (orange). Consistent with \citet{branche2023comparison}, Omicron-containing vaccines elicited higher nAb-ID50 titers against Omicron variants.

The clinical endpoint of the study was defined as a self-reported positive SARS-CoV-2 test (RT-PCR or antigen) or a study-conducted positive SARS-CoV-2 test, with the onset date taken as the earliest positive test date \citep{zhang2025neutralizing}. Among Stage-2 participants, 17 of 47 recipients of the Prototype vaccine experienced a clinical endpoint by 188 days post Day 15, compared to 21 of 154 recipients of the Omicron-containing vaccine within the same follow-up period. Figure S1 in Supplemental Material B presents the Kaplan–Meier curves along with the corresponding risk tables. The right panel of Figure \ref{fig: real data Pfizer} shows the estimated cumulative incidence curves for Stage-2 Prototype Pfizer–BioNTech vaccine recipients (blue) and Stage-2 Omicron-containing Pfizer–BioNTech vaccine recipients (orange), obtained using the method of \citet{westling2024inference}.


Later, in Stage 4 of the COVAIL trial, approximately $200$ participants were enrolled and randomized to receive either the bivalent BA.1 + Prototype Pfizer-BioNTech booster or the bivalent BA.4/5 + Prototype Pfizer-BioNTech booster. The purple boxplot and violin plot in the left panel of Figure \ref{fig: real data Pfizer} show the distribution of Day 15 nAb-ID50 BA.4/BA.5 titers among Stage-4 BA.4/5 + Prototype Pfizer-BioNTech vaccinees, measured using the same validated assay as that used for participants in Stage 2 of the trial. Given the full immunogenicity and clinical data for the Pfizer-BioNTech vaccines collected during Stage 2 of the trial, along with the immunogenicity data for the BA.4/5 + Prototype Pfizer-BioNTech vaccine, can we infer the counterfactual cumulative incidence curve that would have been observed if Stage-4 participants had received the BA.4/5 + Prototype Pfizer-BioNTech booster during Stage 2?

To address this question, we used neutralizing antibody titer and COVID-19 outcome data from Stage-2 Omicron-containing vaccinees ($\mathcal{D}_h$) together with Stage-4 BA.4/5 + Prototype immunogenicity data ($\mathcal{D}_b$). We assume that, conditional on baseline na\"ive status of being diagnosed as previously infected with SARS-CoV-2 and risk score, there is no controlled direct effects when comparing the Stage-2 Omicron-containing Pfizer-BioNTech vaccines with the Stage-4 BA.4/5 + Prototype Pfizer-BioNTech vaccine. The right panel of Figure \ref{fig: real data Pfizer} displays the estimated cumulative incidence curve obtained using our proposed EIF-based estimator. All nuisance functions were estimated using an ensemble machine-learning approach, as in the simulation studies. The target population for this curve is the Stage-4 study population, and the analysis adjusts for the baseline risk score and baseline naïve/non-naïve status.

Using our proposed estimator, the cumulative incidence is estimated to be 4.8\% (95\% CI: 0 to 13.8\%) at Day 91 (approximately 3 months post D15) and 6.8\% (95\% CI: 0 to 15.9\%) at Day 188 (approximately 6 months post D15). Figure S2 in Supplemental Material B complements these pointwise confidence intervals with a uniformly valid confidence band.

\subsection{Evaluating the key assumption of no controlled direct effects}
\label{subsec: case study assess immunobridging assumption}
Under our proposed estimation framework, a key assumption is the absence of controlled direct effects when comparing two vaccines (e.g., the approved vaccine versus the candidate vaccine). Here, we assess whether this assumption holds for the Stage-2 Prototype Pfizer-BioNTech vaccine and the Stage-2 Omicron-containing vaccine when the immune marker is the Day 15 neutralizing antibody titer against BA.4/BA.5, conditional on the baseline risk score and na\"ive/non-na\"ive status:
\begin{equation}\label{eq: case study IB assumption}
    P(T(A = \text{Prototype}, S = s) \mid \mathbf{X}) = P(T(A = \text{Omicron-containing}, S = s) \mid \mathbf{X}),
\end{equation}
where $S$ denotes the nAb-ID50 BA.4/BA.5 titer, and $\mathbf{X}$ consists of baseline risk score and na\"ive/non-na\"ive status. In this application, all immunogenicity data for Omicron-containing vaccinees, as well as the full immunogenicity and clinical data, were collected during Stage 2 of the trial; therefore, we do not differentiate the trial context using $\Gamma$.

To evaluate this assumption, we proceed in two steps. First, we estimate the counterfactual cumulative incidence curve when receiving the Omicron-containing vaccine, defined as
\[
F(t) := \mathbb{E}_{\mathcal{P}_{\text{stage-2}}}\left[P(T(A = \text{Omicron-containing}) \leq t \mid \mathbf{X})\right],
\]
where $\mathcal{P}_{\text{stage-2}}$ denotes the marginal distribution of $\mathbf{X}$ in Stage 2, using the neutralizing antibody titer and COVID-19 outcome data from Stage-2 Prototype vaccinees ($\mathcal{D}_h$) together with the neutralizing antibody titer data from Stage-2 Omicron-containing vaccinees ($\mathcal{D}_b$) under the assumption of no controlled direct effects. Then we compare $F(188)$, the counterfactual cumulative incidence at Day 188 (approximately 6 months post-vaccination), a prespecified time point in the COVAIL correlates study protocol \citep{zhang2025neutralizing}, with the actual cumulative incidence $F_{\text{actual}}(188)$ among Stage-2 Omicron-vaccinees. A significant difference between $F(188)$ and $F_{\text{actual}}(188)$ provides evidence against the assumption in Equation \eqref{eq: case study IB assumption}.

The purple curve in Figure \ref{fig: case study stage 2 actual and projected incidence} displays the counterfactual cumulative incidence under the assumption of no controlled direct effects comparing Omicron-containing and Prototype vaccines. At Day 188, the counterfactual cumulative incidence estimated under the assumption of no controlled direct effects is 31.8\% (95\% CI: 10.8\% to 51.4\%), whereas the actual cumulative incidence was lower at 14.5\% (95\% CI: 8.4\% to 20.5\%). Using bootstrap resampling, the 95\% confidence interval for the difference between the counterfactual cumulative incidence and the actual cumulative incidence at Day 188 was estimated to be [9.3\%, 27.5\%], providing evidence against the null hypothesis of no controlled direct effects when comparing the Prototype and Omicron-containing vaccines. This suggests that the Stage-2 Omicron-containing Pfizer-BioNTech vaccines may have influenced clinical outcomes through pathways not fully captured by the Day 15 neutralizing antibody titer. 

\begin{figure}[ht]
    \centering
\includegraphics[width=0.7\linewidth]{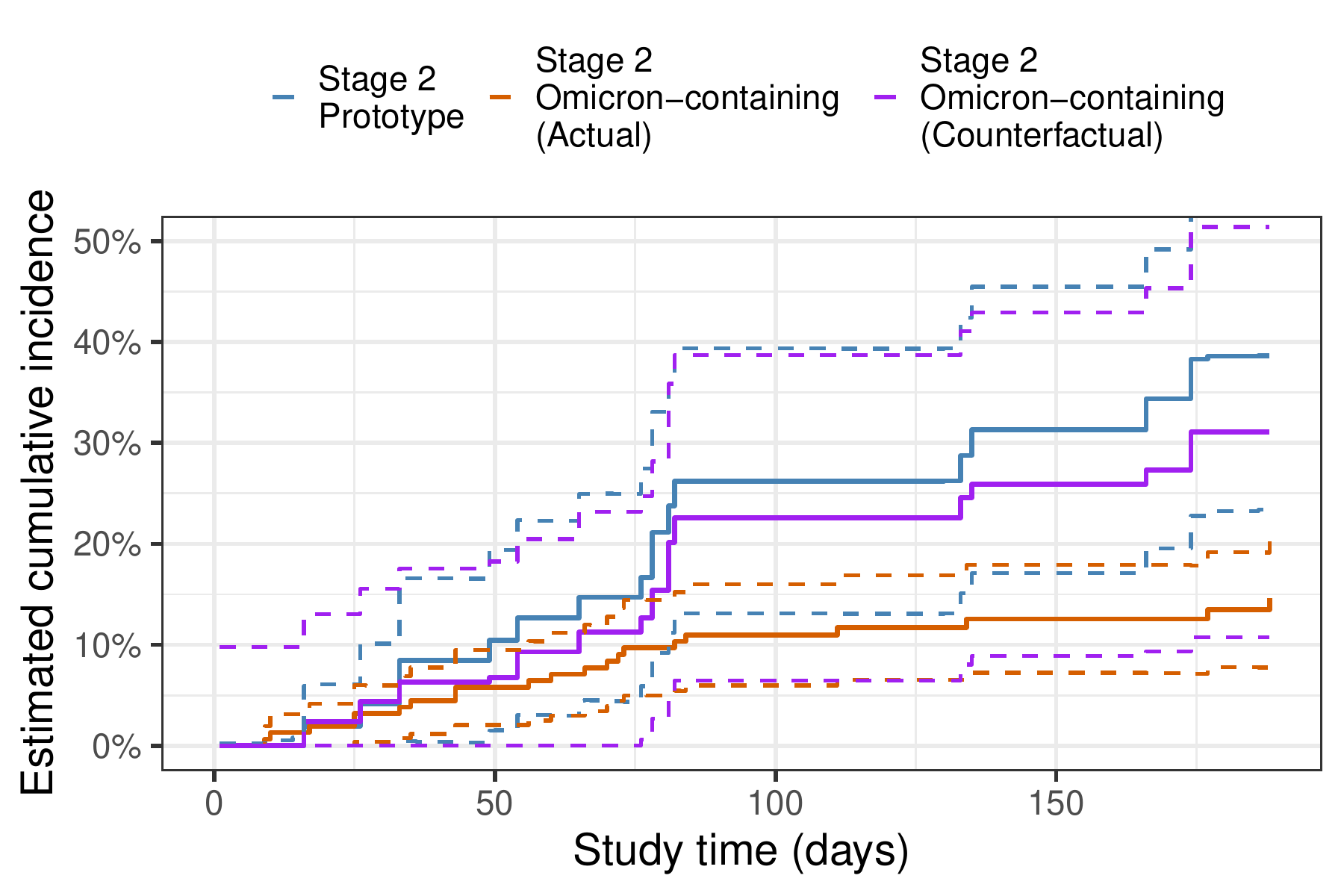}
    \caption{\small Cumulative incidence curves of Stage-2 Prototype Pfizer-BioNTech vaccinees (blue) and Stage-2 Omicron-containing Pfizer-BioNTech vaccinees (orange), along with the counterfactual cumulative incidence curve for Stage-2 Omicron-containing Pfizer-BioNTech vaccinees estimated under the no controlled direct effects assumption (purple). Dashed lines indicate 95\% pointwise confidence intervals.}
    \label{fig: case study stage 2 actual and projected incidence}
\end{figure}

%% file: 8.Discussion.tex
\section{Discussion}
\label{sec: discussion}
In this paper, we systematically study a central question in immunobridging studies: how to estimate counterfactual cumulative incidence curves and relative vaccine efficacy by integrating historical clinical trial data with new immunobridging studies. Our main contribution is to extend the methods of \citet{athey2025surrogate} and \citet{gilbert2024surrogateendpointbasedprovisional} to the estimation of cumulative incidence curves for survival endpoints and, more broadly, to the estimation of cause-specific cumulative incidence curves that accommodate the co-circulation of multiple viral strains or pathogen serotypes. 

We outline three common contexts in which immunobridging studies are conducted and, within each context, discuss in detail the identification assumptions that facilitate immunobridging. Because biological and immunologic systems are highly complex, identification assumptions should be regarded as a base-case framework for inference rather than definitive truth. Among identification assumptions, the overlap assumption technically need not be imposed: it can always be satisfied by defining the target trial population---characterized by the distribution of baseline covariates---as a subset of the historical trial population. If the target population is intrinsically different from the historical trial population---for example, when the historical trial is conducted in adults and the immunobridging study aims to bridge to children or adolescents---then this distinction must be made explicit and acknowledged as a limitation.

Conditional exchangeability between the two trial settings is the second key assumption. It requires that the potential time-to-event outcomes under the same joint intervention remain stable between the historical trial and the hypothetical trial. The “hypothetical trial setting” can be viewed as a continuation of the immunobridging study: rather than collecting only post-vaccination immunogenicity data, the trial would continue to accrue clinical endpoints. This exchangeability would be violated if the hypothetical trial were conducted during a period with a substantially different external force of infection than that of the historical trial. In such cases, the resulting estimator remains meaningful but has a different interpretation: instead of estimating the counterfactual cumulative incidence under the new vaccine in a hypothetical trial, it estimates what would have occurred had the new investigational vaccine, rather than the approved one, been used in the historical trial.

Arguably the most important assumption enabling immunobridging is the assumption of no controlled direct effects conditional on observed baseline covariates. To bring this assumption closer to reality, it may be helpful to incorporate multiple peak immune markers into a surrogate index as the utilized surrogate in the analysis, as advocated by \citet{athey2025surrogate} and \citet{gilbert2024surrogateendpointbasedprovisional}. Because it is unlikely that all mediators of a vaccine’s effect are fully captured and included in the analysis in making the no controlled direct effects assumption hold, this assumption is best viewed as a ``leading case.” By varying the magnitude of the controlled direct effects in a sensitivity analysis, the estimator obtained under the no controlled direct effects assumption can serve as an anchor point. 

While the present paper establishes identification under an assumption of no controlled direct effects between the approved and investigational vaccines, alternative assumptions that preserve other features of the causal mechanism could be considered and may be more appropriate in certain settings. Exploring such alternatives represents an important direction for future work. 

A limitation of the present work is its exclusive focus on immune markers measured at the prespecified peak time point. Consequently, the method cannot distinguish between investigational vaccines that elicit similar peak immune responses. For example, if one vaccine induces a more durable response than another, we would expect differences in cumulative incidence despite similar peak responses. Future work will extend the framework to immunobridging settings in which immune markers are measured longitudinally in both the historical trial and the immunobridging trial. 

%% file: SM.Proofs.tex
\section{Proofs}

\subsection{Proofs of Proposition~\ref{prop:identification-complete-data}, \ref{prop: identification Task II} and Theorem~\ref{thm:identification-complete-data competing risk}}
\label{proof-prop:identification-complete-data}

We prove the first identification result via the ``mediation causal formula" \citep{pearl2012mediation}.

\begin{proof}
For $a \in \{1, 1'\}$, we have
\begin{align*}
R(a; \Gamma = 1) &= \mathbb{E}_{\mathcal{P}_b}\left[\mathbb{E}\{y(T(a)) \mid \boldsymbol{X}, \Gamma = 1\}\right] \\
&= \mathbb{E}_{\mathcal{P}_b}\left[\mathbb{E}\{y(T) \mid \boldsymbol{X}, A = a, \Gamma = 1\}\right] \\
&= \mathbb{E}_{\mathcal{P}_b}\left[\int_{s\in \mathcal{S}} \mathbb{E}\{y(T) \mid \boldsymbol{X}, A = a, S = s, \Gamma = 1\} f(s \mid \boldsymbol{X}, A = a, \Gamma = 1) ds\right] \\
&= \mathbb{E}_{\mathcal{P}_b}\left[\int_{s\in \mathcal{S}} \mathbb{E}\{y(T) \mid \boldsymbol{X}, A = 1, S = s, \Gamma = 0\} f(s \mid \boldsymbol{X}, A = a, \Gamma = 1) ds\right] \\
&= \mathbb{E}\left[\frac{\Gamma}{\kappa} \int_{s\in \mathcal{S}} \mu( \boldsymbol{X}, 1, s) f(s \mid  \boldsymbol{X}, A = a, \Gamma = 1) ds\right].
\end{align*}
\end{proof}

We can then prove the second identification result via the following alternative representation.
\begin{proof}
For $a \in \{1, 1'\}$, we have
\begin{align*}
&\mathbb{E}\left[\frac{\Gamma}{\kappa} \int_{s\in \mathcal{S}} \mu( \boldsymbol{X}, 1, s) f(s \mid  \boldsymbol{X}, A = a, \Gamma = 1) ds\right] \\
& = \iint_{\mathcal{X} \times \mathcal{S}} \mu( \boldsymbol{x}, 1, s) dF(s \mid \boldsymbol{X} = \boldsymbol{x}, A = a, \Gamma = 1) dF(\boldsymbol{x} \mid \Gamma = 1) \\
& = \sum_{a' \in \{1, 1'\}} \iint_{\mathcal{X} \times \mathcal{S}} \mu( \boldsymbol{x}, 1, s) \frac{I\{A = a\}}{f(A = a' \mid \boldsymbol{X} = \boldsymbol{x}, \Gamma = 1)} dF(s, \boldsymbol{x}, a' \mid \Gamma = 1) \\
& = \mathbb{E}\left[\frac{\Gamma I\{A = a\}}{\kappa f(A = a \mid \boldsymbol{X}, \Gamma = 1)} \mu(\boldsymbol{X}, 1, S)\right].
\end{align*}
\end{proof}

Finally, we prove the third identification result via a weighting strategy.
\begin{proof}
For $a \in \{1, 1'\}$, we have
{\scriptsize
\begin{align*}
&\mathbb{E}\left[\frac{\Gamma}{\kappa} \int_{s\in \mathcal{S}} \mu( \boldsymbol{X}, 1, s) f(s \mid  \boldsymbol{X}, A = a, \Gamma = 1) ds\right] \\
& = \sum_{a' \in \{1, 1'\}} \iiint_{\mathcal{X} \times \mathcal{S} \times \mathcal{T}} \frac{(1 - \Gamma) f(\Gamma = 1 \mid \boldsymbol{X} = \boldsymbol{x})}{\kappa} \frac{I\{A = 1\}}{f(A = 1, \Gamma = 0 \mid \boldsymbol{X} = \boldsymbol{x})} \frac{f(s \mid \boldsymbol{X} = \boldsymbol{x}, A = a, \Gamma = 1)}{f(s \mid \boldsymbol{X} = \boldsymbol{x}, A = 1, \Gamma = 0)} y(t) dF(y, s, \boldsymbol{x}, a') \\
& = \mathbb{E}\left[\frac{(1 - \Gamma) I\{A = 1\}}{\kappa} \frac{f(\Gamma = 1 \mid \boldsymbol{X})}{f(A = 1, \Gamma = 0 \mid \boldsymbol{X})} \frac{f(S \mid \boldsymbol{X}, A = a, \Gamma = 1)}{f(S \mid \boldsymbol{X}, A = 1, \Gamma = 0)} y(T)\right].
\end{align*}}
\end{proof}

Next, we consider the identification task II, where we apply the same strategies as above. The first identification result follows the mediation causal formula.
\begin{proof}
We have
\begin{align*}
R'(1'; \Gamma = 1) &= \mathbb{E}_{\mathcal{P}_b}\left[\mathbb{E}\{y(T'(1')) \mid \boldsymbol{X}, \Gamma = 1\}\right] \\
&= \mathbb{E}_{\mathcal{P}_b}\left[\mathbb{E}\{y(T') \mid \boldsymbol{X}, A = 1', \Gamma = 1\}\right] \\
&= \mathbb{E}_{\mathcal{P}_b}\left[\int_{s'\in \mathcal{S}} \mathbb{E}\{y(T') \mid \boldsymbol{X}, A = 1', S' = s', \Gamma = 1\} f(s' \mid \boldsymbol{X}, A = 1', \Gamma = 1) ds'\right] \\
&= \mathbb{E}_{\mathcal{P}_b}\left[\int_{s\in \mathcal{S}} \mathbb{E}\{y(T) \mid \boldsymbol{X}, A = 1, S = s, \Gamma = 0\} f(s \mid \boldsymbol{X}, A = 1', \Gamma = 1) ds\right] \\
&= \mathbb{E}\left[\frac{\Gamma}{\kappa} \int_{s\in \mathcal{S}} \mu( \boldsymbol{X}, 1, s) f(s \mid  \boldsymbol{X}, A = 1', \Gamma = 1) ds\right].
\end{align*}
\end{proof}
The remaining two identification results are
\begin{equation*}
    R'(1'; \Gamma = 1) = \mathbb{E}\left[\frac{\Gamma I\{A = 1'\}}{\kappa f(A = 1' \mid \boldsymbol{X}, \Gamma = 1)} \mu(\boldsymbol{X}, 1, S)\right],
\end{equation*}
and
\begin{equation*}
    R'(1'; \Gamma = 1) = \mathbb{E}\left[\frac{(1 - \Gamma) I\{A = 1\}}{\kappa} \frac{f(\Gamma = 1 \mid \boldsymbol{X})}{f(A = 1, \Gamma = 0 \mid \boldsymbol{X})} \frac{f(S \mid \boldsymbol{X}, A = 1', \Gamma = 1)}{f(S \mid \boldsymbol{X}, A = 1, \Gamma = 0)} y(T)\right],
\end{equation*}
the derivations of which follow an identical procedure and are thus omitted.

For complete competing risks data, the three identification results can be derived using the procedure described above by setting $y(T) = I\{T \leq t, \Delta = j\}, j = 1, \ldots, J$.

\subsection{Proof of Theorem~\ref{thm:identification-censored-data} and \ref{thm:identification-censored-data competing risks}}

The proofs of identification results for right-censored data essentially follow the same procedure given in Section~\ref{proof-prop:identification-complete-data}, with an additional step of inverse probability of censoring weighting (IPCW).
\begin{proof}
For right-censored outcomes, the IPCW method gives identification of the (transformed) survival time:
\begin{align*}
& \mathbb{E}\left[\frac{\Delta y(\Tilde{T})}{G^{C}(\Tilde{T} \mid \boldsymbol{X}, A = a, S)} \mid \boldsymbol{X}, A = a, S\right] \\
& = \mathbb{E}\left[\frac{I\{C > T(a)\}}{P\{C > T(a) \mid \boldsymbol{X}, A = a, S\}} y(T(a)) \mid \boldsymbol{X}, A = a, S\right] \\
& = \mathbb{E}\left[\mathbb{E}\left[\frac{I\{C > T(a)\}}{P\{C > T(a) \mid \boldsymbol{X}, A = a, S\}} \mid \boldsymbol{X}, A = a, S, T(a)\right] y(T(a)) \mid \boldsymbol{X}, A = a, S\right] \\
& = \mathbb{E}\left[y(T(a)) \mid \boldsymbol{X}, S\right],
\end{align*}
and the remaining proofs are identical to those in Section~\ref{proof-prop:identification-complete-data} and are thus omitted.
\end{proof}

The extension to censored competing risks data is straightforward by setting $y(T) = I\{T \leq t, \Delta = j\}, j = 1, \ldots, J$.

\subsection{Proof of Proposition~\ref{prop:EIF-complete-data} and Theorem~\ref{thm:EIF-complete-data competing risk}}

We use the \emph{point mass contamination} strategy advocated in \cite{hines2022demystifying} to derive efficient influence functions.

\begin{proof}
We consider the target parameter defined as
\begin{equation*}
\Psi(\mathcal{P}) = \mathbb{E}\left[\frac{\Gamma}{\kappa} \int_{s\in \mathcal{S}} \mu(\boldsymbol{X}, 1, s) f(s \mid \boldsymbol{X}, A = a, \Gamma = 1) ds\right].
\end{equation*}

Perturbing $\mathcal{P}$ in the direction of a point mass, we find
\begin{align*}
\Psi(\mathcal{P}_\epsilon) &= \iiint y(t) f_\epsilon(t \mid x, 1, s, 0) dt f_\epsilon(s \mid x, a, 1) ds f_\epsilon(x \mid 1) dx \\
&= \iiint y(t) \frac{f_\epsilon(t,x,1,s,0) f_\epsilon(s,x,a,1) f_\epsilon(x,1)}{f_\epsilon(x,1,s,0) f_\epsilon(x,a,1) f_\epsilon(1)} dtdsdx,
\end{align*}
where the parametric submodel is indexed by $\epsilon$.

By the chain rule, we thus have
\begin{align*}
\left.\frac{d\Psi(\mathcal{P}_\epsilon)}{d\epsilon} \right|_{\epsilon = 0} &= \iiint y(t) \left\{\frac{f(s,x,a,1) f(x,1)}{f(x,1,s,0) f(x,a,1) f(1)} \left.\frac{d}{d\epsilon} f_\epsilon(t,x,1,s,0) \right|_{\epsilon = 0} \right.\\
& \left.\quad + \frac{f(t,x,1,s,0) f(x,1)}{f(x,1,s,0) f(x,a,1) f(1)} \left.\frac{d}{d\epsilon} f_\epsilon(s,x,a,1) \right|_{\epsilon = 0} \right.\\
& \left.\quad + \frac{f(t,x,1,s,0) f(s,x,a,1)}{f(x,1,s,0) f(x,a,1) f(1)} \left.\frac{d}{d\epsilon} f_\epsilon(x,1) \right|_{\epsilon = 0} \right.\\
& \left.\quad - \frac{f(t,x,1,s,0) f(s,x,a,1) f(x,1)}{f^2(x,1,s,0) f(x,a,1) f(1)} \left.\frac{d}{d\epsilon} f_\epsilon(x,1,s,0) \right|_{\epsilon = 0} \right.\\
& \left.\quad - \frac{f(t,x,1,s,0) f(s,x,a,1) f(x,1)}{f(x,1,s,0) f^2(x,a,1) f(1)} \left.\frac{d}{d\epsilon} f_\epsilon(x,a,1) \right|_{\epsilon = 0} \right.\\
& \left.\quad - \frac{f(t,x,1,s,0) f(s,x,a,1) f(x,1)}{f(x,1,s,0) f(x,a,1) f^2(1)} \left.\frac{d}{d\epsilon} f_\epsilon(1) \right|_{\epsilon = 0} \right\} dtdsdx.
\end{align*}

Evaluating the above integrals, we find
\begin{align*}
& \iiint y(t) \frac{f(s,x,a,1) f(x,1)}{f(x,1,s,0) f(x,a,1) f(1)} \left.\frac{d}{d\epsilon} f_\epsilon(t,x,1,s,0) \right|_{\epsilon = 0} dtdsdx \\
&= \iiint y(t) \frac{f(s,x,a,1) f(x,1)}{f(x,1,s,0) f(x,a,1) f(1)} I\{\Tilde{t},\Tilde{x},1,\Tilde{s},0\} dtdsdx \\
&= y(\Tilde{t}) \frac{f(\Tilde{s},\Tilde{x},a,1) f(\Tilde{x},1)}{f(\Tilde{x},1,\Tilde{s},0) f(\Tilde{x},a,1) f(1)} I\{\Tilde{a} = 1,\Tilde{\gamma} = 0\} \\
&= I\{\Tilde{a} = 1,\Tilde{\gamma} = 0\} y(t) \frac{f(\Tilde{s} \mid \Tilde{x},a,1) f(1 \mid \Tilde{x})}{f(\Tilde{s} \mid \Tilde{x},1,0) f(1,0 \mid \Tilde{x}) f(1)},
\end{align*}
\begin{align*}
& \iiint y(t) \frac{f(t,x,1,s,0) f(x,1)}{f(x,1,s,0) f(x,a,1) f(1)} \left.\frac{d}{d\epsilon} f_\epsilon(s,x,a,1) \right|_{\epsilon = 0} dtdsdx \\
&= \iiint y(t) \frac{f(t,x,1,s,0) f(x,1)}{f(x,1,s,0) f(x,a,1) f(1)} I\{\Tilde{s},\Tilde{x},a,1\} dtdsdx \\
&= I\{\Tilde{a} = a,\Tilde{\gamma} = 1\} \frac{f(\Tilde{x}, 1)}{f(\Tilde{x}, a,1) f(1)} \int y(t) f(t \mid \Tilde{x}, 1, \Tilde{s}, 0) dt \\
&= \frac{I\{\Tilde{a} = a,\Tilde{\gamma} = 1\}}{f(a \mid \Tilde{x}, 1) f(1)} \mu(\Tilde{x},1,\Tilde{s}),
\end{align*}
\begin{align*}
& \iiint y(t) \frac{f(t,x,1,s,0) f(s,x,a,1)}{f(x,1,s,0) f(x,a,1) f(1)} \left.\frac{d}{d\epsilon} f_\epsilon(x,1) \right|_{\epsilon = 0} dtdsdx \\
&= \iiint y(t) \frac{f(t,x,1,s,0) f(s,x,a,1)}{f(x,1,s,0) f(x,a,1) f(1)} I\{\Tilde{x}, 1\} dtdsdx \\
&= \frac{I\{\Tilde{\gamma} = 1\}}{f(1)} \iint y(t) f(t \mid \Tilde{x}, 1,s,0) dt f(s \mid \Tilde{x}, a,1) ds \\
&= \frac{I\{\Tilde{\gamma} = 1\}}{f(1)} \int \mu(\Tilde{x},1,s) f(s \mid \Tilde{x}, a,1) ds,
\end{align*}
\begin{align*}
& \iiint y(t) \frac{f(t,x,1,s,0) f(s,x,a,1) f(x,1)}{f^2(x,1,s,0) f(x,a,1) f(1)} \left.\frac{d}{d\epsilon} f_\epsilon(x,1,s,0) \right|_{\epsilon = 0} dtdsdx \\
&= \iiint y(t) \frac{f(t,x,1,s,0) f(s,x,a,1) f(x,1)}{f^2(x,1,s,0) f(x,a,1) f(1)} I\{\Tilde{x},1,\Tilde{s},0\} dtdsdx \\
&= I\{\Tilde{a} = 1, \Tilde{\gamma} = 0\} \frac{f(\Tilde{s}, \Tilde{x}, a, 1) f(\Tilde{x}, 1)}{f(\Tilde{x}, 1, \Tilde{s}, 0) f(\Tilde{x}, a, 1) f(1)} \int y(t) f(t \mid \Tilde{x}, 1, \Tilde{s}, 0) dt \\
&= \frac{I\{\Tilde{a} = 1, \Tilde{\gamma} = 0\} f(\Tilde{s} \mid \Tilde{x}, a, 1) f(1 \mid \Tilde{x})}{f(1) f(\Tilde{s} \mid \Tilde{x}, 1, 0) f(1, 0 \mid \Tilde{x})} \mu(\Tilde{x}, 1, \Tilde{s}),
\end{align*}
\begin{align*}
& \iiint y(t) \frac{f(t,x,1,s,0) f(s,x,a,1) f(x,1)}{f(x,1,s,0) f^2(x,a,1) f(1)} \left.\frac{d}{d\epsilon} f_\epsilon(x,a,1) \right|_{\epsilon = 0} dtdsdx \\
&= \iiint y(t) \frac{f(t,x,1,s,0) f(s,x,a,1) f(x,1)}{f(x,1,s,0) f^2(x,a,1) f(1)} I\{\Tilde{x},a,1\} dtdsdx \\
&= I\{\Tilde{a}=a,\Tilde{\gamma}=1\} \frac{f(\Tilde{x}, 1)}{f(\Tilde{x}, a, 1) f(1)} \iint y(t) f(t \mid \Tilde{x},1,s,0) dt f(s \mid \Tilde{x},a,1) ds \\
&= \frac{I\{\Tilde{a}=a,\Tilde{\gamma}=1\}}{f(a \mid \Tilde{x}, 1) f(1)} \int \mu(\Tilde{x}, 1, s) f(s \mid \Tilde{x},a,1) ds,
\end{align*}
\begin{equation*}
\iiint y(t) \frac{f(t,x,1,s,0) f(s,x,a,1) f(x,1)}{f(x,1,s,0) f(x,a,1) f^2(1)} \left.\frac{d}{d\epsilon} f_\epsilon(1) \right|_{\epsilon = 0} dtdsdx = \Psi(\mathcal{P}),
\end{equation*}
which gives the canonical gradient of $\Psi(\mathcal{P})$:
\begin{align*}
\left.\frac{d\Psi(\mathcal{P}_\epsilon)}{d\epsilon} \right|_{\epsilon = 0} &= \frac{(1 - \Tilde{\gamma}) I\{\Tilde{a} = 1\}}{\kappa} \frac{f(\Tilde{\gamma} = 1 \mid \Tilde{x})}{f(\Tilde{a} = 1, \Tilde{\gamma} = 0 \mid \Tilde{x})} \frac{f(\Tilde{s} \mid \Tilde{x}, \Tilde{a} = a, \Tilde{\gamma} = 1)}{f(\Tilde{s} \mid \Tilde{x}, \Tilde{a} = 1, \Tilde{\gamma} = 0)} \left\{y(\Tilde{t}) - \mu(\Tilde{t}, 1, \Tilde{s})\right\} \\
& + \frac{\Tilde{\gamma} I\{\Tilde{a} = a\}}{\kappa f(\Tilde{a} = a \mid \Tilde{x}, \Tilde{\gamma} = 1)} \left\{\mu(\Tilde{x}, 1, \Tilde{s}) - \int_{\Tilde{s}\in \mathcal{S}} \mu(\Tilde{x}, 1, \Tilde{s}) f(\Tilde{s} \mid \Tilde{x}, \Tilde{a} = a, \Tilde{\gamma} = 1) d\Tilde{s}\right\} \\
& + \frac{\Tilde{\gamma}}{\kappa} \int_{\Tilde{s}\in \mathcal{S}} \mu(\Tilde{x}, 1, \Tilde{s}) f(\Tilde{s} \mid \Tilde{x}, \Tilde{a} = a, \Tilde{\gamma} = 1) d\Tilde{s} - \Psi(\mathcal{P}).
\end{align*}

Hence, we conclude that $\Psi(\mathcal{P})$ is pathwise differentiable with the above EIF.
\end{proof}
The extension to complete competing risks data is straightforward by setting $y(T) = I\{T \leq t, \Delta = j\}, j = 1, \ldots, J$.

\subsection{Proof of Proposition~\ref{prop:MR-complete-data}}
\label{proof-MR-complete-data}

\begin{proof}
Denote by $f^\ast$ and $\mu^\ast$ the probability limits of the nuisance function estimators $\hat{f}$ and $\hat{\mu}$. Essentially we need to show
\begin{equation}\label{eq:MR-complete-data-proof}
\begin{split}
& \mathbb{E}\left[\frac{(1 - \Gamma) I\{A = 1\}}{\kappa} \frac{f^\ast(\Gamma = 1 \mid \boldsymbol{X})}{f^\ast(A = 1, \Gamma = 0 \mid \boldsymbol{X})} \frac{f^\ast(S \mid \boldsymbol{X}, A = a, \Gamma = 1)}{f^\ast(S \mid \boldsymbol{X}, A = 1, \Gamma = 0)} \left\{y(T) - \mu^\ast(\boldsymbol{X}, 1, S)\right\}\right. \\
& \left.+ \frac{\Gamma I\{A = a\}}{\kappa f^\ast(A = a \mid \boldsymbol{X}, \Gamma = 1)} \left\{\mu^\ast(\boldsymbol{X}, 1, S) - \int_{s\in \mathcal{S}} \mu^\ast(\boldsymbol{X}, 1, s) f^\ast(s \mid \boldsymbol{X}, A = a, \Gamma = 1) ds\right\}\right. \\
& \left.+ \frac{\Gamma}{\kappa} \int_{s\in \mathcal{S}} \mu^\ast(\boldsymbol{X}, 1, s) f^\ast(s \mid  \boldsymbol{X}, A = a, \Gamma = 1) ds - \Psi(\mathcal{P})\right] = 0
\end{split}
\end{equation}
under the union model $\mathcal{M}_a \cup \mathcal{M}_b \cup \mathcal{M}_c$.

First note that $\mu^\ast(x, 1, s) = \mu(x, 1, s)$ and $f^\ast(s \mid x, A = a, \Gamma = 1) = f(s \mid x, A = a, \Gamma = 1)$ under model $\mathcal{M}_a$. Equation~\eqref{eq:MR-complete-data-proof} follows because
\begin{equation*}
\mathbb{E}\left[\frac{(1 - \Gamma) I\{A = 1\}}{\kappa} \frac{f^\ast(\Gamma = 1 \mid X)}{f^\ast(A = 1, \Gamma = 0 \mid X)} \frac{f(S \mid X, A = a, \Gamma = 1)}{f^\ast(S \mid X, A = 1, \Gamma = 0)} \left\{y(T) - \mu(X, 1, S)\right\}\right] = 0,
\end{equation*}
\begin{equation*}
\mathbb{E}\left[\frac{\Gamma I\{A = a\}}{\kappa f^\ast(A = a \mid X, \Gamma = 1)} \left\{\mu(X, 1, S) - \int_{s\in \mathcal{S}} \mu(\boldsymbol{X}, 1, s) f(s \mid \boldsymbol{X}, A = a, \Gamma = 1) ds\right\}\right] = 0,
\end{equation*}
and
\begin{equation*}
\mathbb{E}\left[\frac{\Gamma}{\kappa} \int_{s\in \mathcal{S}} \mu(\boldsymbol{X}, 1, s) f(s \mid \boldsymbol{X}, A = a, \Gamma = 1) ds\right] = \Psi(\mathcal{P}).
\end{equation*}

Second, we have $\mu^\ast(x, 1, s) = \mu(x, 1, s)$ and $f^\ast(A = a \mid x, \Gamma = 1) = f(A = a \mid x, \Gamma = 1)$ under model $\mathcal{M}_b$. Equation~\eqref{eq:MR-complete-data-proof} now follows because
\begin{equation*}
\mathbb{E}\left[\frac{(1 - \Gamma) I\{A = 1\}}{\kappa} \frac{f^\ast(\Gamma = 1 \mid X)}{f^\ast(A = 1, \Gamma = 0 \mid X)} \frac{f^\ast(S \mid X, A = a, \Gamma = 1)}{f^\ast(S \mid X, A = 1, \Gamma = 0)} \left\{y(T) - \mu(X, 1, S)\right\}\right] = 0,
\end{equation*}
\begin{equation*}
\mathbb{E}\left[\left\{\frac{\Gamma}{\kappa} - \frac{\Gamma I\{A = a\}}{\kappa f(A = a \mid X, \Gamma = 1)}\right\} \int_{s\in \mathcal{S}} \mu(\boldsymbol{X}, 1, s) f^\ast(s \mid \boldsymbol{X}, A = a, \Gamma = 1) ds\right] = 0,
\end{equation*}
and
\begin{equation*}
\mathbb{E}\left[\frac{\Gamma I\{A = a\}}{\kappa f(A = a \mid X, \Gamma = 1)} \mu(X, 1, S)\right] = \Psi(\mathcal{P}).
\end{equation*}

Third, as $f^\ast(A = 1' \mid x, \Gamma = 1) = f(A = 1' \mid x, \Gamma = 1)$, $f^\ast(\Gamma = 1 \mid x) = f(\Gamma = 1 \mid x)$, $f^\ast(A = 1, \Gamma = 0 \mid x) = f(A = 1, \Gamma = 0 \mid x)$, $f^\ast(s \mid x, A = 1', \Gamma = 1) = f(s \mid x, A = 1', \Gamma = 1)$ and $f^\ast(s \mid x, A = 1, \Gamma = 0) = f(s \mid x, A = 1, \Gamma = 0)$ under model $\mathcal{M}_c$, we conclude that Equation~\eqref{eq:MR-complete-data-proof} holds because
\begin{align*}
& \mathbb{E}\left[\left\{\frac{(1 - \Gamma) I\{A = 1\}}{\kappa} \frac{f(\Gamma = 1 \mid X)}{f(A = 1, \Gamma = 0 \mid X)} \frac{f(S \mid X, A = a, \Gamma = 1)}{f(S \mid X, A = 1, \Gamma = 0)} \right.\right.\\
& \quad\quad \left.\left.- \frac{\Gamma I\{A = a\}}{\kappa f(A = a \mid X, \Gamma = 1)}\right\} \mu^\ast(X, 1, S)\right] = 0,
\end{align*}
\begin{equation*}
\mathbb{E}\left[\left\{\frac{\Gamma}{\kappa} - \frac{\Gamma I\{A = a\}}{\kappa f(A = a \mid X, \Gamma = 1)}\right\} \int_{s\in \mathcal{S}} \mu^\ast(\boldsymbol{X}, 1, s) f^\ast(s \mid \boldsymbol{X}, A = a, \Gamma = 1) ds\right] = 0,
\end{equation*}
and
\begin{equation*}
\mathbb{E}\left[\frac{(1 - \Gamma) I\{A = 1\}}{\kappa} \frac{f(\Gamma = 1 \mid X)}{f(A = 1, \Gamma = 0 \mid X)} \frac{f(S \mid X, A = a, \Gamma = 1)}{f(S \mid X, A = 1, \Gamma = 0)} y(T)\right] = \Psi(\mathcal{P}),
\end{equation*}
which completes the proof of multiple robustness.
\end{proof}

\subsection{Proof of Proposition~\ref{prop: asymptotic linearity complete data}}

We first state a useful lemma from \citet{Kennedy2020sharp}.
\begin{lemma}\label{lemma:crossfitting}
Let $\hat{f}(o)$ be a function estimated from a sample $O^N = \left(O_{n+1}, \ldots, O_{N}\right)$, and let $\mathbb{P}_n$ denote the empirical measure over $\left(O_1, \ldots, O_n\right)$, which is independent of $O^N$. Then
\begin{equation*}
    \left(\mathbb{P}_n - \mathbb{P}\right) \left(\hat{f} - f\right) = O_{\mathbb{P}} \left(\frac{\|\hat{f} - f\|}{n^{1/2}}\right).
\end{equation*}
\end{lemma}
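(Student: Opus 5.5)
The plan is to condition on the independent sample $O^N$ throughout and then reduce to a Chebyshev bound. Because $O^N$ is independent of $(O_1,\ldots,O_n)$, conditional on $O^N$ the function $\hat f - f$ is a fixed, deterministic function $g$. The quantity $(\mathbb{P}_n - \mathbb{P})g = n^{-1}\sum_{i=1}^n \{g(O_i) - \mathbb{P}g\}$ is then an average of i.i.d. mean-zero terms.

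The steps are as follows.

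First, we compute the conditional mean:
\[
\mathbb{E}\bigl[(\mathbb{P}_n - \mathbb{P})(\hat f - f) \mid O^N\bigr] = 0 .
\]

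Second, we compute the conditional variance:
\[
\operatorname{Var}\bigl[(\mathbb{P}_n - \mathbb{P})(\hat f - f) \mid O^N\bigr] = \frac{1}{n}\operatorname{Var}\{(\hat f - f)(O) \mid O^N\} \le \frac{1}{n}\|\hat f - f\|^2 ,
\]
where $\|\cdot\|$ is the $L_2(\mathbb{P})$ norm, $\|g\|^2 = \int g^2\, d\mathbb{P}$, which is itself a function of $O^N$.

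Third, we apply the conditional Chebyshev inequality. For any $M>0$,
\[
P\left( \frac{|(\mathbb{P}_n - \mathbb{P})(\hat f - f)|}{\|\hat f - f\|/n^{1/2}} \ge M \;\Big|\; O^N \right) \le \frac{1}{M^2}.
\]
On the event $\|\hat f - f\| = 0$ we have $\hat f = f$ $\mathbb{P}$-almost surely, so the left side is zero almost surely; the ratio can be taken as $0$ there.

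Fourth, we take expectations over $O^N$ by iterated expectation. This gives the same $1/M^2$ bound unconditionally. Choosing $M = \epsilon^{-1/2}$ for arbitrary $\epsilon$ establishes the $O_{\mathbb{P}}(\|\hat f - f\|/n^{1/2})$ statement.

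The only delicate point is measurability and the treatment of $\|\hat f - f\|$ as a random quantity. The norm must be understood as $\int (\hat f - f)^2\, d\mathbb{P}$ with $\hat f$ held fixed, which is exactly why independence of the training sample is essential. We also need $\hat f - f$ to be square integrable (almost surely in $O^N$); we assume this implicitly. Otherwise the argument is routine, and no earlier result of the paper is needed.
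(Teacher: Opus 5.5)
Your proof is correct. It is the same argument behind the lemma: condition on $O^N$ to freeze $\hat f - f$, note that the conditional mean is zero and the conditional variance is at most $\|\hat f - f\|^2/n$, apply Chebyshev conditionally, and integrate out $O^N$. The paper does not prove the lemma itself but imports it from Kennedy et al.\ (2020), whose proof is exactly this conditional-Chebyshev argument; your handling of the event $\|\hat f - f\| = 0$ and the square-integrability assumption are harmless additions.
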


\begin{proof}
Denote the one-step estimator by
\begin{equation*}
    \hat{\Psi} = \Psi\left(\hat{\mathbb{P}}\right) + \mathbb{P}_n \left\{\phi^\ast_a\left(\hat{\mathbb{P}}\right)\right\},
\end{equation*}
which has the following decomposition by definition
\begin{align*}
\hat{\Psi} - \Psi &= \Psi\left(\hat{\mathbb{P}}\right) + \mathbb{P}_n \left\{\phi^\ast_a\left(\hat{\mathbb{P}}\right)\right\} - \Psi\left(\mathbb{P}\right) \\
&= \left(\mathbb{P}_n - \mathbb{P}\right) \left\{\phi^\ast_a\left(\mathbb{P}\right)\right\} + \left(\mathbb{P}_n - \mathbb{P}\right) \left\{\phi^\ast_a\left(\hat{\mathbb{P}}\right) - \phi^\ast_a\left(\mathbb{P}\right)\right\} + Z\left(\hat{\mathbb{P}}, \mathbb{P}\right),
\end{align*}
where $Z\left(\hat{\mathbb{P}}, \mathbb{P}\right) = \Psi\left(\hat{\mathbb{P}}\right) - \Psi\left(\mathbb{P}\right) + \mathbb{E}_{\mathbb{P}}\left[\phi^\ast_a\left(\hat{\mathbb{P}}\right)\right]$.

We analyze the three terms separately:
\begin{itemize}
    \item The first term
    \begin{equation*}
        \left(\mathbb{P}_n - \mathbb{P}\right) \left\{\phi^\ast_a\left(\mathbb{P}\right)\right\}
    \end{equation*}
    is a simple sample average, so by the central limit theorem, it converges to a normal random variable.

    \item The second term
    \begin{equation*}
        \left(\mathbb{P}_n - \mathbb{P}\right) \left\{\phi^\ast_a\left(\hat{\mathbb{P}}\right) - \phi^\ast_a\left(\mathbb{P}\right)\right\}
    \end{equation*}
    is a standard empirical process term. To show it is of the order $o_{\mathbb{P}}\left(n^{-1/2}\right)$, the classical approach is to assume $\phi^\ast_a$ belongs to a Donsker class. When machine learning algorithms are used, it is usually preferred to consider cross-fitting. Irrespective of the approach adopted, it is typically required that
    \begin{equation*}
        \left\|\phi^\ast_a\left(\hat{\mathbb{P}}\right) - \phi^\ast_a\left(\mathbb{P}\right)\right\|^2 = o_{\mathbb{P}}(1).
    \end{equation*}
    More specifically, in our analysis, a simple set of sufficient conditions includes the boundedness of the (estimated) nuisance functions and the following consistency:
    \begin{align*}
        \left\|\hat{f}(\Gamma = 1 \mid \boldsymbol{X}) - f(\Gamma = 1 \mid \boldsymbol{X})\right\| &= o_{\mathbb{P}}(1), \\
        \left\|\hat{f}(A = 1, \Gamma = 0 \mid \boldsymbol{X}) - f(A = 1, \Gamma = 0 \mid \boldsymbol{X})\right\| &= o_{\mathbb{P}}(1), \\
        \left\|\hat{f}(S \mid \boldsymbol{X}, A = a, \Gamma = 0) - f(S \mid \boldsymbol{X}, A = a, \Gamma = 0)\right\| &= o_{\mathbb{P}}(1), \\
        \left\|\hat{f}(A = a \mid \boldsymbol{X}, \Gamma = 1) - f(A = a \mid \boldsymbol{X}, \Gamma = 1)\right\| &= o_{\mathbb{P}}(1), \\
        \left\|\hat{\mu}(\boldsymbol{X}, a, S) - \mu(\boldsymbol{X}, a, S)\right\| &= o_{\mathbb{P}}(1),
    \end{align*}
    which together with Lemma~\ref{lemma:crossfitting} imply that
    \begin{equation*}
        \left(\mathbb{P}_n - \mathbb{P}\right) \left\{\phi^\ast_a\left(\hat{\mathbb{P}}\right) - \phi^\ast_a\left(\mathbb{P}\right)\right\} = o_{\mathbb{P}}\left(n^{-1/2}\right).
    \end{equation*}

    \item The third term
    \begin{equation*}
        Z\left(\hat{\mathbb{P}}, \mathbb{P}\right) = \Psi\left(\hat{\mathbb{P}}\right) - \Psi\left(\mathbb{P}\right) + \mathbb{E}_{\mathbb{P}}\left[\phi^\ast_a\left(\hat{\mathbb{P}}\right)\right]
    \end{equation*}
    is the key in our analysis. This remainder term is given by
    {\scriptsize
    \begin{align*}
        & Z\left(\hat{\mathbb{P}}, \mathbb{P}\right) \\
        & = \mathbb{E}\left[\left\{\frac{f(A = 1, \Gamma = 0 \mid \boldsymbol{X})}{\kappa \hat{f}(A = 1, \Gamma = 0 \mid \boldsymbol{X})} - \frac{1}{\kappa}\right\} \left\{\frac{\hat{f}(\Gamma = 1 \mid \boldsymbol{X}) \hat{f}(S \mid \boldsymbol{X}, A = a, \Gamma = 1)}{\hat{f}(S \mid \boldsymbol{X}, A = 1, \Gamma = 0)}- \frac{f(\Gamma = 1 \mid \boldsymbol{X}) f(S \mid \boldsymbol{X}, A = a, \Gamma = 1)}{f(S \mid \boldsymbol{X}, A = 1, \Gamma = 0)}\right\} \right. \\
        & \left. \qquad\quad \times \left\{\hat{\mu}(\boldsymbol{X}, a, S) - \mu(\boldsymbol{X}, a, S)\right\}\right. \\
        & \left. \quad + \left\{\frac{f(A = 1, \Gamma = 0 \mid \boldsymbol{X})}{\kappa \hat{f}(A = 1, \Gamma = 0 \mid \boldsymbol{X})} - \frac{1}{\kappa}\right\} \left\{\hat{\mu}(\boldsymbol{X}, a, S) - \mu(\boldsymbol{X}, a, S)\right\} \right. \\
        & \left. \quad + \resizebox{0.85\textwidth}{!}{$\left\{\frac{\hat{f}(\Gamma = 1 \mid \boldsymbol{X}) \hat{f}(S \mid \boldsymbol{X}, A = a, \Gamma = 1)}{\hat{f}(S \mid \boldsymbol{X}, A = 1, \Gamma = 0)}- \frac{f(\Gamma = 1 \mid \boldsymbol{X}) f(S \mid \boldsymbol{X}, A = a, \Gamma = 1)}{f(S \mid \boldsymbol{X}, A = 1, \Gamma = 0)}\right\} \left\{\hat{\mu}(\boldsymbol{X}, a, S) - \mu(\boldsymbol{X}, a, S)\right\}$} \right. \\
        & \left. \quad + \left\{\frac{f(A = a \mid \boldsymbol{X}, \Gamma = 1)}{\kappa \hat{f}(A = a \mid \boldsymbol{X}, \Gamma = 1)} - \frac{1}{\kappa}\right\} \left\{\hat{\mu}(\boldsymbol{X}, a, S) - \mu(\boldsymbol{X}, a, S)\right\} \right. \\
        & \left. \quad - \resizebox{0.85\textwidth}{!}{$\int \left\{\frac{f(A = a \mid \boldsymbol{X}, \Gamma = 1)}{\kappa \hat{f}(A = a \mid \boldsymbol{X}, \Gamma = 1)} - \frac{1}{\kappa}\right\} \left\{\hat{\mu}(\boldsymbol{X}, a, s) - \mu(\boldsymbol{X}, a, s)\right\} \left\{\hat{f}(s \mid  \boldsymbol{X}, A = a, \Gamma = 1) - f(s \mid  \boldsymbol{X}, A = a, \Gamma = 1)\right\} ds$} \right. \\
        & \left. \quad - \int \left\{\frac{f(A = a \mid \boldsymbol{X}, \Gamma = 1)}{\kappa \hat{f}(A = a \mid \boldsymbol{X}, \Gamma = 1)} - \frac{1}{\kappa}\right\} \mu(\boldsymbol{X}, a, s) \left\{\hat{f}(s \mid  \boldsymbol{X}, A = a, \Gamma = 1) - f(s \mid  \boldsymbol{X}, A = a, \Gamma = 1)\right\} ds \right. \\
        & \left. \quad - \int \left\{\frac{f(A = a \mid \boldsymbol{X}, \Gamma = 1)}{\kappa \hat{f}(A = a \mid \boldsymbol{X}, \Gamma = 1)} - \frac{1}{\kappa}\right\} \left\{\hat{\mu}(\boldsymbol{X}, a, s) - \mu(\boldsymbol{X}, a, s)\right\} f(s \mid  \boldsymbol{X}, A = a, \Gamma = 1) ds \right].
    \end{align*}}
    We assume the following rate of convergence conditions:
    {\scriptsize
    \begin{align*}
        \left\|\hat{\mu}(\boldsymbol{X}, a, S) - \mu(\boldsymbol{X}, a, S)\right\| \cdot \left\|\frac{f(A = 1, \Gamma = 0 \mid \boldsymbol{X})}{\hat{f}(A = 1, \Gamma = 0 \mid \boldsymbol{X})} - 1\right\| &= o_{\mathbb{P}}\left(n^{-1/2}\right), \\
        \resizebox{0.7\textwidth}{!}{$\left\|\hat{\mu}(\boldsymbol{X}, a, S) - \mu(\boldsymbol{X}, a, S)\right\| \cdot \left\|\frac{\hat{f}(\Gamma = 1 \mid \boldsymbol{X}) \hat{f}(S \mid \boldsymbol{X}, A = a, \Gamma = 1)}{\hat{f}(S \mid \boldsymbol{X}, A = 1, \Gamma = 0)}- \frac{f(\Gamma = 1 \mid \boldsymbol{X}) f(S \mid \boldsymbol{X}, A = a, \Gamma = 1)}{f(S \mid \boldsymbol{X}, A = 1, \Gamma = 0)}\right\|$} &= o_\mathbb{P}\left(n^{-1/2}\right), \\
        \left\|\hat{\mu}(\boldsymbol{X}, a, S) - \mu(\boldsymbol{X}, a, S)\right\| \cdot \left\|\frac{f(A = a \mid \boldsymbol{X}, \Gamma = 1)}{\hat{f}(A = a \mid \boldsymbol{X}, \Gamma = 1)} - 1\right\| &= o_\mathbb{P}\left(n^{-1/2}\right), \\
        \mathbb{E}\left|\int \left\{\frac{f(A = a \mid \boldsymbol{X}, \Gamma = 1)}{\kappa \hat{f}(A = a \mid \boldsymbol{X}, \Gamma = 1)} - \frac{1}{\kappa}\right\} \mu(\boldsymbol{X}, a, s) \left\{\hat{f}(s \mid  \boldsymbol{X}, A = a, \Gamma = 1) - f(s \mid  \boldsymbol{X}, A = a, \Gamma = 1)\right\} ds\right| &= o_\mathbb{P}\left(n^{-1/2}\right), \\
        \mathbb{E}\left|\int \left\{\frac{f(A = a \mid \boldsymbol{X}, \Gamma = 1)}{\kappa \hat{f}(A = a \mid \boldsymbol{X}, \Gamma = 1)} - \frac{1}{\kappa}\right\} \left\{\hat{\mu}(\boldsymbol{X}, a, s) - \mu(\boldsymbol{X}, a, s)\right\} f(s \mid  \boldsymbol{X}, A = a, \Gamma = 1) ds\right| &= o_\mathbb{P}\left(n^{-1/2}\right),
    \end{align*}}
    which imply 
    \begin{equation*}
        Z\left(\hat{\mathbb{P}}, \mathbb{P}\right) = o_{\mathbb{P}}\left(n^{-1/2}\right),
    \end{equation*}
    by Cauchy-Schwarz.

    Summarizing the above results, we have
    \begin{equation*}
        \hat{\Psi} - \Psi = \left(\mathbb{P}_n - \mathbb{P}\right) \left\{\phi^\ast_a\left(\mathbb{P}\right)\right\} + o_{\mathbb{P}}\left(n^{-1/2}\right),
    \end{equation*}
    which completes the proof.
    
\end{itemize}

\end{proof}

\subsection{Proof of Theorem~\ref{thm:EIF-censored-data} and \ref{thm:EIF-censored-data competing risks}}

\begin{proof}
When the outcome is subject to ignorable right censoring, the efficient influence function $\phi^{C\ast}_{a,t}$ can be derived from the complete data EIF $\phi^\ast_a$, where we set $y(T) = I\{T \leq t\}$ in Theorem \ref{thm:identification-censored-data}, via results given in \citet[Section~10.4]{tsiatis2006semiparametric}:
\begin{equation*}
\phi^{C\ast}_{a,t} = \frac{\Delta \phi^\ast_a}{G^{C}(\Tilde{T} \mid \boldsymbol{X}, A, S)} + \int_{0}^\infty \frac{L(u, \boldsymbol{X}, A, S)}{G^{C}(\Tilde{T} \mid \boldsymbol{X}, A, S)} dM^{C}(u, \boldsymbol{X}, A, S),
\end{equation*}
where $L(u, \boldsymbol{X}, A, S) = \mathbb{E}\left[\phi^\ast_a \mid T \geq u, \boldsymbol{X}, A, S\right]$, and we thus obtain $\phi^{C\ast}_{a,t}$ given by
{\small
\begin{align*}
& 1 - \frac{(1 - \Gamma) I\{A = 1\}}{\kappa} \frac{f(\Gamma = 1 \mid \boldsymbol{X})}{f(A = 1, \Gamma = 0 \mid \boldsymbol{X})} \frac{f(S \mid \boldsymbol{X}, A = a, \Gamma = 1)}{f(S \mid \boldsymbol{X}, A = 1, \Gamma = 0)} \left\{\frac{-I\{\Tilde{T} \leq t, \Delta = 1\} G^{T}(t \mid \boldsymbol{X}, A, S)}{G^{T}(\Tilde{T} \mid \boldsymbol{X}, A, S) G^{C}(\Tilde{T} \mid \boldsymbol{X}, A, S)} \right. \\
& \left.\quad + \int_{0}^{t \wedge \Tilde{T}} \frac{G^{T}(t \mid \boldsymbol{X}, A, S) \Lambda(du \mid \boldsymbol{X}, A, S)}{G^{T}(u \mid \boldsymbol{X}, A, S) G^{C}(u \mid \boldsymbol{X}, A, S)}\right\} \\
& - \frac{\Gamma I\{A = a\}}{\kappa f(A = a \mid \boldsymbol{X}, \Gamma = 1)} \left\{G^{T}(t \mid \boldsymbol{X}, a, S) - \int_{s\in \mathcal{S}} G^{T}(t \mid \boldsymbol{X}, a, s) f(s \mid \boldsymbol{X}, A = a, \Gamma = 1) ds\right\} \\
& - \frac{\Gamma}{\kappa} \int_{s\in \mathcal{S}} G^{T}(t \mid \boldsymbol{X}, a, s) f(s \mid  \boldsymbol{X}, A = a, \Gamma = 1) ds - R(a, t; \Gamma = 1).
\end{align*}}
The extension to censored competing risks data follows straightforwardly when we set $y(T) = I\{T \leq t, \Delta = j\}, j = 1, \ldots, J$ \citep{rytgaard2024targeted}.
\end{proof}

\subsection{Proof of Theorem~\ref{thm:MR-censored-data}}

\begin{proof}
Denote by $f^\ast$, $G^{T \ast}$ and $G^{C \ast}$ the probability limits of the nuisance function estimators $\hat{f}$, $\hat{G}^{T}$ and $\hat{G}^{C}$. Essentially we need to show
{\footnotesize
\begin{equation}\label{eq:MR-censored-data-proof}
\begin{split}
\mathbb{E} & \left[1 - \frac{(1 - \Gamma) I\{A = 1\}}{\kappa} \frac{f^{\ast}(\Gamma = 1 \mid \boldsymbol{X})}{f^{\ast}(A = 1, \Gamma = 0 \mid \boldsymbol{X})} \frac{f^{\ast}(S \mid \boldsymbol{X}, A = a, \Gamma = 1)}{f^{\ast}(S \mid \boldsymbol{X}, A = 1, \Gamma = 0)} \left\{\frac{-I\{\Tilde{T} \leq t, \Delta = 1\} G^{T \ast}(t \mid \boldsymbol{X}, A, S)}{G^{T \ast}(\Tilde{T} \mid \boldsymbol{X}, A, S) G^{C \ast}(\Tilde{T} \mid \boldsymbol{X}, A, S)} \right.\right. \\
& \left.\left.\quad + \int_{0}^{t \wedge \Tilde{T}} \frac{G^{T \ast}(t \mid \boldsymbol{X}, A, S) \Lambda^{\ast}(du \mid \boldsymbol{X}, A, S)}{G^{T \ast}(u \mid \boldsymbol{X}, A, S) G^{C \ast}(u \mid \boldsymbol{X}, A, S)}\right\} \right. \\
& \quad \left. - \frac{\Gamma I\{A = a\}}{\kappa f^{\ast}(A = a \mid \boldsymbol{X}, \Gamma = 1)} \left\{G^{T \ast}(t \mid \boldsymbol{X}, a, S) - \int_{s\in \mathcal{S}} G^{T \ast}(t \mid \boldsymbol{X}, a, s) f^{\ast}(s \mid \boldsymbol{X}, A = a, \Gamma = 1) ds\right\}\right. \\
& \quad \left. - \frac{\Gamma}{\kappa} \int_{s\in \mathcal{S}} G^{T \ast}(t \mid \boldsymbol{X}, a, s) f^{\ast}(s \mid  \boldsymbol{X}, A = a, \Gamma = 1) ds - R(a, t; \Gamma = 1)\right] = 0
\end{split}
\end{equation}}
under the union model $\mathcal{M}'_a \cup \mathcal{M}'_b \cup \mathcal{M}'_c$.

First note that $G^{T \ast}(t \mid \boldsymbol{x}, a, s) = G^{T}(t \mid \boldsymbol{x}, a, s)$ and $f^\ast(s \mid x, A = a, \Gamma = 1) = f(s \mid x, A = a, \Gamma = 1)$ under model $\mathcal{M}'_a$. Equation~\eqref{eq:MR-censored-data-proof} follows because
{\small
\begin{align*}
\mathbb{E} & \left[\frac{(1 - \Gamma) I\{A = 1\}}{\kappa} \frac{f^{\ast}(\Gamma = 1 \mid \boldsymbol{X})}{f^{\ast}(A = 1, \Gamma = 0 \mid \boldsymbol{X})} \frac{f(S \mid \boldsymbol{X}, A = a, \Gamma = 1)}{f(S \mid \boldsymbol{X}, A = 1, \Gamma = 0)} \left\{\frac{-I\{\Tilde{T} \leq t, \Delta = 1\} G^{T}(t \mid \boldsymbol{X}, A, S)}{G^{T}(\Tilde{T} \mid \boldsymbol{X}, A, S) G^{C \ast}(\Tilde{T} \mid \boldsymbol{X}, A, S)} \right.\right. \\
& \left.\left.\quad + \int_{0}^{t \wedge \Tilde{T}} \frac{G^{T}(t \mid \boldsymbol{X}, A, S) \Lambda(du \mid \boldsymbol{X}, A, S)}{G^{T}(u \mid \boldsymbol{X}, A, S) G^{C \ast}(u \mid \boldsymbol{X}, A, S)}\right\} \right] = 0,
\end{align*}}
\begin{equation*}
\mathbb{E}\left[\frac{\Gamma I\{A = a\}}{\kappa f^{\ast}(A = a \mid \boldsymbol{X}, \Gamma = 1)} \left\{G^{T}(t \mid \boldsymbol{X}, a, S) - \int_{s\in \mathcal{S}} G^{T}(t \mid \boldsymbol{X}, a, s) f(s \mid \boldsymbol{X}, A = a, \Gamma = 1) ds\right\}\right] = 0,
\end{equation*}
and
\begin{equation*}
\mathbb{E}\left[1 - \frac{\Gamma}{\kappa} \int_{s\in \mathcal{S}} G^{T}(t \mid \boldsymbol{X}, a, s) f(s \mid  \boldsymbol{X}, A = a, \Gamma = 1) ds\right] = R(a, t; \Gamma = 1).
\end{equation*}

Second, we have $G^{T \ast}(t \mid \boldsymbol{x}, a, s) = G^{T}(t \mid \boldsymbol{x}, a, s)$ and $f^\ast(A = a \mid x, \Gamma = 1) = f(A = a \mid x, \Gamma = 1)$ under model $\mathcal{M}_b$. Equation~\eqref{eq:MR-censored-data-proof} now follows because
{\small
\begin{align*}
\mathbb{E} & \left[\frac{(1 - \Gamma) I\{A = 1\}}{\kappa} \frac{f^{\ast}(\Gamma = 1 \mid \boldsymbol{X})}{f^{\ast}(A = 1, \Gamma = 0 \mid \boldsymbol{X})} \frac{f^{\ast}(S \mid \boldsymbol{X}, A = a, \Gamma = 1)}{f^{\ast}(S \mid \boldsymbol{X}, A = 1, \Gamma = 0)} \left\{\frac{-I\{\Tilde{T} \leq t, \Delta = 1\} G^{T}(t \mid \boldsymbol{X}, A, S)}{G^{T}(\Tilde{T} \mid \boldsymbol{X}, A, S) G^{C \ast}(\Tilde{T} \mid \boldsymbol{X}, A, S)} \right.\right. \\
& \left.\left.\quad + \int_{0}^{t \wedge \Tilde{T}} \frac{G^{T}(t \mid \boldsymbol{X}, A, S) \Lambda(du \mid \boldsymbol{X}, A, S)}{G^{T}(u \mid \boldsymbol{X}, A, S) G^{C \ast}(u \mid \boldsymbol{X}, A, S)}\right\} \right] = 0,
\end{align*}}
\begin{equation*}
\mathbb{E}\left[\left\{\frac{\Gamma}{\kappa} - \frac{\Gamma I\{A = a\}}{\kappa f(A = a \mid X, \Gamma = 1)}\right\} \int_{s\in \mathcal{S}} G^{T}(t \mid \boldsymbol{X}, a, s) f^{\ast}(s \mid  \boldsymbol{X}, A = a, \Gamma = 1) ds\right] = 0,
\end{equation*}
and
\begin{equation*}
\mathbb{E}\left[1 - \frac{\Gamma I\{A = a\}}{\kappa f(A = a \mid X, \Gamma = 1)} G^{T}(t \mid \boldsymbol{X}, a, s)\right] = R(a, t; \Gamma = 1).
\end{equation*}

Third, as $f^\ast(A = 1' \mid x, \Gamma = 1) = f(A = 1' \mid x, \Gamma = 1)$, $f^\ast(\Gamma = 1 \mid x) = f(\Gamma = 1 \mid x)$, $f^\ast(A = 1, \Gamma = 0 \mid x) = f(A = 1, \Gamma = 0 \mid x)$, $G^{C \ast}(u \mid \boldsymbol{x}, a, s) = G^{C}(u \mid \boldsymbol{x}, a, s)$, $f^\ast(s \mid x, A = 1', \Gamma = 1) = f(s \mid x, A = 1', \Gamma = 1)$ and $f^\ast(s \mid x, A = 1, \Gamma = 0) = f(s \mid x, A = 1, \Gamma = 0)$ under model $\mathcal{M}'_c$, we conclude that Equation~\eqref{eq:MR-censored-data-proof} holds because
{\small
\begin{align*}
& \mathbb{E}\left[1 - \frac{(1 - \Gamma) I\{A = 1\}}{\kappa} \frac{f(\Gamma = 1 \mid \boldsymbol{X})}{f(A = 1, \Gamma = 0 \mid \boldsymbol{X})} \frac{f(S \mid \boldsymbol{X}, A = a, \Gamma = 1)}{f(S \mid \boldsymbol{X}, A = 1, \Gamma = 0)} \left\{\frac{-I\{\Tilde{T} \leq t, \Delta = 1\} G^{T \ast}(t \mid \boldsymbol{X}, A, S)}{G^{T \ast}(\Tilde{T} \mid \boldsymbol{X}, A, S) G^{C}(\Tilde{T} \mid \boldsymbol{X}, A, S)} \right.\right. \\
& \left.\left.\quad\quad + \int_{0}^{t \wedge \Tilde{T}} \frac{G^{T \ast}(t \mid \boldsymbol{X}, A, S) \Lambda^{\ast}(du \mid \boldsymbol{X}, A, S)}{G^{T \ast}(u \mid \boldsymbol{X}, A, S) G^{C \ast}(u \mid \boldsymbol{X}, A, S)} - \frac{I\{\Tilde{T} \leq t, \Delta = 1\}}{G^{C}(\Tilde{T} \mid \mathbf{X}, A, S)}\right\}\right.\\
& \quad\quad \left. - \frac{\Gamma I\{A = a\}}{\kappa f(A = a \mid X, \Gamma = 1)} G^{T \ast}(t \mid \boldsymbol{X}, A, S)\right] \\
& = \mathbb{E}\left[\frac{(1 - \Gamma) I\{A = 1\}}{\kappa} \frac{f(\Gamma = 1 \mid \boldsymbol{X})}{f(A = 1, \Gamma = 0 \mid \boldsymbol{X})} \frac{f(S \mid \boldsymbol{X}, A = a, \Gamma = 1)}{f(S \mid \boldsymbol{X}, A = 1, \Gamma = 0)} G^{T \ast}(t \mid \boldsymbol{X}, A, S) \right.\\
& \left.\quad\quad\quad - \frac{\Gamma I\{A = a\}}{\kappa f(A = a \mid X, \Gamma = 1)} G^{T \ast}(t \mid \boldsymbol{X}, A, S)\right] = 0,
\end{align*}}
\begin{equation*}
\mathbb{E}\left[\left\{\frac{\Gamma}{\kappa} - \frac{\Gamma I\{A = a\}}{\kappa f(A = a \mid X, \Gamma = 1)}\right\} \int_{s\in \mathcal{S}} G^{T \ast}(t \mid \boldsymbol{X}, a, s) f(s \mid  \boldsymbol{X}, A = a, \Gamma = 1) ds \right] = 0,
\end{equation*}
and
\begin{equation*}
\mathbb{E}\left[\frac{(1 - \Gamma) I\{A = 1\}}{\kappa} \frac{f(\Gamma = 1 \mid X)}{f(A = 1, \Gamma = 0 \mid X)} \frac{f(S \mid X, A = a, \Gamma = 1)}{f(S \mid X, A = 1, \Gamma = 0)} \frac{\Delta I\{\Tilde{T} \leq t\}}{G^{C}(\Tilde{T} \mid \boldsymbol{X}, A, S)}\right] = R(a, t; \Gamma = 1),
\end{equation*}
which completes the proof of multiple robustness under right-censoring.
\end{proof}

\subsection{Proof of Proposition~\ref{prop: asymptotic linearity censored data}}\label{sec:SM proof asymptotic linearity censored data}

\begin{proof}
For ease of notation, denote the one-step estimator by
\begin{equation*}
    \hat{\Psi}_{t} = \Psi_{t}\left(\hat{\mathbb{P}}\right) + \mathbb{P}_n \left\{\phi^{C\ast}_{a,t}\left(\hat{\mathbb{P}}\right)\right\},
\end{equation*}
which has the following decomposition by definition
\begin{align*}
\hat{\Psi}_{t} - \Psi_{t} &= \Psi_{t}\left(\hat{\mathbb{P}}\right) + \mathbb{P}_n \left\{\phi^{C\ast}_{a,t}\left(\hat{\mathbb{P}}\right)\right\} - \Psi_{t}\left(\mathbb{P}\right) \\
&= \left(\mathbb{P}_n - \mathbb{P}\right) \left\{\phi^{C\ast}_{a,t}\left(\mathbb{P}\right)\right\} + \left(\mathbb{P}_n - \mathbb{P}\right) \left\{\phi^{C\ast}_{a,t}\left(\hat{\mathbb{P}}\right) - \phi^{C\ast}_{a,t}\left(\mathbb{P}\right)\right\} + Z\left(\hat{\mathbb{P}}, \mathbb{P}\right),
\end{align*}
where $Z\left(\hat{\mathbb{P}}, \mathbb{P}\right) = \Psi_{t}\left(\hat{\mathbb{P}}\right) - \Psi_{t}\left(\mathbb{P}\right) + \mathbb{E}_{\mathbb{P}}\left[\phi^{C\ast}_{a,t}\left(\hat{\mathbb{P}}\right)\right]$.

We analyze the three terms separately:
\begin{itemize}
    \item The first term
    \begin{equation*}
        \left(\mathbb{P}_n - \mathbb{P}\right) \left\{\phi^{C\ast}_{a,t}\left(\mathbb{P}\right)\right\}
    \end{equation*}
    is a simple sample average, so by the central limit theorem, it converges to a normal random variable.

    \item The second term
    \begin{equation*}
        \left(\mathbb{P}_n - \mathbb{P}\right) \left\{\phi^{C\ast}_{a,t}\left(\hat{\mathbb{P}}\right) - \phi^{C\ast}_{a,t}\left(\mathbb{P}\right)\right\}
    \end{equation*}
    is a standard empirical process term. It is typically required that
    \begin{equation*}
        \left\|\phi^{C\ast}_{a,t}\left(\hat{\mathbb{P}}\right) - \phi^{C\ast}_{a,t}\left(\mathbb{P}\right)\right\|^2 = o_{\mathbb{P}}(1).
    \end{equation*}
    More specifically, in our analysis, a simple set of sufficient conditions includes the boundedness of the (estimated) nuisance functions and the following consistency:
    \begin{align*}
        \left\|\hat{f}(\Gamma = 1 \mid \boldsymbol{X}) - f(\Gamma = 1 \mid \boldsymbol{X})\right\| &= o_{\mathbb{P}}(1), \\
        \left\|\hat{f}(A = 1, \Gamma = 0 \mid \boldsymbol{X}) - f(A = 1, \Gamma = 0 \mid \boldsymbol{X})\right\| &= o_{\mathbb{P}}(1), \\
        \left\|\hat{f}(S \mid \boldsymbol{X}, A = a, \Gamma = 0) - f(S \mid \boldsymbol{X}, A = a, \Gamma = 0)\right\| &= o_{\mathbb{P}}(1), \\
        \left\|\hat{f}(A = a \mid \boldsymbol{X}, \Gamma = 1) - f(A = a \mid \boldsymbol{X}, \Gamma = 1)\right\| &= o_{\mathbb{P}}(1), \\
        \left\|\sup_{u \in [0, t]} \left|\hat{G}^{C}(u \mid \boldsymbol{x}, a, s) - G^{C}(u \mid \boldsymbol{x}, a, s)\right|\right\| &= o_{\mathbb{P}}(1), \\
        \left\|\sup_{u \in [0, t]} \left|\frac{\hat{G}^{T}(t \mid \boldsymbol{x}, a, s)}{\hat{G}^{T}(u \mid \boldsymbol{x}, a, s)} - \frac{G^{T}(t \mid \boldsymbol{x}, a, s)}{G^{T}(u \mid \boldsymbol{x}, a, s)}\right|\right\| &= o_{\mathbb{P}}(1),
    \end{align*}
    which together with Lemma~\ref{lemma:crossfitting} imply that
    \begin{equation*}
        \left(\mathbb{P}_n - \mathbb{P}\right) \left\{\phi^{C\ast}_{a,t}\left(\hat{\mathbb{P}}\right) - \phi^{C\ast}_{a,t}\left(\mathbb{P}\right)\right\} = o_{\mathbb{P}}\left(n^{-1/2}\right).
    \end{equation*}
    Note that for uniform inference, it is also required
    \begin{equation*}
        \left\|\sup_{u \in [0, t]}\sup_{v \in [0, u]} \left|\frac{\hat{G}^{T}(u \mid \boldsymbol{x}, a, s)}{\hat{G}^{T}(v \mid \boldsymbol{x}, a, s)} - \frac{G^{T}(u \mid \boldsymbol{x}, a, s)}{G^{T}(v \mid \boldsymbol{x}, a, s)}\right|\right\| = o_{\mathbb{P}}(1).
    \end{equation*}

    \item The third term
    \begin{equation*}
        Z\left(\hat{\mathbb{P}}, \mathbb{P}\right) = \Psi_{t}\left(\hat{\mathbb{P}}\right) - \Psi_{t}\left(\mathbb{P}\right) + \mathbb{E}_{\mathbb{P}}\left[\phi^{C\ast}_{a,t}\left(\hat{\mathbb{P}}\right)\right]
    \end{equation*}
    is the key in our analysis. This remainder term is given by
    {\scriptsize
    \begingroup
    \allowdisplaybreaks
    \begin{align*}
        & Z\left(\hat{\mathbb{P}}, \mathbb{P}\right) \\
        & = \mathbb{E}\left[\left\{\frac{f(A = 1, \Gamma = 0 \mid \boldsymbol{X})}{\kappa \hat{f}(A = 1, \Gamma = 0 \mid \boldsymbol{X})} - \frac{1}{\kappa}\right\} \left\{\hat{G}^{T}(t \mid \boldsymbol{X}, a, S) - G^{T}(t \mid \boldsymbol{X}, a, S)\right\}\right. \\
        & \left. \qquad\quad \times \left\{\frac{\hat{f}(\Gamma = 1 \mid \boldsymbol{X}) \hat{f}(S \mid \boldsymbol{X}, A = a, \Gamma = 1)}{\hat{f}(S \mid \boldsymbol{X}, A = 1, \Gamma = 0)}- \frac{f(\Gamma = 1 \mid \boldsymbol{X}) f(S \mid \boldsymbol{X}, A = a, \Gamma = 1)}{f(S \mid \boldsymbol{X}, A = 1, \Gamma = 0)}\right\} \right. \\
        & \left. \quad + \left\{\frac{f(A = 1, \Gamma = 0 \mid \boldsymbol{X})}{\kappa \hat{f}(A = 1, \Gamma = 0 \mid \boldsymbol{X})} - \frac{1}{\kappa}\right\} \left\{\hat{G}^{T}(t \mid \boldsymbol{X}, a, S) - G^{T}(t \mid \boldsymbol{X}, a, S)\right\} \right. \\
        & \left. \quad + \left\{\frac{\hat{f}(\Gamma = 1 \mid \boldsymbol{X}) \hat{f}(S \mid \boldsymbol{X}, A = a, \Gamma = 1)}{\hat{f}(S \mid \boldsymbol{X}, A = 1, \Gamma = 0)}- \frac{f(\Gamma = 1 \mid \boldsymbol{X}) f(S \mid \boldsymbol{X}, A = a, \Gamma = 1)}{f(S \mid \boldsymbol{X}, A = 1, \Gamma = 0)}\right\} \right. \\
        & \left. \qquad\quad \times \left\{\hat{G}^{T}(t \mid \boldsymbol{X}, a, S) - G^{T}(t \mid \boldsymbol{X}, a, S)\right\} \right. \\
        & \left. \quad + \left\{\frac{f(A = a \mid \boldsymbol{X}, \Gamma = 1)}{\kappa \hat{f}(A = a \mid \boldsymbol{X}, \Gamma = 1)} - \frac{1}{\kappa}\right\} \left\{\hat{G}^{T}(t \mid \boldsymbol{X}, a, S) - G^{T}(t \mid \boldsymbol{X}, a, S)\right\} \right. \\
        & \left. \quad + \frac{f(A = 1, \Gamma = 0 \mid \boldsymbol{X}) f(\Gamma = 1 \mid \boldsymbol{X}) f(S \mid \boldsymbol{X}, A = a, \Gamma = 1)}{\hat{f}(A = 1, \Gamma = 0 \mid \boldsymbol{X}) f(S \mid \boldsymbol{X}, A = 1, \Gamma = 0)} \right. \\
        & \left. \quad\quad\quad \times \hat{G}^{T}(t \mid \boldsymbol{X}, a, S) \int_{0}^{t} \left\{\frac{G^{C}(u \mid \boldsymbol{X}, a, S)}{\hat{G}^{C}(u \mid \boldsymbol{X}, a, S)} - 1\right\} \left(\frac{G^{T}}{\hat{G}^{T}} - 1\right) (du \mid \boldsymbol{X}, a, S) \right. \\
        & \left. \quad - \int \left\{\frac{f(A = a \mid \boldsymbol{X}, \Gamma = 1)}{\kappa \hat{f}(A = a \mid \boldsymbol{X}, \Gamma = 1)} - \frac{1}{\kappa}\right\} \left\{\hat{G}^{T}(t \mid \boldsymbol{X}, a, s) - G^{T}(t \mid \boldsymbol{X}, a, s)\right\} \right. \\
        & \left. \qquad\qquad \times \left\{\hat{f}(s \mid  \boldsymbol{X}, A = a, \Gamma = 1) - f(s \mid  \boldsymbol{X}, A = a, \Gamma = 1)\right\} ds \right. \\
        & \left. \quad - \int \left\{\frac{f(A = a \mid \boldsymbol{X}, \Gamma = 1)}{\kappa \hat{f}(A = a \mid \boldsymbol{X}, \Gamma = 1)} - \frac{1}{\kappa}\right\} G^{T}(t \mid \boldsymbol{X}, a, s) \left\{\hat{f}(s \mid  \boldsymbol{X}, A = a, \Gamma = 1) - f(s \mid  \boldsymbol{X}, A = a, \Gamma = 1)\right\} ds \right. \\
        & \left. \quad - \int \left\{\frac{f(A = a \mid \boldsymbol{X}, \Gamma = 1)}{\kappa \hat{f}(A = a \mid \boldsymbol{X}, \Gamma = 1)} - \frac{1}{\kappa}\right\} \left\{\hat{G}^{T}(t \mid \boldsymbol{X}, a, s) - G^{T}(t \mid \boldsymbol{X}, a, s)\right\} f(s \mid  \boldsymbol{X}, A = a, \Gamma = 1) ds \right].
    \end{align*}
    \endgroup}
    We assume the following rate of convergence conditions:
    {\scriptsize
    \begin{align*}
        \left\|\hat{G}^{T}(t \mid \boldsymbol{X}, a, S) - G^{T}(t \mid \boldsymbol{X}, a, S)\right\| \cdot \left\|\frac{f(A = 1, \Gamma = 0 \mid \boldsymbol{X})}{\hat{f}(A = 1, \Gamma = 0 \mid \boldsymbol{X})} - 1\right\| &= o_{\mathbb{P}}\left(n^{-1/2}\right), \\
        \resizebox{0.7\textwidth}{!}{$\left\|\hat{G}^{T}(t \mid \boldsymbol{X}, a, S) - G^{T}(t \mid \boldsymbol{X}, a, S)\right\| \cdot \left\|\frac{\hat{f}(\Gamma = 1 \mid \boldsymbol{X}) \hat{f}(S \mid \boldsymbol{X}, A = a, \Gamma = 1)}{\hat{f}(S \mid \boldsymbol{X}, A = 1, \Gamma = 0)}- \frac{f(\Gamma = 1 \mid \boldsymbol{X}) f(S \mid \boldsymbol{X}, A = a, \Gamma = 1)}{f(S \mid \boldsymbol{X}, A = 1, \Gamma = 0)}\right\|$} &= o_\mathbb{P}\left(n^{-1/2}\right), \\
        \mathbb{E}\left|\hat{G}^{T}(t \mid \boldsymbol{X}, a, S) \int_{0}^{t} \left\{\frac{G^{C}(u \mid \boldsymbol{X}, a, S)}{\hat{G}^{C}(u \mid \boldsymbol{X}, a, S)} - 1\right\} \left(\frac{G^{T}}{\hat{G}^{T}} - 1\right) (du \mid \boldsymbol{X}, a, S)\right| &= o_\mathbb{P}\left(n^{-1/2}\right), \\
        \left\|\hat{G}^{T}(t \mid \boldsymbol{X}, a, S) - G^{T}(t \mid \boldsymbol{X}, a, S)\right\| \cdot \left\|\frac{f(A = a \mid \boldsymbol{X}, \Gamma = 1)}{\hat{f}(A = a \mid \boldsymbol{X}, \Gamma = 1)} - 1\right\| &= o_\mathbb{P}\left(n^{-1/2}\right), \\
        \resizebox{0.7\textwidth}{!}{$\mathbb{E}\left|\int \left\{\frac{f(A = a \mid \boldsymbol{X}, \Gamma = 1)}{\kappa \hat{f}(A = a \mid \boldsymbol{X}, \Gamma = 1)} - \frac{1}{\kappa}\right\} \left\{\hat{G}^{T}(t \mid \boldsymbol{X}, a, s) - G^{T}(t \mid \boldsymbol{X}, a, s)\right\} \left\{\hat{f}(s \mid  \boldsymbol{X}, A = a, \Gamma = 1) - f(s \mid  \boldsymbol{X}, A = a, \Gamma = 1)\right\} ds\right|$} &= o_\mathbb{P}\left(n^{-1/2}\right), \\
        \resizebox{0.7\textwidth}{!}{$\mathbb{E}\left|\int \left\{\frac{f(A = a \mid \boldsymbol{X}, \Gamma = 1)}{\kappa \hat{f}(A = a \mid \boldsymbol{X}, \Gamma = 1)} - \frac{1}{\kappa}\right\} G^{T}(t \mid \boldsymbol{X}, a, s) \left\{\hat{f}(s \mid  \boldsymbol{X}, A = a, \Gamma = 1) - f(s \mid  \boldsymbol{X}, A = a, \Gamma = 1)\right\} ds\right|$} &= o_\mathbb{P}\left(n^{-1/2}\right), \\
        \resizebox{0.7\textwidth}{!}{$\mathbb{E}\left|\int \left\{\frac{f(A = a \mid \boldsymbol{X}, \Gamma = 1)}{\kappa \hat{f}(A = a \mid \boldsymbol{X}, \Gamma = 1)} - \frac{1}{\kappa}\right\} \left\{\hat{G}^{T}(t \mid \boldsymbol{X}, a, s) - G^{T}(t \mid \boldsymbol{X}, a, s)\right\} f(s \mid  \boldsymbol{X}, A = a, \Gamma = 1) ds\right|$} &= o_\mathbb{P}\left(n^{-1/2}\right),
    \end{align*}}
    which imply 
    \begin{equation*}
        Z\left(\hat{\mathbb{P}}, \mathbb{P}\right) = o_{\mathbb{P}}\left(n^{-1/2}\right),
    \end{equation*}
    by Cauchy-Schwarz. Note that for uniform inference, it is required that these conditions should hold uniformly over $t$.

    Summarizing the above results, we have
    \begin{equation*}
        \hat{\Psi}_{t} - \Psi_{t} = \left(\mathbb{P}_n - \mathbb{P}\right) \left\{\phi^{C\ast}_{a,t}\left(\mathbb{P}\right)\right\} + o_{\mathbb{P}}\left(n^{-1/2}\right),
    \end{equation*}
    which completes the proof.
    
\end{itemize}

\end{proof}

%% file: SM.Data.tex
\clearpage
\section{Additional details on the case study}
\renewcommand{\thefigure}{S\arabic{figure}}
\setcounter{figure}{0}

\begin{figure}[ht]
    \centering
\includegraphics[width=0.85\linewidth]{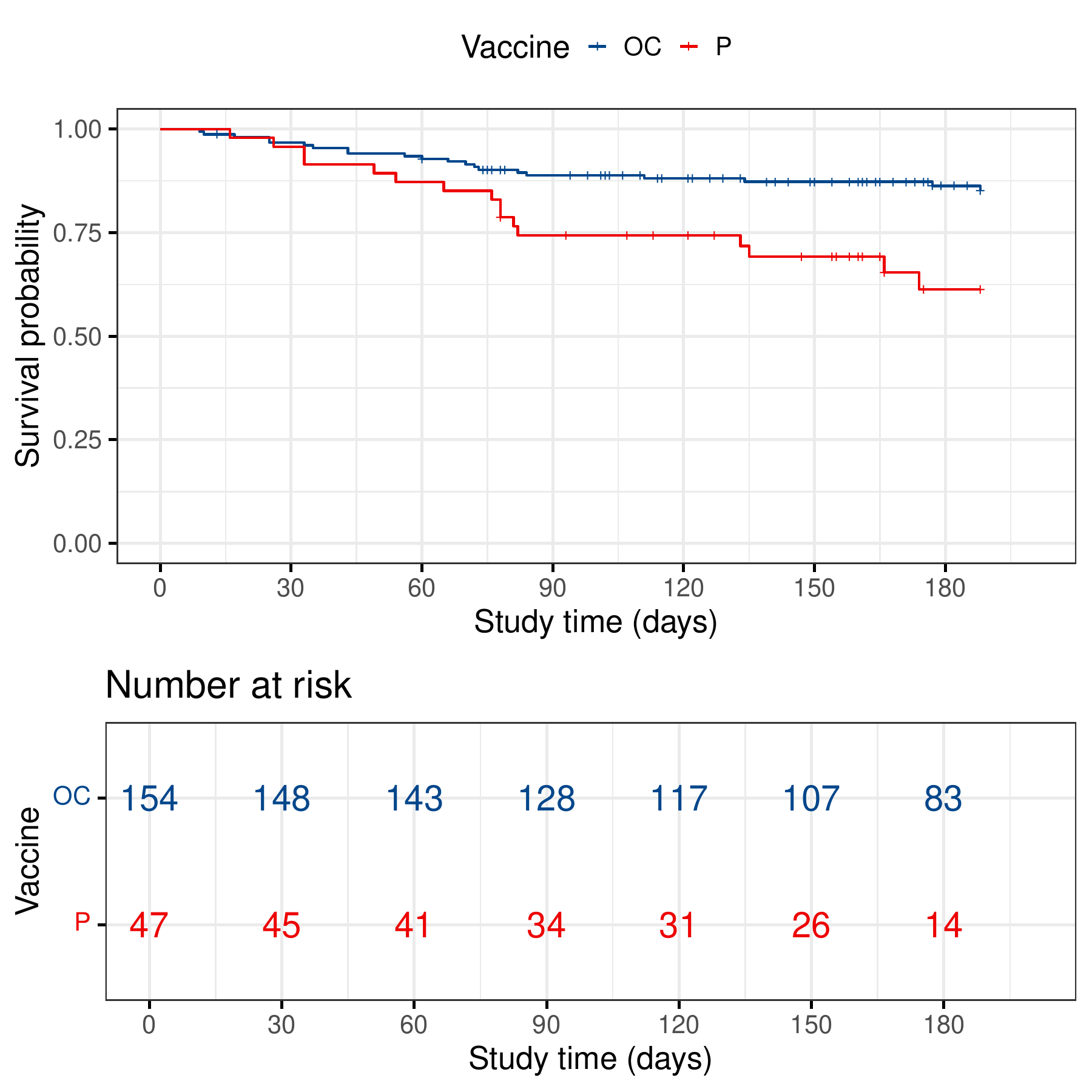}
    \caption{The Kaplan-Meier curves and associated risk tables for Stage-2 Omicron-containing (OC) or Prototype (P) vaccine recipients.}
    \label{fig: case study stage 2 ggsurvplot with risk table}
\end{figure}

\begin{figure}[ht]
    \centering
\includegraphics[width=0.85\linewidth]{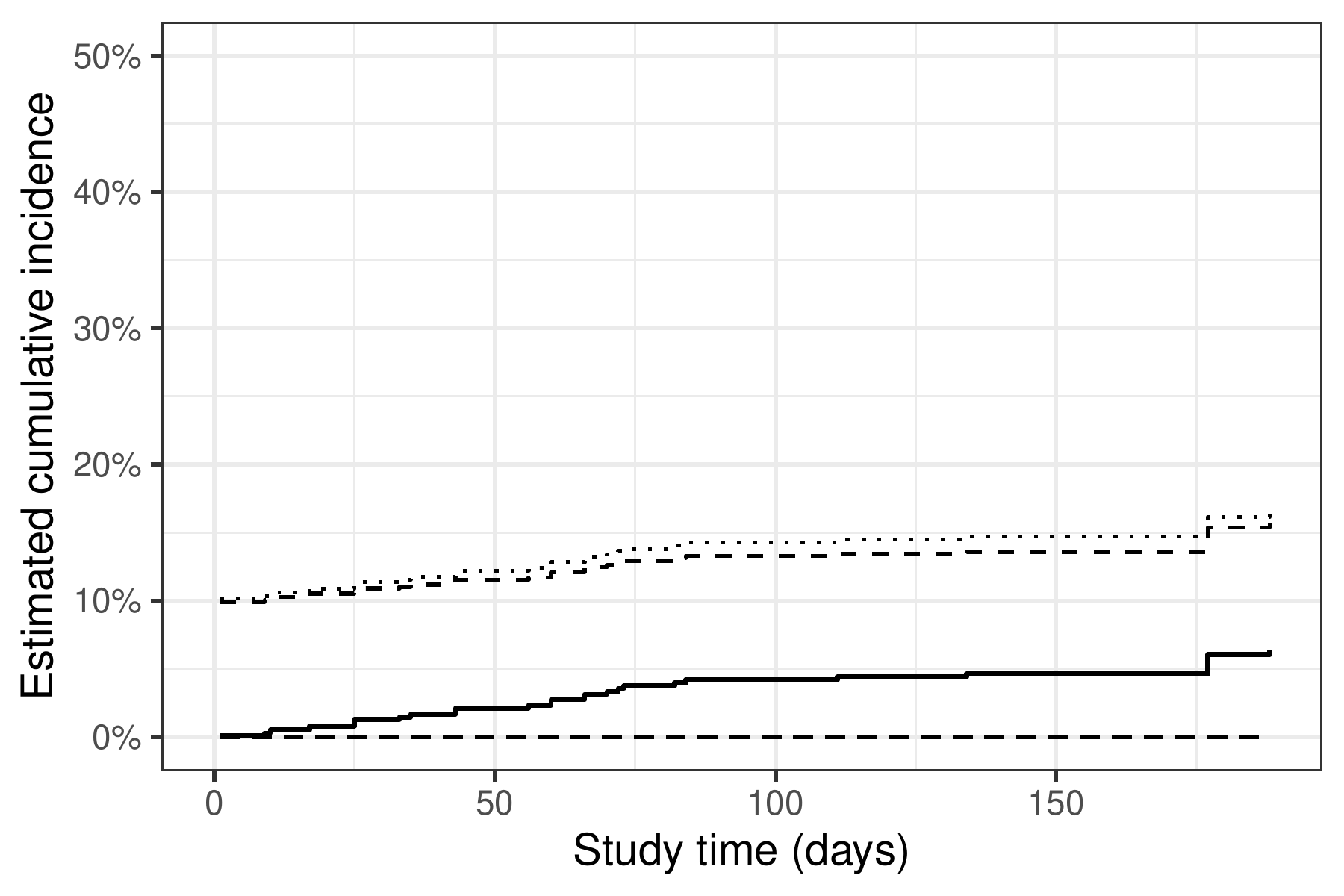}
    \caption{The cumulative incidence curve for Stage-4 BA.4/5 + Prototype Pfizer--BioNTech vaccine recipients estimated using the immunogenicity and clinical data of Stage-2 Omicron-containing Pfizer--BioNTech vaccine recipients ($\mathcal{D}_h$) as well as the immunogenicity data of Stage-4  BA.4/5 + Prototype Pfizer--BioNTech vaccine recipients ($\mathcal{D}_b$). Dashed lines indicate 95\% pointwise confidence intervals, and dotted lines indicate 95\% uniformly valid confidence bands.}
    \label{fig: case study stage 2 actual and projected incidence uniform}
\end{figure}